\documentclass[11pt]{article}

\usepackage[margin=1in]{geometry}
\usepackage{amsmath,amssymb,amsthm,mathtools,braket}
\usepackage{thmtools,thm-restate}

\usepackage{multirow}
\usepackage{microtype}
\usepackage{enumitem}
\usepackage{booktabs}
\usepackage{xcolor}
\usepackage{hyperref}
\usepackage{tikz}
\usetikzlibrary{arrows.meta}

\usepackage{authblk}
\usepackage[nameinlink,capitalise,noabbrev]{cleveref}
\hypersetup{
  hypertexnames=false, 
  colorlinks=true,
  linkcolor=blue!55!black,
  citecolor=green!45!black,
  urlcolor=blue!60!black
}

\newtheorem{theorem}{Theorem}[section]
\newtheorem{lemma}[theorem]{Lemma}

\newtheorem{corollary}[theorem]{Corollary}
\newtheorem{claim}[theorem]{Claim}
\theoremstyle{definition}
\newtheorem{definition}[theorem]{Definition}

\newtheorem{remark}[theorem]{Remark}
\newtheorem{fact}[theorem]{Fact}

\crefname{theorem}{Theorem}{Theorems}
\crefname{lemma}{Lemma}{Lemmas}
\crefname{proposition}{Proposition}{Propositions}
\crefname{corollary}{Corollary}{Corollaries}
\crefname{definition}{Definition}{Definitions}
\crefname{conjecture}{Conjecture}{Conjectures}
\crefname{remark}{Remark}{Remarks}
\crefname{fact}{Fact}{Facts}
\crefname{appendix}{Appendix}{Appendices}
\Crefname{appendix}{Appendix}{Appendices}

\newcommand{\C}{\mathbb C}
\newcommand{\R}{\mathbb R}
\DeclareMathOperator{\Gr}{Gr}
\DeclareMathOperator{\rank}{rank}
\DeclareMathOperator{\im}{im}
\newcommand{\op}{\mathrm{op}}
\DeclareMathOperator{\Tr}{Tr}
\newcommand{\Prob}{\mathrm{Pr}}
\newcommand{\E}{\mathbb E}
\newcommand{\cN}{\mathcal N}
\newcommand{\cQ}{\mathcal Q}
\newcommand{\eps}{\varepsilon}
\newcommand{\smin}{\sigma_{\min}}
\newcommand{\splus}{\sigma_{\min}^{+}}
\newcommand{\norm}[1]{\left\lVert #1\right\rVert}
\newcommand{\abs}[1]{\left\lvert #1\right\rvert}
\renewcommand{\set}[1]{\left\{#1\right\}}
\newcommand{\restr}[2]{#1\!\mid_{#2}}

\renewcommand{\L}{\mathsf{L}}
\newcommand{\FSLE}{\mathsf{FSLE}}
\newcommand{\NULL}{\mathsf{NULL}}
\newcommand{\SINGULAR}{\mathsf{SINGULAR}}
\newcommand{\pcFSLE}{\mathsf{pc}\text{-}\mathsf{FSLE}}
\newcommand{\pcNULL}{\mathsf{pc}\text{-}\mathsf{NULL}}
\newcommand{\pcSINGULAR}{\mathsf{pc}\text{-}\mathsf{SINGULAR}}
\newcommand{\pcFewSINGULAR}{\mathsf{pc}\text{-}\mathsf{FewSINGULAR}}
\newcommand{\pcUSINGULAR}{\mathsf{pc}\text{-}\mathsf{UniqueSINGULAR}}
\newcommand{\cNULL}{\text{-}\mathsf{conditioned}\text{-}\mathsf{NULL}}

\DeclareMathOperator{\poly}{poly}
\DeclareMathOperator{\ord}{ord}
\DeclareMathOperator{\dist}{dist}

\title{Robust subspace designs \\and the power of a unique small quantum witness}

\author[1]{Simon Apers}
\author[1]{Roman Edenhofer}
\author[1]{Benjamin Mathieu-Bloise}
\author[2]{Partha Mukhopadhyay}
\affil[1]{Université Paris Cité, CNRS, IRIF, Paris, France}
\affil[2]{Chennai Mathematical Institute, Chennai, India}

\begin{document}

\maketitle
\thispagestyle{empty}

\begin{abstract}
    The concept of subspace designs was introduced by Guruswami and Xing (STOC'13), and explicit constructions were given by Guruswami and Kopparty (FOCS'13). These are families of subspaces that have small intersection with any given subspace of a fixed dimension. We introduce \emph{robust subspace designs}. Informally, these are a quantitative extension in which we demand that not too many subspaces of the family contain directions that lie \emph{close} to any given subspace of a fixed dimension. We give a probabilistic construction of such a robust subspace design of polynomial size, as well as a non-trivial explicit construction of superpolynomial size.

    Our main application of this new concept is a quantum space-bounded variant of the Valiant-Vazirani theorem (Theor.~Comput.~Sci.'86), which shows that restricting $\mathsf{NP}$-complete problems to instances with at most one accepting witness preserves hardness under randomized reductions. For quantum witnesses, the analogous quantity is the dimension of an accepting witness subspace. We use our probabilistic construction of robust subspace designs to isolate a unique witness for space-bounded quantum Merlin-Arthur protocols with perfect completeness and an acceptance gap outside their perfectly accepting subspace.

    As a further application, we give a randomized reduction of well-conditioned nullity testing to space-bounded quantum Merlin-Arthur protocols with perfect completeness. Using a similar idea, we find that ordinary subspace designs allow us to recover the classical $\mathsf{C_= L}$ containment of Allender, Beals, and Ogihara (STOC'96) for general nullity testing through a simpler proof.
\end{abstract}

\clearpage
\thispagestyle{empty}
\tableofcontents
\clearpage
\pagenumbering{arabic}

\section{Introduction and summary}

\subsection{Robust subspace designs}

Subspace designs were introduced by Guruswami and Xing \cite{GuruswamiXing2013SubcodesSingleton}, who also gave a probabilistic polynomial-time construction. These are families of subspaces that have limited intersections with every subspace of a fixed dimension. Originally, the notion was defined over finite fields, mainly motivated by applications in coding theory. However, subspace designs can be defined over any field~\cite{BrakensiekChenDharZhang2026}.

In the weak version, a family of subspaces $\mathcal{H}=\{H_1,\ldots,H_m\}$ is a $(k,\Delta)$-weak subspace design if, for every $k$-dimensional subspace $W$,
\[
    \left|\{i \in [m]: H_i\cap W\neq \{0\}\}\right| \leq \Delta.
\]
The family is a $(k,\Delta)$-strong subspace design if, for every $k$-dimensional subspace $W$,
\[
    \sum_{i=1}^m \dim(H_i\cap W)\leq \Delta.
\]
Guruswami and Xing \cite{GuruswamiXing2013SubcodesSingleton} used subspace designs mainly in connection with list decoding. Subsequently, Guruswami and Kopparty explicitly constructed subspace designs with near-optimal parameters \cite{GuruswamiKopparty2016ExplicitSubspaceDesigns}. Since then, subspace designs have become useful objects in coding theory and algebraic pseudorandomness \cite{forbes2015dimension}. In a recent breakthrough, subspace designs were indirectly used to obtain the first $\mathsf{NC}$ algorithm for bipartite matching \cite{chatterjee2026bipartite}. Soon after that, they were also used to show that $\mathsf{NL}\subseteq \oplus\mathsf{L}$ \cite{Applebaum2026}.

In this work, we formulate and study a robust extension of this notion over the field of complex numbers $\mathbb C$.
For a linear map $Q:\C^n\to\C^k$ and a subspace $W$ of dimension $k$, it is based on the smallest singular value of $Q$ restricted to $W$,
\[
    \sigma_{\min}(Q|_W).
\]
If $Q$ is a co-isometry, then this is the sine of the smallest principal angle between $\ker(Q)$ and~$W$, or alternatively the minimum distance between any unit vector from $W$ to $\ker(Q)$.
Evidently, $\sigma_{\min}(Q|_W) = 0$ if and only if $\ker(Q) \cap W \neq 0$.

Now fix a robustness parameter $\eta \in [0,1]$.
We say that a family of linear maps $\mathcal{Q}=\{Q_1,\ldots,Q_m\}$ from $\C^n$ to $\C^k$, with bounded operator norms $\|Q_i\|_\op\leq 1$, is a~$(k,\eta,\Delta)$-weak robust subspace design if, for every $k$-dimensional subspace $W$,
\[
    \left|\{i\in[m] : \sigma_{\min}(Q_i|_W)\leq \eta\}\right|
    \leq \Delta.
\]

Similarly, the family~$\mathcal{Q}$ is a~$(k,\eta,\Delta)$-strong robust subspace design if, for every $k$-dimensional subspace $W$,
\[
    \sum_{i=1}^m \left|\{j\in[k] : \sigma_j(Q_i|_W)\leq\eta\}\right|
    \leq \Delta.
\]

It is straightforward to observe that if $\mathcal{Q}$ is a~$(k,\eta,\Delta)$-weak/strong robust subspace design, then the family of subspaces
\[
    \mathcal{H}=\{\ker(Q):Q\in\mathcal{Q}\}
\]
is an ordinary $(k,\Delta)$-weak/strong subspace design.
In some sense, the converse is not true: for every $\eta = \eta(n) \in(0,1]$, there is a $(1,n-1)$ subspace design whose robustness parameter is smaller than $\eta$.
Conceptually, this is illustrated in \cref{fig:robust-nonrobust}.

\begin{figure}[t]
    \centering

    \begin{minipage}[t]{0.47\textwidth}
        \centering
        \begin{tikzpicture}[>=Latex,scale=1,
            x={(-0.6cm,-0.6cm)}, y={(1cm,0cm)}, z={(0cm,1cm)}]
            \def\L{3}

            \filldraw[blue!55,fill opacity=.25]
            (0,0,0)--(\L,0,0)--(\L,\L,0)--(0,\L,0)--cycle;
            \filldraw[red!60,fill opacity=.22]
            (0,0,0)--(\L,0,0)--(\L,0,\L)--(0,0,\L)--cycle;
            \filldraw[green!55!black,fill opacity=.20]
            (0,0,0)--(0,\L,0)--(0,\L,\L)--(0,0,\L)--cycle;

            \draw[->] (0,0,0)--(3.5,0,0) node[below left] {$x$};
            \draw[->] (0,0,0)--(0,3.5,0) node[right] {$y$};
            \draw[->] (0,0,0)--(0,0,3.5) node[above] {$z$};
        \end{tikzpicture}

        \vspace{0.4em}
        \textbf{Robust}
    \end{minipage}
    \hfill
    \begin{minipage}[t]{0.47\textwidth}
        \centering
        \begin{tikzpicture}[>=Latex,scale=1,
            x={(-0.6cm,-0.6cm)}, y={(1cm,0cm)}, z={(0cm,1cm)}]
            \def\L{3.5}
            \def\t{.65}
            \def\a{.18}

            \coordinate (X) at (3,0,1);
            \coordinate (Y) at (0,\L,{\a*\L});
            \coordinate (C) at ({\t*\L},{\t*\L},\L);

            \filldraw[blue!55,fill opacity=.25]
            (0,0,0)--(X)--(Y)--cycle;
            \filldraw[red!60,fill opacity=.22]
            (0,0,0)--(X)--(C)--cycle;
            \filldraw[green!55!black,fill opacity=.20]
            (0,0,0)--(Y)--(C)--cycle;

            \draw[->] (0,0,0)--(3.5,0,0) node[below left] {$x$};
            \draw[->] (0,0,0)--(0,3.5,0) node[right] {$y$};
            \draw[->] (0,0,0)--(0,0,3.5) node[above] {$z$};
            \draw[very thick,black,->] (0,0,0)--(1.35,1.45,1.35) node[above right] {$w$};
        \end{tikzpicture}

        \vspace{0.4em}
        \textbf{Non-robust}
    \end{minipage}

    \caption{A schematic robust and non-robust family of planes in $\mathbb R^3$. No direction can be too close to all three coordinate planes on the left, whereas the direction $w$ is simultaneously close to all three planes on the right.}
    \label{fig:robust-nonrobust}
\end{figure}

Alongside applications of robust subspace designs in space-bounded complexity, we prove the following:
\begin{enumerate}
    \item We use random matrix theory to argue that, with constant probability, a polynomial-sized ensemble of Gaussian matrices forms a strong subspace design with inverse polynomial robustness.
    This implies a randomized-logspace-constructible robust subspace design.
    \item We combine ideas from low-rank matrix approximation (specifically, sketching and max-volume submatrices) to give a deterministic space-efficient weak robust subspace design with inverse polynomial robustness of size $O(n^{O(1)} k^{O(k)})$.
\end{enumerate}

\noindent
Before detailing these constructions in \cref{sec:intro-prob,sec:explicit-intro}, we first turn to our main application, which also serves as a concrete motivation for studying robust subspace designs.

\subsection{Isolating a unique small quantum witness}\label{sec:matrix-completion-logspace}
The Valiant-Vazirani theorem \cite{ValiantV86} shows that randomness can isolate a unique witness among several classical witnesses.
In complexity theoretic terms, it establishes $\mathsf{NP}\subseteq\mathsf{RP}^{\mathsf{promiseUP}}$.
Aharonov, Ben-Or, Brandão and Sattath \cite{Aharonov_2022} extended this result to bounded-error quantum verifiers with classical witnesses, showing that $\mathsf{QCMA} \subseteq \mathsf{RP^{UQCMA}}$.

In our work, we consider quantum witnesses instead.
In that case the perfectly accepting witnesses form a subspace, and their number is naturally measured by its dimension.

Now suppose that a quantum verifier's accepting subspace has dimension $k$, with witnesses on $w$ qubits.
Classically, if $k$ is at most polynomial in $n$, a prover can just send all $k$ witnesses in a canonical order to obtain a unique combined witness and let a polynomial time verifier check their distinctness and verify each one.
A natural quantum analogue is to have the prover send their antisymmetric product. For an orthonormal basis $\ket{\psi_1},\dots,\ket{\psi_k}$ of the accepting subspace, this is
\[
    \frac{1}{\sqrt{k!}}
    \sum_{\sigma\in S_k} \operatorname{sgn}(\sigma)\,
    \ket{\psi_{\sigma(1)}}\otimes\ket{\psi_{\sigma(2)}}\otimes
    \cdots\otimes\ket{\psi_{\sigma(k)}},
\]
where $S_k$ is the collection of all permutations $\sigma:[k]\rightarrow[k]$.
A quantum verifier can test whether a provided state lies in the antisymmetric subspace, i.e. whether it is spanned by antisymmetric products of $k$ orthogonal states (ensuring that the prover cannot send $k$ times the same witness state) and it can verify acceptance on each register. These conditions leave a one-dimensional subspace.
Jain, Kerenidis, Kuperberg, Santha, Sattath and Zhang~\cite{jain2011poweruniquequantumwitness} use this observation to reduce polynomially many quantum witnesses to a unique one, showing that~$\mathsf{FewQMA \subseteq P^{UQMA}}$.
This unique witness, however, occupies $k$ registers and therefore $k\cdot w$ qubits in total.

This raises a space-bounded version of the isolation question:
Can we obtain a unique quantum witness without asymptotically increasing the size of the witness register? We answer this question affirmatively for space-bounded unitary quantum verifiers with perfect completeness and an acceptance gap outside their perfectly accepting subspace.
It should be observed that for bounded-error quantum verifiers this is almost immediate, because, using space-efficient error reduction, they can actually remove the prover entirely by sampling from the maximally mixed state instead (at the cost of a runtime blowup), see \cite[Theorem 35]{fefferman_lin}.
In complexity-theoretic language, we prove:
\begin{restatable}{theorem}{qwitnessisolation}
    \label{thm:fewqumal1-in-rl-to-uqumal1}
    $\mathsf{FewQ_U MAL}_1 \subseteq \mathsf{RL}^{\mathsf{UQ_U MAL}_1}$.
\end{restatable}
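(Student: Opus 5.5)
We plan to follow the Valiant--Vazirani template, with the random hash functions replaced by members of a robust subspace design. Let $V$ be a $\mathsf{FewQ_UMAL}_1$ verifier on input $x$. By assumption its perfectly accepting subspace $W\subseteq\C^{2^w}$ has dimension $k\le\poly(n)$ whenever it is nonzero, and every witness orthogonal to $W$ is accepted with probability at most $1-\gamma$ for some inverse-polynomial gap $\gamma$. The $\mathsf{RL}$ machine will do three things. It guesses $k'\in\{1,\dots,\poly(n)\}$ (or loops over all such $k'$). It samples, in randomized logspace, a family $\mathcal Q=\{Q_1,\dots,Q_m\}$ of Gaussian-type maps $Q_i:\C^{2^w}\to\C^{k'-1}$; by the probabilistic construction this family forms, with constant probability, a strong $(k',\eta,\Delta)$ robust subspace design with $\eta\ge 1/\poly$. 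It then queries the $\mathsf{UQ_UMAL}_1$ oracle on a modified verifier $V_i$ for each $i$.

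The verifier $V_i$ runs $V$ on the witness and additionally checks that the witness lies in $\ker(Q_i)$. This check is implemented unitarily in logspace, for instance by a block-encoding of $Q_i$ (whose entries are computable from the random seed) followed by postselection/rejection on the ancilla flag. Its perfectly accepting subspace is then $W\cap\ker(Q_i)$. When $k'=k$ and $\dim\ker Q_i = 2^w-(k-1)$, this intersection has dimension at least $1$. The strong design property bounds $\sum_i |\{j:\sigma_j(Q_i|_W)\le\eta\}|\le\Delta$; because $Q_i$ has only $k-1$ rows, each $i$ already has at least one singular value equal to $0$ on $W$. Taking $m>\Delta-m\cdot 1$ (the design parameters allow $\Delta<2m$), an averaging argument shows that some $i$ has exactly one small singular value on $W$. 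For that $i$ we get $\dim(W\cap\ker Q_i)=1$, and every other direction of $W$ is $\eta$-far from $\ker Q_i$.

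The main obstacle is the soundness gap of $V_i$ outside its unique accepting vector. Consider a unit witness $\psi\perp W\cap\ker Q_i$. Decompose $\psi$ into a component in $W$ and a component in $W^\perp$. The $W^\perp$ part is rejected by $V$ with gap $\gamma$. For the $W$ part, the $\ker Q_i$ check rejects with probability at least $\eta^2$, because that component is orthogonal to the one bad direction and so has all its relevant singular values above $\eta$. Cross terms must be handled. The plan is to show that the acceptance operator of $V_i$, of the form $\Pi_{\ker}\,A\,\Pi_{\ker}$ where $A$ is the acceptance POVM of $V$, has second-largest eigenvalue at most $1-\Omega(\gamma\eta^2)$. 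We would do this with a Jordan-lemma/gentle-measurement style bound: combine $\|(I-A)\psi\|$ large off $W$ with $\|Q_i\psi\|\ge\eta\|\psi\|$ on the relevant part of $W$, using the fact that $A\preceq P_W+(1-\gamma)P_{W^\perp}$. This gives an inverse-polynomial gap, as the definition of $\mathsf{UQ_UMAL}_1$ requires.

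Finally, the $\mathsf{RL}$ machine accepts iff some query $(k',i)$ makes the oracle accept. For YES instances, the correct $k'$ and a good design (constant probability, amplifiable by repetition) yield a valid promise instance that accepts. For NO instances, $W=0$, so every $V_i$ has no perfectly accepting vector. Its acceptance is then bounded by $1-\gamma$, since adding a check only lowers acceptance, so every query is a valid rejecting promise instance. Queries that violate the promise (wrong $k'$ or bad $i$) can return arbitrary answers, and as usual in Valiant--Vazirani this is harmless only on YES instances. On NO instances all queries satisfy the promise and return NO, so the resulting reduction has one-sided error, as $\mathsf{RL}$ requires.
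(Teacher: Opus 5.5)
\textbf{Where the proposal breaks.} Your isolation mechanism is the verifier-level analogue of the paper's, since $\ker(H+Q_i^*Q_i)=\ker H\cap\ker Q_i$ for $H\succeq 0$. Your core inequality is also sound. For $u\in W\ominus(W\cap\ker Q_i)$ one has $\|Q_iu\|\ge\eta\|u\|$. Hence any vector of $\ker Q_i$ orthogonal to $W\cap\ker Q_i$ carries at least an $\eta^2/(1+\eta^2)$ fraction of its squared norm in $W^\perp$, and $A\preceq P_W+(1-\gamma)P_{W^\perp}$ finishes.

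The genuine gap is the claim that $V_i$ is a legitimate instance of a problem in $\mathsf{UQ_UMAL}_1$. That class is $\mathsf{UQ_UMASPACE}_{1,1/3,1/3}(\log n)$. On yes-instances, every witness orthogonal to the unique accepting vector must be accepted with probability at most $1/3$, not merely $1-1/\poly(n)$ as you assert. Your combined test gives at best $1-\Omega(\gamma\eta^2)$ there. You supply no amplification that stays within $O(\log n)$ unitary qubits while preserving perfect completeness and the one-dimensional accepting space.

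Perfect completeness is itself unaddressed. The $\ker Q_i$ check must accept every vector of $W\cap\ker Q_i$ with probability exactly $1$ and hand it back to $V$ undisturbed. A block encoding of an arbitrary $Q_i$ compiled into the fixed finite gate set is in general only approximate, so kernel vectors would be flagged with nonzero probability. Even with an exact flag test, the resulting acceptance operator is not $\Pi_{\ker}A\Pi_{\ker}$. You would need the gentle-measurement argument you allude to; it does go through, but it costs further polynomial factors.

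\textbf{How the paper avoids this.} The paper never builds a combined verifier. It reduces to $\pcFewSINGULAR$ via the circuit-to-Hamiltonian lemma and perturbs $H=A^*A$ to $H+Q_i^*Q_i$. The Hermitian kernel peeling lemma then gives a unique kernel vector with $\lambda_{\min}^+\ge\eta^2\gamma/3$. The queries go to $\pcUSINGULAR'$, whose membership in $\mathsf{UQ_UMAL}_1$ comes from the Fefferman--Remscrim kernel-verification procedure. That procedure delivers exact perfect completeness and the constant $1/3$ thresholds from only an inverse-polynomial spectral gap, which are exactly the two things your route lacks.

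\textbf{Two further fixes.} First, the paper's $(k',\eta,\Delta)$-designs map into $\C^{k'}$. Nothing provides a strong design of $(k-1)$-row maps on $k$-dimensional subspaces with $\Delta<2m$. Instead, take a $(k-1,\eta,\Delta)$-weak design and fix any $(k-1)$-dimensional $W_R\subseteq W$. Pick $i$ with $\smin(Q_i|_{W_R})>\eta$, and conclude $\dim(W\cap\ker Q_i)=1$ and $\splus(Q_i|_W)\ge\eta$ by Courant--Fischer. Second, an $\mathsf{RL}$ machine reads its random bits once and cannot store the polynomially many bits defining $Q_i$, so it cannot recompute entries from a seed. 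The query must contain $Q_i$ explicitly, streamed once, as in the paper's $(H,Q)$ input format.
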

\noindent Here $\mathsf{FewQ_U MAL}_1$ and $\mathsf{UQ_U MAL}_1$ are quantum Merlin-Arthur protocols on $O(\log n)$ qubits for the witness and auxiliary work register with perfect completeness and, on yes-instances, a  perfectly-accepting witness subspace of dimension $\poly(n)$ and $1$, respectively. Both require an acceptance gap on the orthogonal complement of the accepting subspace, that is, if a state is orthogonal to the subspace of states that are accepted with probability $1$, then it is accepted with probability at most $1/3$.
In the case of $\mathsf{FewQ_U MAL_1}$ the polynomial bound on the witness space already follows from the fact that the verifier acts on $O(\log n)$ qubits only. The acceptance gap is the actual non-trivial limitation.

Our primary tool used to establish this quantum witness isolation is what we call \emph{kernel peeling}, which relies on efficiently-constructible robust subspace designs:
\begin{lemma}[Implicit from \cref{lem:kernel-peeling,rem:psd-kernel-peeling}]
    Let $\eta,\gamma \in (0,1]$ and $k \geq k'$. Given a $(k',\eta,\Delta)$-weak robust subspace design $\mathcal{Q} = \{Q_1,\dots,Q_m\} \subseteq \mathbb C^{k'\times N}$ and a positive semi-definite matrix $H\in\mathbb C^{N \times N}$ with kernel dimension $k$ and smallest non-zero eigenvalue $\lambda_{\min}^+(H) \geq \gamma$, there exists an index $i\in[m]$ such that
    \[
        \dim\ker(H + Q_i^* Q_i) = k - k',
        \qquad
        \text{and}
        \qquad
        \lambda_{\min}^+(H+Q_i^*Q_i) \geq \eta^2\gamma/3.
    \]
\end{lemma}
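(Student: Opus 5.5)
The plan has two parts: choose a good index $i$ by applying the design to a $k'$-dimensional subspace of $K:=\ker H$, and then bound the smallest nonzero eigenvalue of $H+Q_i^*Q_i$ by a direct quadratic-form estimate. Throughout I assume the design is nontrivial, i.e.\ $m>\Delta$, since otherwise no index need exist. I also use $\|Q_i\|_\op\le 1$ and $\gamma\le 1$.

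\textbf{Choosing the index.} Fix any $k'$-dimensional subspace $W\subseteq K$, which exists because $k\ge k'$. By the weak robust design property, at most $\Delta<m$ indices satisfy $\sigma_{\min}(Q_i|_W)\le\eta$. So there is an $i$ with $\sigma_{\min}(Q_i|_W)>\eta$; write $Q=Q_i$.

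By Courant--Fischer, enlarging the domain from $W$ to $K$ cannot decrease the $k'$-th singular value. Hence $\sigma_{k'}(Q|_K)\ge\sigma_{\min}(Q|_W)>\eta$. Since the codomain is $\C^{k'}$, this means $Q|_K:K\to\C^{k'}$ is surjective. It follows that $\dim(K\cap\ker Q)=k-k'$. Moreover, $\|Qa\|\ge\eta\|a\|$ for every $a$ in the orthogonal complement $K_1:=K\ominus(K\cap\ker Q)$, because this complement is exactly the span of the right singular vectors of $Q|_K$, all of whose singular values exceed $\eta$.

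Since $H\succeq0$ and $Q^*Q\succeq0$, we have $x^*(H+Q^*Q)x=0$ if and only if $Hx=0$ and $Qx=0$. Therefore $\ker(H+Q^*Q)=K\cap\ker Q$, which gives the dimension claim.

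\textbf{Spectral gap.} Take a unit vector $x\perp K\cap\ker Q$ and write $x=a+b$ with $a\in K_1$, $b\in K^\perp$, and $\alpha=\|a\|$, $\beta=\|b\|$, so $\alpha^2+\beta^2=1$. Since $Ha=0$, we get
\[
x^*(H+Q^*Q)x = b^*Hb+\|Qa+Qb\|^2 \ge \gamma\beta^2+\|Qa+Qb\|^2 .
\]
To handle the cross term, I apply $\|u+v\|^2\ge\tfrac12\|u\|^2-\|v\|^2$ together with $\|Qb\|\le\beta$ and $\|Qa\|\ge\eta\alpha$. This yields
\[
\|Qa+Qb\|^2\ge\tfrac12\eta^2\alpha^2-\beta^2 .
\]
This lower bound may be negative, so I only use a $\lambda$-fraction of the nonnegative term $\|Qx\|^2$, with $\lambda\in[0,1]$. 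Taking $\lambda=2\gamma/3\le1$ gives
\[
x^*(H+Q^*Q)x\ \ge\ \gamma\beta^2+\lambda\bigl(\tfrac12\eta^2\alpha^2-\beta^2\bigr)=\tfrac{\gamma}{3}\bigl(\beta^2+\eta^2\alpha^2\bigr)\ \ge\ \tfrac{\eta^2\gamma}{3}.
\]
Minimizing over such $x$ yields $\lambda_{\min}^+(H+Q^*Q)\ge\eta^2\gamma/3$.

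\textbf{Main obstacle.} The delicate step is this final balancing. The kernel and range components of $x$ interact through $Q$, so neither $H$ nor $Q^*Q$ alone controls $x$. The bound only closes by trading off these two terms with the weight $\lambda$, and it uses $\gamma\le1$ and $\|Q\|_\op\le1$ to keep $\lambda\le1$ and $\|Qb\|\le\beta$.
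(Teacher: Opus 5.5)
Your proof is correct. It reaches the same constant $\eta^2\gamma/3$, but by a genuinely different route for the spectral-gap part.

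The choice of index and the dimension count match the paper's. There, too, one fixes a $k'$-dimensional $W\subseteq\ker H$, picks a good index, and uses Courant--Fischer to get $\splus(Q_i|_{\ker H})\geq\smin(Q_i|_W)$. (In the paper's non-decreasing convention, what you call the $k'$-th singular value of $Q|_K$ is $\sigma_1(Q|_K)$. Also, $m>\Delta$ is part of the paper's definition, so you need not assume it.)

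For the gap, the paper does not use positivity. It proves the general kernel-peeling lemma for arbitrary square $A$ and mixed products $Q_i^*Q_j$, in three steps:
\begin{enumerate}[label=(\roman*)]
    \item It restricts $A$ to a map $(\ker Q_j\cap\ker A)^\perp\to(\ker Q_i\cap\ker A^*)^\perp$, which reduces to the case $k=k'$.
    \item In \cref{lem:kernel-removal} it uses $\rank(Q_L^*Q_R)=k$ to show that $B_{11}=B_{12}B_{22}^{-1}B_{21}$, and hence writes $A+B$ as a block-triangular factorization.
    \item It bounds the resulting explicit inverse by $\|(A+B)^{-1}\|_{\op}\leq(1+\eta^{-2})/\gamma+\eta^{-2}$.
\end{enumerate}
The positive semidefinite statement then follows by taking $W_R=W_L$, as in \cref{rem:psd-kernel-peeling}.

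You instead use positive semidefiniteness twice. First, $\ker(H+Q^*Q)=\ker H\cap\ker Q$ is immediate. Second, the gap comes from a single Rayleigh-quotient estimate on $K_1\oplus K^\perp$ with the weight $2\gamma/3$.

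What this buys you: your argument is shorter and more elementary. It avoids both the restriction step and the explicit inverse.

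What it costs: it does not extend to non-Hermitian $A$ or to mixed products $Q_i^*Q_j$. There the quadratic form no longer controls the singular values, and the kernel is not simply an intersection of kernels. The paper needs that general version, for instance in the reduction $\NULL\preccurlyeq\SINGULAR$ of \cref{lem:null-reduces-to-singular}, where $A$ is an arbitrary square matrix.
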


To see how we use this, recall that by the standard circuit-to-Hamiltonian construction due to Kitaev \cite{KitaevShenVyalyi2002}, we find that for every $\mathsf{FewQ_U MAL_1}$ protocol for a length $n$ input $x$, there is an efficiently constructible positive semi-definite $\poly(n) \times \poly(n)$ matrix $H=H(x)$ whose kernel dimension is equal to the number of perfectly accepting witnesses.
Furthermore, from the acceptance gap on yes-instances and bounded acceptance probability on no-instances, one can derive an inverse polynomial spectral gap of the constructed Hamiltonian, $\lambda_{\min}^+(H) \geq 1/\poly(n)$.

If the witness subspace initially is of dimension $k=\dim\ker(H)$, then by taking a $(k-1,\eta,\Delta)$-weak robust subspace design $\{Q_1,...,Q_m\}$ with robustness parameter $\eta \geq 1/\poly(n)$, there will be some $Q_i$ so that adding this perturbation to $H$ the kernel dimension becomes
\[
    \dim\ker(H + Q_i^*Q_i) = 1,
    \qquad
    \text{and}
    \qquad
    \lambda_{\min}^+(H+Q_i^*Q_i) \geq 1/\poly(n).
\]

Using the phase estimation based singularity testing algorithm from \cite[Lemma 35]{FeffermanRemscrim2021}, we find that this yields a unique witness protocol with acceptance gap scaling polynomially with the spectral gap of the given matrix, where the prover precisely provides the isolated $O(\log n)$-qubit kernel state of $H+Q_i^*Q_i$.

Of course we do not know the dimension of the witness subspace in advance.
Our reduction relies on efficiently constructing $(k',\eta,\Delta)$-weak robust subspace designs for every $k'\leq 2^w = \poly(n)$ using randomness, and queries whether $H+Q_i^*Q_i$ has a unique kernel state using the just mentioned unique witness kernel verification procedure from \cite[Lemma 35]{FeffermanRemscrim2021}.
If any query answer returns yes, we accept.
Once we reach $k'=\dim\ker(H) - 1$, we will find some matrix $Q_i$ so that the kernel dimension of $H+Q_i^*Q_i$ is indeed one and which preserves the inverse polynomial spectral gap, yielding acceptance.
If on the other hand, there is no accepting witness, then $H$ will be regular to begin with and adding the positive semi-definite perturbations $Q_i^*Q_i$ will keep the matrix regular and only increase the spectral gap, so that no kernel verification procedure will give a positive answer for any $k'$.
The resulting protocol is in the complexity class $\mathsf{RL}^{\mathsf{UQ_U MAL}_1}$. 

\subsection{Further applications}
As further applications of subspace designs and kernel peeling, we find alternative simpler proofs of known classical containment results, as well as new quantum containment results for linear algebraic problems such as testing the nullity of a given matrix or the feasibility of a system of linear equations.

For our classical applications, we consider the nullity testing problem $\NULL_k$, which asks to decide whether the kernel dimension of a given matrix $A$ is at least $k$ or less than $k$.
With a similar kernel peeling idea as for our quantum witness isolation, we find that a $(k-1,\Delta)$ ``non-robust'' subspace design allows us to reduce the problem to $\NULL_1$ which corresponds to deciding whether a given matrix is singular, and which is known to be in the exact counting class $\mathsf{C_= L}$.
We then notice that the known Wronskian-based (non-robust) subspace design constructions, as presented in \cite{GuruswamiKopparty2016ExplicitSubspaceDesigns,BrakensiekChenDharZhang2026}, can be implemented in deterministic logspace.
Combining these ideas yields an alternative proof for $\NULL_k \in \mathsf{C_= L}$, originally proven in \cite{allender_complexity_1999} using a line of reductions including Berkowitz's algorithm \cite{BERKOWITZ1984147}.
By a reduction from linear system feasibility, $\mathsf{FSLE}$, to nullity testing (also from \cite{allender_complexity_1999}) this yields an alternative simpler reduction path towards $\mathsf{FSLE} \in \mathsf{L^{C_= L}}$.

In the quantum setting, we consider a well-conditioned version of the nullity testing problem, $\pcNULL_k$. Here we also wish to decide if the kernel dimension of a given matrix is at least $k$, but we are promised that if the kernel dimension is below $k$, then the $k$-th smallest singular value is at least inverse polynomial, $\sigma_k(A)\geq 1/\poly(n)$.
An analogous argument as in the classical case, together with a careful analysis of the robustness, shows that we can use our probabilistic robust subspace design construction to put $\pcNULL_k$ in $\mathsf{coRL^{Q_U MAL_1}}$.
Further, we find that a well-conditioned version of linear system feasibility, $\pcFSLE$ is contained in $\mathsf{BPL^{Q_U MAL_1}}$.
This extends our understanding of the complexity of well-conditioned linear algebra.

\cref{tab:linear-algebra-problems} gives a summary of these and related results about the complexity of fundamental linear algebra problems.

\begin{table}[ht]
    \centering
    {\renewcommand{\arraystretch}{1.8}
    \begin{tabular}{p{3.0cm}|p{4.0cm}|p{7.8cm}}
        \textbf{Problem} & \textbf{General} & \textbf{Poly-conditioned}
        \\\hline

        $\det(A)\stackrel{?}{=}0$
        &
        $\mathsf{C_{=}L}$-complete
        \cite{allender_complexity_1999}
        &
        $\mathsf{Q_{U}MAL}_1$-complete
        \cite{FeffermanRemscrim2021}
        \\\hline

        \multirow[c]{2}{=}{\raisebox{1.6ex}{$\mathrm{nullity}(A)\stackrel{?}{\geq} k$}}
        &
        \multirow{2}{4.0cm}{$\mathsf{C_{=}L}$-complete
        \cite{allender_complexity_1999}}
        &
        $\mathsf{Q_{U}MAL}_1$-complete for $k=O(1)$
        \;[\Cref{thm:pcnull-reduces-to-pcsingular-for-constant-k}]
        \\[-.1cm]
        &&
        $\in \mathsf{coRL^{Q_{U}MAL_1}}$
        \;[\Cref{thm:pc-null-in-corl-to-qumal1}]
        \\\hline

        Feasible $Ax \stackrel{?}{=} b$
        &
        $\mathsf{L}^{\mathsf{C_{=}L}}$-complete
        \cite{allender_complexity_1999}
        &
        $\in \mathsf{BPL^{Q_U MAL_1}}$
        \;[\Cref{thm:pcfsle-in-bpl-to-qumal1}]
        \\\hline

        Solve $Ax=b$
        &
        $\mathsf{DET}$-complete
        \cite{cook_taxonomy_1985}
        &
        $\mathsf{BQL}$-complete
        \cite{tashma,fefferman_lin,FeffermanRemscrim2021}
    \end{tabular}}
    \caption{Computational complexity of different linear algebra problems.
    We distinguish the general case from the poly-conditioned case, where the input instances are promised to satisfy appropriate well-conditionedness assumptions. Details are given in \Cref{sec:isolating-a-unique-small-quantum-witness-and-other-applications}.}
    \label{tab:linear-algebra-problems}
\end{table}

\subsection{Probabilistic construction of a robust subspace design} \label{sec:intro-prob}

We next summarize our probabilistic construction of a robust subspace design.
It is based on complex Gaussian matrices and uses various tools from Gaussian random matrix theory \cite{Forrester2010, hough2006determinantal, Szarek1998MetricEntropyHomogeneousSpaces, Vershynin2018HighDimensionalProbability}.

A first key fact that we use is that if $Q$ is a $k\times n$ complex Gaussian matrix and a subspace $W\in\Gr(k,n)$\footnote{Here $\Gr(k,n)$ is the complex Grassmannian, which is the set of all $k$-dimensional subspaces of $\C^n$.} is fixed, then the restriction $Q|_W$ of $Q$ to $W$ is a square Gaussian matrix.
Now, for a weak robust design, we need to control the probability that the least singular value of $Q|_W$ is small.
For a strong robust design, however, we must also control how many singular values of $Q|_W$ can be simultaneously close to zero.

To do so, we use that the singular values of a square complex Gaussian matrix have a well-understood joint distribution.
In particular, the expected number of singular values below $u/\sqrt k$ of a $k \times k$ Gaussian matrix is $O(u^2)$, and we can prove strong tail bounds for this number.
Applying these estimates independently to the maps in our family gives a strong bound for every fixed subspace $W$.
Finally, we use a union bound on a net over the complex Grassmannian $\Gr(k,n)$ so that the same family works for every $k$-dimensional subspace simultaneously.
This gives the following theorem.

\begin{theorem}[Informal version of \cref{thm:gaussian-strong-robust-design}]\label{intro:robust-gaussian}
    For every $1\le k< n$, there is a distribution over
    \[
        m=O\bigl(k(n-k)\log(nk)\bigr)
    \]
    maps $Q_i:\C^n\to\C^k$ such that, with high probability, $\{Q_i\}$ is a $(k,\eta,\Delta)$-strong robust subspace design for $\eta = \Theta(1/\sqrt{nk})$ and $\Delta = m/4$.
\end{theorem}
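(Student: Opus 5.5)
We take $Q_1,\dots,Q_m$ to be i.i.d.\ $k\times n$ complex Gaussian matrices with entries of variance $1/(Cn)$ for a suitable constant $C$. Standard Gaussian concentration then gives $\|Q_i\|_\op\le 1$ for all $i$ with high probability, via a union bound over the $m$ maps. The argument has three stages: fix $W$, control the count for that $W$, then extend to all of $\Gr(k,n)$ by a net.

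\textbf{Fixed subspace.} Fix $W\in\Gr(k,n)$ with an orthonormal basis $U\in\C^{n\times k}$. By unitary invariance, $Q_iU$ is a $k\times k$ complex Gaussian matrix, and the $Q_iU$ are independent over $i$. Define
\[
    N_i(W)=\left|\{j:\sigma_j(Q_i|_W)\le 2\eta\}\right|,
\]
using $2\eta$ rather than $\eta$ to leave slack for the net step. After rescaling to unit variance, the threshold becomes $u/\sqrt k$ with $u=\Theta(2\eta\sqrt{nk})$, a small constant. The singular values of a square complex Gaussian matrix form a determinantal point process in the squared variables (the Laguerre ensemble with hard edge). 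From this we aim to show two things:
\begin{enumerate}
    \item $\E N_i(W)=O(u^2)$, which we make at most $1/16$ by choosing the constant in $\eta$ small.
    \item $N_i$ has exponential moments, e.g.\ $\Pr[N_i\ge t]\le (cu^2)^{t}$ or at least $\E e^{\lambda N_i}\le 1+O(u^2)$.
\end{enumerate}
Both should follow because a determinantal count is a sum of independent Bernoullis whose means are the eigenvalues of the restricted kernel, and these sum to $O(u^2)$. Since the $N_i$ are independent, a Chernoff bound gives
\[
    \Pr\Bigl[\sum_i N_i(W)>m/8\Bigr]\le e^{-cm}.
\]

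\textbf{Net and union bound.} Take an $\eps$-net $\mathcal N$ of $\Gr(k,n)$ in the metric $\|P_W-P_{W'}\|_\op$. By Szarek's bound, $|\mathcal N|\le (C_0/\eps)^{2k(n-k)}$. Let $W$ be arbitrary and $W'\in\mathcal N$ be $\eps$-close to it. Then there are orthonormal bases with $\|U-U'\|_\op\le O(\eps)$. By Weyl's inequality,
\[
    \sigma_j(Q_iU)\ge\sigma_j(Q_iU')-\|Q_i\|_\op\,O(\eps).
\]
With $\|Q_i\|_\op\le1$ and $\eps=\Theta(\eta)$, every $j$ with $\sigma_j(Q_iU)\le\eta$ also satisfies $\sigma_j(Q_iU')\le 2\eta$. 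Hence $\sum_i|\{j:\sigma_j(Q_i|_W)\le\eta\}|\le\sum_i N_i(W')$.

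The union bound over $\mathcal N$ needs $e^{-cm}\,(C_0/\eta)^{2k(n-k)}\ll1$. Since $\log(1/\eta)=O(\log nk)$, the choice $m=Ck(n-k)\log(nk)$ suffices, and it yields $\Delta=m/8\le m/4$.

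\textbf{Main obstacle.} The hard part is the tail estimate for $N_i$ in the fixed-subspace stage. It must be uniform in $k$, with mean $O(u^2)$ and an exponential tail, which is what the Chernoff step consumes. The first thing to try is the determinantal/Bernoulli-sum representation together with an explicit bound on the trace of the Laguerre kernel on $[0,u^2/k]$, via the Bessel-kernel asymptotics. If that is too delicate, a fallback is to bound $\Pr[N_i\ge t]$ by the probability that some $t$-dimensional subspace $V\subseteq W$ has $\|Q_i|_V\|_\op$ small. That probability can be estimated by integrating the Ginibre/Laguerre joint density near zero, which gives a factor $u^{2t^2}$, and then taking a union over $t$.
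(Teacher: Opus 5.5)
Your proposal follows the paper's proof essentially step for step: i.i.d.\ Gaussian maps of variance $\Theta(1/n)$ with $\|Q_i\|_{\op}\le 1$ from Gaussian operator-norm concentration, unitary invariance to reduce a fixed $W$ to square Ginibre matrices, the independent-Bernoulli representation of the determinantal count on $[0,u^2/k]$ followed by Chernoff, and a Szarek net with Weyl-type Lipschitz continuity (factor-two threshold slack) plus a union bound. The step you flag as the main obstacle needs no Bessel asymptotics: the paper uses the classical inequality $e^{-x/2}\lvert L_j(x)\rvert\le 1$ for $x\ge 0$, which gives the pointwise bound $K(x,x)=e^{-x}\sum_{j=0}^{k-1}L_j(x)^2\le k$ and hence a restricted-kernel trace $\int_0^{u^2/k}K(x,x)\,dx\le u^2$ uniformly in $k$, which is exactly what your Chernoff step needs.
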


\noindent
We furthermore find that the above construction can be approximated to finite-precision by a randomized algorithm that uses only $O(\log n)$ workspace, by presenting a simple space-bounded Gaussian scalar sampler.

\subsection{Explicit construction of a robust subspace design} \label{sec:explicit-intro}

We next turn to explicit constructions of weak robust subspace designs.
We first give a simple construction based on coordinate projections.  For
$I\subseteq[n]$, let
$P_I:\C^n\to\C^{|I|}$
denote the projection onto the coordinates indexed by $I$.

The basic idea is already visible when $k=1$.  If $v\in\C^n$ is a unit
vector, then at least one coordinate of $v$ has magnitude at least
$1/\sqrt n$.  Thus, among the coordinate projections, there is always
one that is non-negligible on the line spanned by $v$; the uniform vector
shows that the scale~$1/\sqrt n$ is the correct one in general.
For a $k$-dimensional subspace, an analogous argument based on a carefully
chosen $k\times k$ coordinate minor yields robustness at scale
$1/\sqrt{nk}$, leading to the following construction.

\begin{restatable}[Projection-based robust subspace design]
    {theorem}{projectiondesign}
    \label{thm:coordinate}
    The family
    \[
        \mathcal P_k
        =
        \{P_I:I\subseteq[n],\;|I|=k\}
    \]
    is a $(k,\eta,\Delta)$-weak robust subspace design with $\eta=\frac{1}{2\sqrt{nk}}$ and $\Delta=\binom nk-1$.
\end{restatable}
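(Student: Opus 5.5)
The plan is to show that for every $W\in\Gr(k,n)$ there is at least one $I$ with $\smin(\restr{P_I}{W})>\eta$. Since $|\mathcal P_k|=\binom nk$, the number of bad indices is then at most $\binom nk-1=\Delta$. The condition $\norm{P_I}_\op\le 1$ is immediate because $P_I$ is a coordinate projection.

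First I reduce to a matrix statement. Fix $W$ and let $V\in\C^{n\times k}$ have orthonormal columns spanning $W$. Since $V$ is an isometry from $\C^k$ onto $W$, the singular values of $\restr{P_I}{W}$ equal those of $P_IV=V_I$, the $k\times k$ submatrix of rows indexed by $I$. It therefore suffices to find $I$ with $\norm{V_I^{-1}}_\op\le\sqrt{nk}$, which gives $\smin(V_I)\ge 1/\sqrt{nk}>\frac{1}{2\sqrt{nk}}$.

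Next I choose $I$ by a max-volume argument: take $I$ maximizing $\abs{\det V_I}$ over all $|I|=k$. Because $\rank V=k$, some $k\times k$ minor is nonzero, so $\det V_I\neq0$ and $V_I$ is invertible. Set $C=VV_I^{-1}\in\C^{n\times k}$. By Cramer's rule, the entry $C_{j\ell}$ equals $\det V_{I'}/\det V_I$ up to sign, where $I'$ is obtained from $I$ by replacing the row of index $\ell$ with row $j$. Maximality of $I$ then gives $\abs{C_{j\ell}}\le1$ for all $j,\ell$, and hence $\norm{C}_\op\le\norm{C}_F\le\sqrt{nk}$.

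Finally I use orthonormality. Since $V^*V=\mathrm{I}_k$, we have $V_I^{-1}=V^*VV_I^{-1}=V^*C$, so
\[
    \norm{V_I^{-1}}_\op\le\norm{V^*}_\op\norm{C}_\op\le\sqrt{nk}.
\]
This gives the claimed robustness with room to spare: the factor $\tfrac12$ in $\eta$ makes the inequality strict, as needed for the weak design condition, which counts indices with $\smin\le\eta$.

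The only delicate step is the Cramer's-rule bound on the entries of $C$. One must check that replacing a row of $V_I$ with row $j$ of $V$ gives exactly a minor $V_{I'}$ up to row ordering (a sign), and that when $j\in I$ the entry is $0$ or $1$. Everything else is routine norm bookkeeping.
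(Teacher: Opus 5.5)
Your proposal is correct and follows essentially the same route as the paper: choose $I$ maximizing $\abs{\det V_I}$, bound the entries of $VV_I^{-1}$ by $1$, and conclude $\norm{V_I^{-1}}_\op\le\norm{VV_I^{-1}}_F\le\sqrt{nk}$. The only difference is that you prove the entrywise bound yourself via Cramer's rule (correctly, including the $j\in I$ cases), whereas the paper cites it as the dominant-submatrix lemma of Goreinov et al.
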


Thus the robustness parameter is inverse polynomial, but the family has
size $\binom nk$, growing roughly as $n^k$.  To reduce the size, we first use a sketching idea to compress the ambient space using a small family of sign matrices.

More precisely, we construct an explicit family of linear maps
\[
    S:\C^n\to\C^r,
    \qquad r=\Theta(k^2),
\]
such that for every $k$-dimensional subspace $W$, many of the maps $S$
embed $W$ with inverse-polynomial minimum singular value.
The corresponding sketching analysis argues that this holds for any sample space with $4$-wise independent, unbiased signs.
We can thus use a polynomial-size, logspace-enumerable $4$-wise independent sample space
based on a construction of Alon--Babai--Itai~\cite{ABI86}.

The sketching step reduces the ambient dimension from $n$ to
$r=\Theta(k^2)$, but its output dimension is still larger than $k$.
We therefore compose each sketch $S$ with one of the coordinate
projections
\[
    P_I:\C^r\to\C^k,
    \qquad |I|=k.
\]
Since there are only
\[
    \binom{r}{k}=k^{O(k)}
\]
such projections at this point, this yields the following explicit construction.

\begin{restatable}[Explicit robust subspace design]
    {theorem}{sketchingdesign}
    \label{thm:sketching-subspace}
    For every $1\leq k\leq n$, there is an explicitly computable family of
    linear maps $\C^n\to\C^k$ forming a
    $(k,\eta,\Delta)$-weak robust subspace design with
    \[
        m=O\!\left(n^4k^{O(k)}\right),
        \qquad
        \eta=\Omega\!\left(\frac{1}{k^{3/2}\sqrt n}\right),
        \qquad
        \Delta<m.
    \]
    Moreover, the family can be enumerated in space $O(\log n + k \log k)$, and its entries have polynomial bit complexity.
\end{restatable}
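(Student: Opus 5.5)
We compose a sketch with a coordinate projection. Fix $W\in\Gr(k,n)$ with an orthonormal basis given by the columns of $V\in\C^{n\times k}$. It suffices to exhibit one map $P_I S$ in the family with $\smin(P_I S V)>\eta$, since then $\Delta\le m-1<m$. The family is $\{\tfrac{c}{\sqrt{r}}\,P_I S\}$, where $S$ ranges over $r\times n$ sign matrices drawn from a $4$-wise independent sample space, $r=Ck^2$, and $I\subseteq[r]$ has $|I|=k$. The normalization $c/\sqrt r$ is chosen so that $\norm{\cdot}_\op\le 1$. Each $S$ has $\norm{S}_\op\le\norm{S}_F=\sqrt{rn}$, so we scale by $1/\sqrt{rn}$. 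This costs a factor $\sqrt n$ in $\eta$, which matches the claimed $\eta=\Omega(1/(k^{3/2}\sqrt n))$.

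\textbf{Step 1 (sketching).} We show that for a uniformly random $S$ from a $4$-wise independent unbiased sign space,
\[
\E\bigl\|\tfrac1r (SV)^*(SV)-I_k\bigr\|_F^2\le \tfrac{2k^2}{r}.
\]
This is the standard second-moment (AMS/JL-style) computation. Each entry $\tfrac1r\sum_{\ell}\langle s_\ell,v_a\rangle\overline{\langle s_\ell,v_b\rangle}$ has mean $\delta_{ab}$. Its variance only involves fourth moments of the signs, which $4$-wise independence controls, and it is at most $2\norm{v_a}^2\norm{v_b}^2/r$. Summing over the $k^2$ pairs $(a,b)$ and taking $r=8k^2$, Markov's inequality gives some $S$ in the space with $\norm{\tfrac1r (SV)^*SV-I}_\op\le 1/2$. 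Hence $\smin(SV)\ge\sqrt{r/2}$, so $\tfrac{1}{\sqrt r}SV$ is well-conditioned, with all singular values in $[1/\sqrt2,\sqrt{3/2}]$. The Alon--Babai--Itai space has size $O(n^{O(1)})$ and, for $4$-wise independence over $n$ coordinates and $r$ rows, yields the $n^4$ factor. Its elements are enumerable in logspace.

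\textbf{Step 2 (coordinate minor).} Apply the argument behind \cref{thm:coordinate} to $B=\tfrac1{\sqrt r}SV\in\C^{r\times k}$. Choose $I$ maximizing $\abs{\det B_I}$, the max-volume $k\times k$ submatrix. By maximality, every row of $B$ is $\ell_\infty$-bounded by $1$ in the basis given by the rows of $B_I$; that is, $B=XB_I$ with $|X_{ij}|\le1$. Hence
\[
\norm{B}_\op\le\norm{X}_F\,\norm{B_I}_\op\quad\text{and}\quad \smin(B)\le \norm{X}_\op\,\smin(B_I)\le\sqrt{rk}\,\smin(B_I),
\]
which gives $\smin(B_I)\ge \smin(B)/\sqrt{rk}=\Omega(1/k^{3/2})$. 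This constant-factor bookkeeping is where I expect the main obstacle to lie, and it is exactly the content already used in \cref{thm:coordinate}: we reapply it with ambient dimension $r$ in place of $n$. Adding the $1/\sqrt{n}$ normalization then gives $\eta=\Omega(1/(k^{3/2}\sqrt n))$ after halving for a strict inequality.

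\textbf{Step 3 (explicitness).} The family has size $|\text{sample space}|\cdot\binom{r}{k}=O(n^4)\cdot k^{O(k)}$. An index is a seed of $O(\log n)$ bits together with a subset of $[r]$ encoded in $O(k\log k)$ bits. Each entry is $\pm c/\sqrt{rn}$ and is computable in logspace from the seed; we can rescale by rationals to keep polynomial bit complexity.
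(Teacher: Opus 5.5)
Your proposal is correct and takes essentially the paper's route: a $4$-wise independent tug-of-war sketch into $\Theta(k^2)$ dimensions via Alon--Babai--Itai, followed by the max-volume coordinate projection of \cref{thm:coordinate}, which gives $\eta=\Omega(1/\sqrt n)\cdot\Omega(1/k^{3/2})$ and size $O((k^2n)^4)\binom{\Theta(k^2)}{k}$. The differences are cosmetic. The paper goes through a general composition theorem, with $r=16k^2$, half the sketches good, and the coordinate design applied to an orthonormal basis of $S_iW$; you instead show directly that one good pair exists, using $\smin(B)\le\|X\|_{\op}\,\smin(B_I)$. One small correction: with $r=8k^2$, the existence of a good $S$ follows from averaging (the minimum is at most the mean), not from Markov, which only gives the trivial bound $1$ at that threshold.
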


In particular, when
\[
    k=O\!\left(\frac{\log n}{\log\log n}\right),
\]
we have $k^{O(k)}=n^{O(1)}$, and enumeration uses only $O(\log n)$ space.
Hence in this regime the construction gives a polynomial-size, deterministic logspace-enumerable robust subspace design.

\subsection{Open problems and future directions}

We finish the introduction by discussing some questions that our work evokes.

First and foremost is the question of an \emph{explicit} (e.g. logspace-uniform) \emph{robust subspace design} of polynomial size and inverse polynomial robustness.
While our probabilistic construction shows the non-explicit existence of such a design, our complexity-theoretic applications would benefit from a suitable derandomization.
As an example, by \cref{cor:robust-l} such a design would improve the containment $\mathsf{FewQ_U MAL}_1 \subseteq \mathsf{RL}^{\mathsf{UQ_U MAL}_1}$ to $\mathsf{FewQ_U MAL}_1 \subseteq \mathsf{L}^{\mathsf{UQ_U MAL}_1}$.

One promising approach to such a derandomization is to study the connection with dimension expanders \cite{lubotzky2008dimension} and quantum expanders \cite{ben2008quantum}, both of which are linear-algebraic variants of expander graphs.
Following the discussion e.g.~in \cite{li2025linear}, a quantum expander can be thought of as a ``robust'' version of a dimension expander.
While prior work has used subspace designs to design dimension expanders \cite{forbes2015dimension}, existing constructions of explicit quantum expanders may be useful to constructing robust subspace designs.

Further, we wonder whether our techniques may also be useful for quantum witness isolation in the time-bounded regime. A key challenge is to handle exponentially large witness spaces without losing efficiency or the spectral gap needed for verification.
The quantum-oracle barrier of Anshu, Haferkamp, Hwang, and Nguyen \cite{anshu_et_al:LIPIcs.ITCS.2026.10} also suggests that such an extension would need to exploit information beyond black-box access to the verifier.

Finally, we mention a natural \emph{geometric viewpoint} of subspace designs over $\C$.
For a fixed subspace $H\subseteq\C^n$ of dimension $n-k$, the set of $k$-dimensional subspaces $W$ satisfying
\[
    W\cap H\neq\{0\}
\]
forms a Schubert variety in the Grassmannian $\operatorname{Gr}(k,n)$. Thus, an ordinary subspace design controls how often $W$ lies on these bad varieties.
This viewpoint is exploited in recent work of Brakensiek, Chen, Dhar, and Zhang \cite{BrakensiekChenDharZhang2026}
who sharpen the Guruswami-Kopparty bound \cite{GuruswamiKopparty2016ExplicitSubspaceDesigns}, and use a Grassmannian/Schubert-calculus viewpoint to show that the improved bound is essentially optimal (over algebraically closed fields).
It is an interesting direction to explore this geometric viewpoint for robust subspace designs, which require that~$W$ stay quantitatively away from them.

\subsection{Paper overview}

After the preliminaries in \cref{sec:preliminaries}, we formally define robust subspace designs and introduce the kernel peeling technique in \cref{sec:robust-subspace-designs-and-kernel-peeling}.
In the same section, we recall the classical construction of subspace designs based on Wronskian evaluation maps which can have arbitrarily bad robustness parameter $\eta$.
In \cref{sec:probabilistic-construction}, we describe our probabilistic construction of robust subspace designs based on Gaussian matrices.
In \cref{sec:explicit-construction}, we present our explicit deterministic constructions.
Finally, in \cref{sec:isolating-a-unique-small-quantum-witness-and-other-applications}, we show our application to isolate a unique small quantum witness as well as the linear algebra applications to nullity testing and linear system feasibility.

\section{Preliminaries}\label{sec:preliminaries}

For every $n\in \mathbb N$, we write $[n]:=\{1,\dots,n\}$ for the first $n$ natural numbers.

Throughout, all vector spaces we consider are finite-dimensional over the complex numbers $\mathbb C$, and equipped with the standard inner product and Euclidean norm $\|\cdot\|$.
For a subspace $W\subseteq \mathbb C^n$, we write $W^\perp$ for its orthogonal complement and $\Pi_W$ for the orthogonal projector onto $W$.
We further write
\[
    \mathrm{dist}(x,W) := \min_{w\in W} \|w-x\| = \|\Pi_{W^\perp} x\|
\]
for the minimal distance of $x$ to $W$.
If $W$ has dimension $k$, then $N_W \in \mathbb C^{n\times k}$ denotes an arbitrary matrix whose columns form an orthonormal basis of $W$.
As such,
\[
    N_W^* N_W = I_k,
    \qquad
    N_WN_W^* = \Pi_W.
\]
We write $\Gr(k,n)$ for the complex Grassmannian of $k$-dimensional subspaces of $\mathbb C^n$.

For a linear map $A: \mathbb C^n \rightarrow \mathbb C^m$, we denote by $A^*$ its Hermitian adjoint, by $\ker(A)$ and $\im(A)$ its kernel and image, and by $\rank(A)$ its rank.
We freely identify a linear map by its matrix representation, where the underlying bases will always be clear from context or mentioned explicitly.
The operator and Frobenius norms are
\[
    \|A\|_{\op} := \sup_{x\in \mathbb C^n\backslash \{0\}}\frac{\|Ax\|}{\|x\|},
    \qquad
    \|A\|_F := \sqrt{\Tr A^*A} = \sqrt{\sum_{i,j} |A(i,j)|^2},
\]
and satisfy $\|A\|_\op \leq \|A\|_F$.

For a Hermitian matrix $H\in\mathbb C^{n \times n}$, we list its eigenvalues in nondecreasing order,
\[
    \lambda_1(H) \leq \cdots \leq \lambda_n(H).
\]
For $A\in\mathbb C^{m\times n}$, set $r:= \min\{m,n\}$. We likewise list the singular values of $A$ in non-decreasing order
\[
    \sigma_1(A) \leq \cdots \leq \sigma_r(A).
\]
Equivalently,
\[
    \sigma_j(A)^2
    =
    \lambda_{n-r+j}(A^*A)
    =
    \lambda_{m-r+j}(AA^*),
    \quad \forall j\in[r].
\]
By the Courant-Fischer theorem,
\[
    \sigma_j(A)
    =
    \sup_{\substack{U\subseteq \mathbb C^n\\\dim U = r-j+1}} \inf_{x\in U\backslash \{0\}} \frac{\|Ax\|}{\|x\|}.
\]
We write $\smin(A):=\sigma_1(A)$, $\sigma_{\max}(A):= \sigma_r(A) = \|A\|_\op$, and $\splus(A)$ for the smallest non-zero singular value of $A$, with the convention $\splus(0)=\infty$.
For Hermitian matrices $H$, we use the analogous notation and convention for $\lambda_{\min}(H), \lambda_{\max}(H)$ and $\lambda_{\min}^+(H)$.
We will also use the standard singular value perturbation bound
\[
    |\sigma_j(A) - \sigma_j(B)|
    \leq
    \|A-B\|_\op
\]
for matrices $A$ and $B$ of the same dimensions.

Finally, we record how singular values behave under restriction to a subspace.
Let \mbox{$Q\in \mathbb C^{d\times n}$} with $d\leq n$, let $W\subseteq \mathbb C^n$ have dimension $k\leq d$, and let $N_W \in \mathbb C^{n\times k}$ as above.
Since \mbox{$N_W: \mathbb C^k \rightarrow W$} is unitary, $QN_W$ is a matrix representation of $Q|_W$.
Extending $N_W$ to a unitary \mbox{$U=[N_W \; N_{W^\perp}] \in \mathbb C^{n\times n}$} gives
\[
    Q\Pi_W U = [QN_W \;0].
\]
Hence, for every $j\in [k]$,
\[
    \sigma_j(Q|_W)
    =
    \sigma_j(QN_W)
    =
    \sigma_{d-k+j}(Q\Pi_W).
\]

Throughout, computational inputs are matrices over $\mathbb Q(i)$ with polynomial-bit entries. Arithmetic outputs are streamed bit by bit to a write-only unidirectional output tape by machines that may use less space than the output size.

\section{Robust subspace designs and kernel peeling}\label{sec:robust-subspace-designs-and-kernel-peeling}

We first recall the definition of weak and strong robust subspace designs from the introduction.
After that, we show that robust subspace designs give rise to our kernel peeling technique, the main technical ingredient for our applications.
Finally, we remark that the ordinary subspace design constructions based on Wronskian evaluation maps from \cite{GuruswamiKopparty2016ExplicitSubspaceDesigns,BrakensiekChenDharZhang2026} can be implemented space-efficiently.
However, we find they may have arbitrarily small robustness parameter~$\eta\in(0,1]$, making them useful only to our classical complexity applications, and not for quantum.

\subsection{Definitions}

For all definitions, let $\eta\in[0,1],n\in \mathbb N$ and $k\leq n$, $\Delta\geq 0$.

\begin{definition}[Weak robust subspace design]
    A family of linear maps \mbox{$Q_1,\ldots,Q_m:\mathbb C^n \rightarrow \mathbb C^{k}$} is a $(k,\eta,\Delta)$-weak robust subspace design if $\Delta < m$, $\norm{Q_i}_{\op}\le 1$ for every $i$, and for every $k$-dimensional subspace $W\subseteq\C^n$,
    \[
        \left|\{i\in[m] : \sigma_{\min}(Q_i|_W)\leq\eta\}\right| \leq \Delta.
    \]
\end{definition}

Informally, we call a map $Q_i$ \emph{bad} for $W$ if it contributes to the above count, i.e. if $\smin(Q_i|_W) \leq \eta$.

\begin{definition}[Strong robust subspace design]
    A family of linear maps \mbox{$Q_1,\ldots,Q_m:\mathbb C^n \rightarrow \mathbb C^{k}$} is a $(k,\eta,\Delta)$-strong robust subspace design if $\Delta < mk$, $\norm{Q_i}_{\op}\le 1$ for every $i$, and for every $k$-dimensional subspace $W\subseteq\C^n$,
    \[
        \sum_{i=1}^m \left|\{j\in[k] : \sigma_{j}(Q_i|_W)\leq\eta\}\right| \leq \Delta.
    \]
\end{definition}

\begin{remark}[Co-domain dimension condition]
    We restrict the linear maps in a robust subspace design to have co-domain of dimension $k$, because this is what we require for our linear algebraic applications.
    We do later consider the relaxation to larger co-domain maps, but then we call the family a sketching family (see \cref{def:sketching-family}) because of their relation to sketching.
\end{remark}

As in the non-robust world: Every strong robust design with $\Delta<m$ is also a weak robust design with the same parameters, because every bad map contributes at least one to the above sum.
Further, as noted in the introduction, it is straightforward to observe that
\[
    \mathcal{H} = \{\ker(Q_i): i \in [m]\}
\]
is a $(k,\Delta)$-weak/strong subspace design if the underlying $Q_i$'s are a $(k,\eta,\Delta)$-weak/strong robust subspace design for any $\eta \in [0,1]$.
Finally, we will sometimes also refer to non-robust subspace designs as families of maps, rather than as families of subspaces, by which we will mean the corresponding kernel-families.

\begin{remark}[Geometric interpretation]
    \label{rem:geometric-robust-subspace-design}
    Robust subspace designs admit a natural geometric interpretation in terms of approximate intersections of subspaces. For subspaces $W,H\subseteq\C^n$, define the $\eta$-\emph{robust intersection dimension} by
    \[
        \operatorname{rdim}_\eta(W,H)
        :=
        \max\Bigl\{
        \dim V :
        V\subseteq W,\;
        \dist(v,H)\leq \eta\norm{v}
        \text{ for all }v\in V
        \Bigr\}.
    \]
    Thus $\operatorname{rdim}_0(W,H)=\dim(W\cap H)$, while for $\eta>0$ the quantity also counts directions in $W$ that are close to lying in $H$.

    This notion is directly characterized by the singular values of the restricted projection. Indeed, if $\dim W=k$, then the Courant-Fischer min-max characterization of singular values gives
    \[
        \operatorname{rdim}_\eta(W,H)
        =
        \left|
        \left\{
        j\in[k]:
        \sigma_j\!\left(\Pi_{H^\perp}|_W\right)\leq\eta
        \right\}
        \right|.
    \]

    More generally, let $Q:\C^n\to\C^k$ satisfy $\norm{Q}_{\op}\leq1$ and let $H=\ker(Q)$. Since $Qx=Q\Pi_{H^\perp}x$, we have
    \[
        \norm{Qv}\leq\norm{\Pi_{H^\perp}v} = \mathrm{dist}(v,H)
        \qquad\text{for every }v\in\C^n.
    \]
    Consequently,
    \[
        \sigma_j(Q|_W)
        \leq
        \sigma_j(\Pi_{H^\perp}|_W)
        \qquad\text{for every }j\in[k].
    \]
    Hence a large robust intersection between $W$ and $\ker(Q)$ necessarily produces at least as many small singular values of $Q|_W$.

    When $Q$ is a co-isometry, i.e. in particular $Q^*Q=\Pi_{\ker(Q)^\perp}$, then the above inequality becomes an equality.
    Thus, in the special case where $Q$ is a co-isometry, our definition of a robust subspace design is exactly the geometric notion obtained by replacing ordinary intersections by robust intersections.
\end{remark}

Although this geometric viewpoint is intuitive, the formulation in terms of general linear maps is more flexible and better suited to our applications.

\subsection{Kernel peeling}\label{sec:robust}

In this subsection we show that an explicit (robust) subspace design allows us to apply a kernel peeling procedure, that is it allows us to compute a family of rank $k'$ perturbations so that for every matrix $A$ with kernel dimension $k\geq k'$, there is some perturbation in the family so that adding it to $A$ reduces its kernel dimension by exactly $k'$, and preserves a spectral gap.

Our main technical tool is the following lemma which states that if two maps $Q_R$ and $Q_L$ are invertible on the right- and left-kernel of $A$, respectively, then $A+Q_L^*Q_R$ must be invertible on the full space.
The lemma also proves a lower bound on the resulting minimal singular value of~$A+Q_L^*Q_R$ depending on the smallest non-zero singular value of $A$.

\begin{lemma}[Kernel removal]\label{lem:kernel-removal}
    Let $\eta,\gamma\in(0,1], 1\leq k \leq n$ and let $A\in\C^{n\times n}$ satisfy
    \[
        \dim\ker A=k,
        \qquad \splus(A)\ge\gamma.
    \]
    Further, let $Q_R,Q_L\in\C^{k\times n}$ satisfy $\|Q_R\|_{\op},\|Q_L\|_{\op}\le1$ and
    \[
        \smin\!\left(\restr{Q_R}{\ker A}\right),\;
        \smin\!\left(\restr{Q_L}{\ker A^*}\right)\geq\eta.
    \]
    Then $A+Q_L^*Q_R$ is invertible and\vspace{-.4cm}
    \[
        \smin(A+Q_L^*Q_R)\ge \frac{\eta^2\gamma}{3}.
    \]
\end{lemma}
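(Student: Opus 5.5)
We need to lower bound $\|(A+Q_L^*Q_R)x\|$ for every unit vector $x\in\C^n$. Set $K:=\ker A$ and $K_*:=\ker A^*=(\im A)^\perp$, both of dimension $k$. Decompose $x=x_K+x_\perp$ with $x_K\in K$ and $x_\perp\in K^\perp$. Decompose the output space as $\im A\oplus K_*$. The key observation is that $Q_L^*$ maps $\C^k$ into $(\ker Q_L)^\perp$, but not necessarily into $K_*$. So we cannot simply project; instead we will use the two projections $\Pi_{K_*}$ and $\Pi_{\im A}$ and balance the two cases.

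\textbf{Step 1: a lower bound via the $K_*$-component.} For $y=(A+Q_L^*Q_R)x$ we have $\Pi_{K_*}y=\Pi_{K_*}Q_L^*Q_Rx$, since $\Pi_{K_*}A=0$. Now $Q_L\Pi_{K_*}=Q_LN_{K_*}N_{K_*}^*$, and $Q_LN_{K_*}$ is a $k\times k$ matrix with $\smin\ge\eta$. Hence $\Pi_{K_*}Q_L^*=N_{K_*}(Q_LN_{K_*})^*$ satisfies $\|\Pi_{K_*}Q_L^*z\|\ge\eta\|z\|$. This gives $\|y\|\ge\eta\|Q_Rx\|$. Next, $Q_Rx=Q_Rx_K+Q_Rx_\perp$, so $\|Q_Rx\|\ge\eta\|x_K\|-\|x_\perp\|$. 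Therefore
\[
\|y\|\ge\eta\bigl(\eta\|x_K\|-\|x_\perp\|\bigr).
\]

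\textbf{Step 2: a lower bound via the $\im A$-component.} We have $\Pi_{\im A}y=Ax_\perp+\Pi_{\im A}Q_L^*Q_Rx$. Bounding $\|Ax_\perp\|\ge\gamma\|x_\perp\|$ directly is problematic, because the second term can be as large as $\|Q_Rx\|$. We handle this with the triangle inequality and Step 1: $\|Q_Rx\|\le\|y\|/\eta$. This gives
\[
\|y\|\ge\gamma\|x_\perp\|-\|y\|/\eta,
\qquad\text{so}\qquad
\|y\|\ge\frac{\eta\gamma}{1+\eta}\|x_\perp\|\ge\frac{\eta\gamma}{2}\|x_\perp\|.
\]

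\textbf{Step 3: combining the two cases.} Let $t=\|x_\perp\|$, so $\|x_K\|=\sqrt{1-t^2}\ge1-t$. If $t\ge2\eta/3$, Step 2 gives $\|y\|\ge\eta^2\gamma/3$. If instead $t<2\eta/3$, Step 1 gives $\|y\|\ge\eta(\eta(1-t)-t)$. This quantity is at least $\eta(\eta-2\eta^2/3-2\eta/3)$, which is not obviously $\ge\eta^2\gamma/3$ for $\eta$ near $1$. So the threshold needs tuning. Choose $t_0=\eta/(2+\gamma)$ or similar. Then for $t\ge t_0$, Step 2 gives $\|y\|\ge\eta\gamma t_0/2$. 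For $t<t_0$, Step 1 gives $\|y\|\ge\eta(\eta-t_0(1+\eta))\ge\eta(\eta-2t_0)$. Now pick $t_0=2\eta/3$ scaled by a constant so that both bounds are at least $\eta^2\gamma/3$, using $\gamma\le1$. For instance, the choice $t_0=\eta/3$ gives $\eta\cdot\eta/3\ge\eta^2\gamma/3$ in the first case and $\eta^2\gamma/6$ in the second. Closing the factor of $2$ requires using the sharper $1+\eta$ together with $\sqrt{1-t^2}$.

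\textbf{Main obstacle.} Everything above is routine except this constant-balancing step. I expect it to need a slightly more careful combination: for example, taking a convex combination $\|y\|\ge\max(a,b)\ge\lambda a+(1-\lambda)b$ of the two linear bounds in $\|x_K\|$ and $\|x_\perp\|$, then minimizing over the unit circle to reach exactly $\eta^2\gamma/3$. Invertibility follows immediately once the bound is positive.
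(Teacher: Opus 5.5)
Your Steps 1 and 2 are correct. Step 3, however, has a genuine gap, and the repair you suggest cannot close it. The two scalar bounds you have,
\[
\|y\|\ge\eta\bigl(\eta\sqrt{1-t^2}-t\bigr),\qquad \|y\|\ge\frac{\eta\gamma}{1+\eta}\,t,
\]
are jointly too weak for the constant $1/3$. At $\eta=\gamma=1$ they read $\|y\|\ge\sqrt{1-t^2}-t$ and $\|y\|\ge t/2$. At $t=2/\sqrt{13}$ both right-hand sides equal $1/\sqrt{13}<1/3$.

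Since $\lambda a+(1-\lambda)b\le\max(a,b)$, taking a convex combination and then minimizing over the unit circle is dominated by the pointwise maximum and cannot do better. In general, the best these two bounds certify is $\eta^2\gamma/\sqrt{(1+\eta+\gamma)^2+\eta^2(1+\eta)^2}$. This falls below $\eta^2\gamma/3$, for instance, whenever $\eta=1$ and $\gamma>\sqrt5-2$. So as written, you have proved invertibility with $\smin(A+Q_L^*Q_R)\ge\eta^2\gamma/6$ (your threshold $t_0=\eta/3$). That would suffice for the paper's $1/\poly(n)$ applications, but it is not the stated lemma.

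The loss happens in Step 2. There you bound $\|\Pi_{\im A}Q_L^*Q_Rx\|$ by $\|Q_Rx\|$ and then only take a maximum with Step 1, even though the two components of $y$ are orthogonal. With $q:=\|Q_Rx\|$, orthogonality gives
\[
\|y\|^2=\|\Pi_{K_*}y\|^2+\|\Pi_{\im A}y\|^2\ge\eta^2q^2+\bigl(\gamma t-q\bigr)_+^2\ge\frac{\eta^2\gamma^2t^2}{1+\eta^2}.
\]
Alternatively, use $\|\Pi_{\im A}Q_L^*z\|^2=\|Q_L^*z\|^2-\|\Pi_{K_*}Q_L^*z\|^2\le(1-\eta^2)\|z\|^2$. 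Either way, Step 2 improves to $\|y\|\ge\eta\gamma t/\sqrt2$.

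Now the threshold $t_0=\sqrt2\,\eta/3$ closes the argument:
\begin{itemize}
\item For $t\ge t_0$, the improved Step 2 gives $\|y\|\ge\eta^2\gamma/3$.
\item For $t<t_0$, Step 1 gives $\|y\|\ge\eta^2\bigl(\sqrt{1-2\eta^2/9}-\sqrt2/3\bigr)\ge\eta^2(\sqrt7-\sqrt2)/3>\eta^2/3$.
\end{itemize}

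With this fix, your vector-wise argument is a valid and somewhat more elementary alternative to the paper's proof. The paper instead uses $\rank(Q_L^*Q_R)=k$, together with invertibility of the block $B_{22}$ from $\ker A$ to $\ker A^*$, to show that the Schur complement $B_{11}-B_{12}B_{22}^{-1}B_{21}$ vanishes. It then writes $A+Q_L^*Q_R$ as a block LDU product and bounds the inverse by $(1+\eta^{-2})/\gamma+\eta^{-2}\le3/(\eta^2\gamma)$.
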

\begin{proof}
    If $k=n$, then $A=0$ and $\smin(Q_L^*Q_R)\geq\eta^2\geq\eta^2\gamma/3$, so assume $k<n$.
    With respect to orthonormal bases adapted to
    \[
        \mathbb C^n = \ker(A)^\perp \oplus \ker(A)
        \qquad\text{and}\qquad
        \mathbb{C}^n = \ker(A^*)^\perp \oplus \ker(A^*),
    \]
    we find unitarily equivalent representations of $A$ and $B:=Q_L^*Q_R$ as
    \[
        A \sim
        \begin{pmatrix}
            A_{11}  & 0 \\
            0       & 0
        \end{pmatrix},
        \qquad
        B \sim
        \begin{pmatrix}
            B_{11}  & B_{12} \\
            B_{21}  & B_{22}
        \end{pmatrix},
    \]
    where $\splus(A) = \smin(A_{11}) \geq \gamma$ and $B_{22} = \left(\restr{Q_L}{\ker A^*}\right)^* \left(\restr{Q_R}{\ker A}\right)$ such that $\smin(B_{22}) \geq \eta^2$.
    In particular $B_{22}$ is invertible with $\rank(B_{22}) = k$.
    However, since $\rank(B)=k$ already, we find that we can express the first $n-k$ columns of the above block representation of $B$ by linear combinations of the last $k$ columns, i.e. there exists some matrix $X\in\mathbb C^{k\times (n-k)}$ such that
    \[
        \begin{pmatrix}
            B_{11}\\
            B_{21}
        \end{pmatrix}
        =
        \begin{pmatrix}
            B_{12}\\
            B_{22}
        \end{pmatrix}X.
    \]
    From the lower block we obtain that $X=B_{22}^{-1}B_{21}$ and then from the top block we get
    \begin{equation}\label{eq:schur-complement-vanishes}
        B_{11} = B_{12}B_{22}^{-1}B_{21}.
    \end{equation}
    Moreover,
    \begin{equation}\label{eq:op-bounds}
        \|B_{22}^{-1}B_{21}\|_{\op}, \;
        \|B_{12}B_{22}^{-1}\|_{\op} \leq 1/\eta,
    \end{equation}
    because, respectively,
    \[
        B_{22}^{-1}B_{21}
        = \left(\restr{Q_R}{\ker A}\right)^{-1}\restr{Q_R}{\ker(A)^\perp},
        \qquad
        B_{12}B_{22}^{-1}
        = \left(\restr{Q_L}{\ker(A^*)^\perp}\right)^*\left(\left(\restr{Q_L}{\ker A^*}\right)^*\right)^{-1}.
    \]
    Using \cref{eq:schur-complement-vanishes}, we find the following block factorization of $A+B$:
    \[
        A+B \sim
        \begin{pmatrix}
            A_{11} + B_{11} & B_{12}\\
            B_{21}          & B_{22}
        \end{pmatrix}
        =
        \begin{pmatrix}
            I   & B_{12}B_{22}^{-1}\\
            0   & I
        \end{pmatrix}
        \begin{pmatrix}
            A_{11}  & 0\\
            0       & B_{22}
        \end{pmatrix}
        \begin{pmatrix}
            I                   & 0\\
            B_{22}^{-1}B_{21}   &I
        \end{pmatrix}.
    \]
    As each block triangular matrix on the factorization on the right is regular, we find that also $A+B$ must be regular.
    A simple calculation verifies that its inverse can be factorized as
    \begin{align*}
        (A+B)^{-1} \sim&
        \begin{pmatrix}
            I&0\\
            -B_{22}^{-1}B_{21}&I
        \end{pmatrix}
        \begin{pmatrix}
            A_{11}^{-1}&0\\
            0&B_{22}^{-1}
        \end{pmatrix}
        \begin{pmatrix}
            I&-B_{12}B_{22}^{-1}\\
            0&I
        \end{pmatrix}\\
        &=
        \begin{pmatrix}I\\-B_{22}^{-1}B_{21}\end{pmatrix}
        A_{11}^{-1}
        \begin{pmatrix}I&-B_{12}B_{22}^{-1}\end{pmatrix}
        +
        \begin{pmatrix}0\\I\end{pmatrix}
        B_{22}^{-1}
        \begin{pmatrix}0&I\end{pmatrix}.
    \end{align*}
    Since the block matrix representation of $(A+B)^{-1}$ above is unitarily equivalent to $(A+B)^{-1}$, we can bound its operator norm by
    \begin{align*}
        \|(A+B)^{-1}\|_{\op}
        &\leq \sqrt{1+\|B_{22}^{-1}B_{21}\|_{\op}^2} \;\|A_{11}^{-1}\|_{\op} \sqrt{1+\|B_{12}B_{22}^{-1}\|_{\op}^2} + \|B_{22}^{-1}\|_{\op}\\
        &\leq \frac{1+\eta^{-2}}{\gamma} + \frac{1}{\eta^2}
        \leq \frac{3}{\eta^2\gamma},
    \end{align*}
    where we used \cref{eq:op-bounds} to establish the second inequality.
    This gives
    \[
        \smin(A+B) = \frac{1}{\|(A+B)^{-1}\|_{\op}} \geq \frac{\eta^2 \gamma}{3}
    \]
    and finishes the proof.
\end{proof}

\begin{lemma}[Kernel peeling]\label{lem:kernel-peeling}
    Let $1\leq k'\leq k\leq n$, $\eta,\gamma \in (0,1]$, and let $A\in \mathbb{C}^{n\times n}$ satisfy
    \[
        \dim\ker A=k,
        \qquad \splus(A)\geq\gamma.
    \]
    Let $Q_1,\dots,Q_m : \mathbb C^n \to \mathbb C^{k'}$ be a $(k',\eta,\Delta)$-weak robust subspace design.
    Then there exist $i,j\in [m]$ such that\vspace{-.2cm}
    \[
        \dim \ker(A+Q_i^*Q_j) = k-k',
        \qquad \splus(A+Q_i^*Q_j) \geq \frac{\eta^2\gamma}{3}.
    \]
\end{lemma}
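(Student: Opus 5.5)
The plan is to reduce to the Kernel Removal \cref{lem:kernel-removal} by splitting off the part of $\ker A$ and $\ker A^*$ that the perturbation will leave untouched. Let $K:=\ker A$ and $L:=\ker A^*$, both of dimension $k$. First fix arbitrary $k'$-dimensional subspaces $W_R\subseteq K$ and $W_L\subseteq L$.

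\textbf{Choosing the indices.} Since the family is a $(k',\eta,\Delta)$-weak robust subspace design with $\Delta<m$, at most $\Delta$ maps are bad for $W_R$. Hence some index $j$ satisfies $\smin(\restr{Q_j}{W_R})>\eta$. Likewise some index $i$ satisfies $\smin(\restr{Q_i}{W_L})>\eta$. The two indices are chosen independently, which is exactly why the statement allows $i\neq j$.

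\textbf{Block decomposition.} Because $\restr{Q_j}{W_R}$ is invertible, $\restr{Q_j}{K}:K\to\C^{k'}$ is surjective. Therefore $K_0:=K\cap\ker Q_j$ has dimension $k-k'$, and analogously $L_0:=L\cap\ker Q_i$ has dimension $k-k'$. Set $A':=A+Q_i^*Q_j$. For $x\in K_0$ we have $A'x=0$. For $y\in L_0$ we have $A'^*y=A^*y+Q_j^*Q_iy=0$. Consequently $A'$ vanishes on $K_0$ and maps $K_0^\perp$ into $L_0^\perp$, since $\langle y,A'x\rangle=\langle A'^*y,x\rangle=0$ for $y\in L_0$. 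So, in adapted orthonormal bases,
\[
A'\sim\begin{pmatrix}\tilde A+\tilde Q_L^*\tilde Q_R&0\\0&0\end{pmatrix},
\]
where $\tilde A:K_0^\perp\to L_0^\perp$ is the restriction of $A$. Here $\tilde Q_R:=\restr{Q_j}{K_0^\perp}$ and $\tilde Q_L:=\restr{Q_i}{L_0^\perp}$. This uses that $Q_i^*$ maps into $(\ker Q_i)^\perp\subseteq L_0^\perp$. Both $K_0^\perp$ and $L_0^\perp$ have dimension $n-k+k'$, so $\tilde A$ is square.

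\textbf{Applying Kernel Removal.} I then check the hypotheses of \cref{lem:kernel-removal} for $\tilde A$, $\tilde Q_R$, $\tilde Q_L$ with kernel dimension $k'$.
\begin{itemize}
\item $\ker\tilde A=K\cap K_0^\perp$ has dimension $k'$, and $\ker\tilde A^*=L\cap L_0^\perp$.
\item The nonzero singular values of $\tilde A$ equal those of $A$, because $K_0\subseteq\ker A$ and $L_0\subseteq\ker A^*$. Hence $\splus(\tilde A)\ge\gamma$.
\item The operator norms of $\tilde Q_R$ and $\tilde Q_L$ are at most $1$.
\end{itemize}
The step needing care is the robustness condition $\smin(\restr{Q_j}{K\cap K_0^\perp})\ge\eta$, since $K\cap K_0^\perp$ is generally not $W_R$. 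Here $K\cap K_0^\perp$ is the orthogonal complement of $\ker(\restr{Q_j}{K})$ inside $K$. So the singular values of $Q_j$ on it are exactly the $k'$ nonzero (largest) singular values of $\restr{Q_j}{K}$. By Courant--Fischer (restriction interlacing),
\[
\sigma_{k-k'+l}(\restr{Q_j}{K})\ge\sigma_l(\restr{Q_j}{W_R}) \quad\text{for } l\in[k'],
\]
so all of them exceed $\eta$. The same argument applies to $Q_i$ on $L\cap L_0^\perp$.

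\textbf{Conclusion.} \cref{lem:kernel-removal} then gives that $\tilde A+\tilde Q_L^*\tilde Q_R$ is invertible with smallest singular value at least $\eta^2\gamma/3$. From the block form, $\ker A'=K_0$, so $\dim\ker A'=k-k'$. Moreover the nonzero singular values of $A'$ are exactly those of the invertible block, which gives $\splus(A')\ge\eta^2\gamma/3$.
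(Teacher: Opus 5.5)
Your proposal is correct and follows the paper's proof essentially step for step. It makes the same choice of $W_R\subseteq\ker A$, $W_L\subseteq\ker A^*$ and indices $i,j$, uses the same subspaces $K_0=\ker Q_j\cap\ker A$ and $L_0=\ker Q_i\cap\ker A^*$ and the same restriction $K_0^\perp\to L_0^\perp$, gives the same Courant--Fischer argument that the nonzero singular values of $\restr{Q_j}{\ker A}$ dominate those of $\restr{Q_j}{W_R}$, and ends with the same appeal to \cref{lem:kernel-removal}. Your explicit check that $Q_i^*$ maps into $L_0^\perp$, which justifies the block form, is a detail the paper leaves implicit.
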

\begin{proof}
    Fix any $k'$-dimensional subspaces $W_R\subseteq \ker A$ and $W_L\subseteq \ker A^*$.
    Since $m>\Delta$, by the design property, there must exist indices $i,j\in[m]$ for which
    \[
        \smin(Q_j|_{W_R}), \;\smin(Q_i|_{W_L}) \geq \eta.\vspace{-.3cm}
    \]
    We set\vspace{-.2cm}
    \[
        W_R' := \ker(Q_j|_{\ker(A)}) = \ker(Q_j) \cap \ker(A),
        \qquad
        W_L' := \ker(Q_i|_{\ker(A^*)}) = \ker(Q_i) \cap \ker(A^*),
    \]
    so that $W_R' + W_R = \ker(A)$ and $W_L'+W_L = \ker(A^*)$.
    Since $\dim \ker A = \dim \ker A^* = k$ and both $Q_j|_{\ker(A)}$ and $Q_i|_{\ker(A^*)}$ have rank $k'$, the rank-nullity theorem yields
    \[
        \dim W_R' = \dim W_L' = k-k'.
    \]
    Consider the restrictions
    \[
        \widetilde A:=A|_{(W_R')^\perp}:(W_R')^\perp\to(W_L')^\perp,
        \qquad
        \widetilde Q_j:=Q_j|_{(W_R')^\perp},
        \qquad
        \widetilde Q_i:=Q_i|_{(W_L')^\perp}.
    \]
    Choosing orthonormal bases of $(W_R')^\perp$ and $(W_L')^\perp$, we identify these maps with matrices of size $\widetilde n \times \widetilde n$ and $k'\times \widetilde n$, respectively, where $\widetilde n:=n-k+k'$.
    Since $W_R'\subseteq\ker A$ and $W_L'\subseteq\ker A^*$,
    \[
        \dim\ker\widetilde A=k',
        \qquad
        \splus(\widetilde A)=\splus(A)\geq\gamma.
    \]
    Moreover,
    \[
        \ker\widetilde A
        =
        \ker A\cap(W_R')^\perp.
    \]
    As $W_R'=\ker(Q_j|_{\ker A})$, the restriction of $Q_j|_{\ker A}$ to
    $\ker A\cap(W_R')^\perp$ has precisely the nonzero singular values of
    $Q_j|_{\ker A}$. Therefore
    \[
        \smin(\widetilde Q_j|_{\ker\widetilde A})
        =
        \splus(Q_j|_{\ker A})
        \geq
        \smin(Q_j|_{W_R})
        \geq\eta,
    \]
    where the middle inequality follows from the Courant-Fischer min-max characterization of singular values. Similarly,
    \[
        \smin(\widetilde Q_i|_{\ker\widetilde A^*})\geq\eta.
    \]
    Hence \cref{lem:kernel-removal} gives
    \[
        \smin(\widetilde A+\widetilde Q_i^*\widetilde Q_j)
        \geq\frac{\eta^2\gamma}{3}.
    \]

    Finally, $A+Q_i^*Q_j$ vanishes on $W_R'$, and $(A+Q_i^*Q_j)^*$ vanishes on $W_L'$.
    Thus, its restriction from $(W_R')^\perp$ to $(W_L')^\perp$ is exactly $\widetilde A+\widetilde Q_i^*\widetilde Q_j$, which is invertible.
    Consequently,
    \[
        \ker(A+Q_i^*Q_j)=W_R',
    \]
    and thus
    \[
        \dim\ker(A+Q_i^*Q_j)=k-k',
        \qquad
        \splus(A+Q_i^*Q_j)\geq\frac{\eta^2\gamma}{3}.\qedhere
    \]
\end{proof}

Observe that the condition in \cref{lem:kernel-peeling} that the subspace design should be of size $m>\Delta$ simply ensures that the constraint of being a subspace design is non-trivial.

In most of our applications we will only need the above kernel peeling lemma for Hermitian matrices $A$. We make the following remark about this case:
\begin{remark}[Positive semidefinite perturbations]
    \label{rem:psd-kernel-peeling}
    If $A$ in \cref{lem:kernel-peeling} is Hermitian, then the two indices can be chosen equal: there exists $i\in[m]$ such that
    \[
        \dim\ker(A+Q_i^*Q_i)=k-k',
        \qquad
        \splus(A+Q_i^*Q_i)\geq\frac{\eta^2\gamma}{3}.
    \]
    Indeed, since $\ker(A)=\ker(A^*)$, we may choose $W_R=W_L$ in the proof of \cref{lem:kernel-peeling}.
    The condition $m>\Delta$ guarantees an index $i$ with $\smin(Q_i|_{W_R})\geq\eta$, and the remainder of the proof applies unchanged.
    Thus it suffices to consider the $m$ positive semi-definite perturbations $Q_i^*Q_i$ rather than all $m^2$ products $Q_i^*Q_j$.

    For general $A$, the same conclusion holds if $m>2\Delta$: at most $\Delta$ indices fail the required bound on each of $W_R$ and $W_L$, so some index satisfies both bounds.
\end{remark}

\subsection{Explicit non-robust construction based on Wronskian evaluation maps}

We first describe a simplified variant over $\C$ of the explicit subspace design due to Guruswami and Kopparty \cite{GuruswamiKopparty2016ExplicitSubspaceDesigns} that fits into our framework.
While their construction is based on the folded Wronskian over finite fields, it was extended to arbitrary fields and their parameter analysis was improved by Brakensiek, Chen, Dhar and Zhang \cite{BrakensiekChenDharZhang2026}.
We give a construction based on the standard Wronskian and mostly follow the analysis of the latter work.
We observe that the construction can be performed space-efficiently.
Since our analysis is similar to what is already known from \cite{BrakensiekChenDharZhang2026}, we defer proofs to the appendix.

For $x=(x_0,\ldots,x_{n-1})\in\C^n$, let
\[
    f_x(z) :=x_0+x_1z+\cdots+x_{n-1}z^{n-1}.
\]
For $1\le k\le n$ and $a\in\C$, define the linear map
\[
    Q_{k,a}:\C^n\longrightarrow\C^k,
    \qquad
    Q_{k,a}(x)=
    \bigl(f_x(a),f_x'(a),\ldots,f_x^{(k-1)}(a)\bigr)^T
\]
and note that $\rank(Q_{k,a})=k$.

\begin{restatable}[Wronskian subspace design]{theorem}{wronskiandesign}
    \label{thm:wronskian-subspace-design}
    For every $1\leq k <n$, and every set $T_k\subseteq \mathbb R$ of $m>k(n-k)$ points, the family \vspace{-.2cm}
    \[
        \mathcal H_{n,k} := \{H_a:a\in T_k\},
        \qquad\text{where}\qquad
        H_a:=\ker(Q_{k,a}),
    \]
    is a $(k,\;k(n-k))$-strong subspace design.
    Moreover, if $T_k=\{0,1,\dots,m-1\}$, then the entries of all $Q_{k,a}$ are of polynomial bit complexity and computable in space $O(\log nm)$.
\end{restatable}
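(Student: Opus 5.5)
The plan is the standard Wronskian vanishing-order argument, in the form used by Brakensiek, Chen, Dhar and Zhang. Fix a $k$-dimensional subspace $W\subseteq\C^n$ and identify it with a $k$-dimensional space $V_W=\{f_x : x\in W\}$ of polynomials of degree at most $n-1$. We must show
\[
    \sum_{a\in T_k}\dim(H_a\cap W)\le k(n-k).
\]
Note that $H_a\cap W$ consists exactly of those $x\in W$ with $f_x$ vanishing at $a$ to order at least $k$.

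\textbf{Step 1.} Choose a basis $g_1,\dots,g_k$ of $V_W$ adapted to the point $a$: after Gaussian elimination the vanishing orders $\ord_a(g_1)<\cdots<\ord_a(g_k)$ are distinct. Then $\dim(H_a\cap W)$ equals the number of basis elements with $\ord_a(g_\ell)\ge k$.

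\textbf{Step 2.} Consider the Wronskian $\mathcal W=\det\bigl(g_\ell^{(t)}\bigr)_{\ell\in[k],\,0\le t<k}$. It is, up to a nonzero scalar, independent of the basis. It is not identically zero, since over characteristic zero the Wronskian of linearly independent polynomials is nonzero. Its degree is at most $k(n-1)-\binom{k}{2}$.

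\textbf{Step 3.} Bound the order of $\mathcal W$ at $a$ from below. Using the adapted basis, the standard computation gives
\[
    \ord_a(\mathcal W)=\sum_\ell \ord_a(g_\ell)-\binom{k}{2}.
\]
Since the orders are distinct nonnegative integers, $\ord_a(g_\ell)\ge \ell-1$ for every $\ell$. If $d_a:=\dim(H_a\cap W)$, then the top $d_a$ orders are at least $k$, and in fact at least $k,k+1,\dots,k+d_a-1$ by distinctness. Comparing with the baseline $0,1,\dots,k-1$, each of the top $d_a$ elements gains at least $k-d_a$ over its baseline, so
\[
    \ord_a(\mathcal W)\ge d_a(k-d_a).
\]
This alone is too weak when $d_a=k$. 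The sharper count instead uses that $V_W$ contains a $d_a$-dimensional subspace divisible by $(z-a)^k$. The cleanest route I would take, following BCDZ, is a refined bound $\ord_a(\mathcal W)\ge$ (something growing linearly in $d_a$ with slope at least about $k$). This must then be summed against the degree bound; the main obstacle is getting the constant to land exactly at $k(n-k)$.

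\textbf{Alternative for Step 3.} If the refined order bound is messy, I would switch to the following dimension count. For distinct points $a$, the subspaces $H_a\cap W$ correspond to polynomials divisible by $(z-a)^k$. Since $\ord_a(g_\ell)\ge \ell-1$ always, the generic gain from one divisible element is at least $1$ per unit of $d_a$. So I would aim to show that the total excess
\[
    \sum_a\Bigl(\sum_\ell \ord_a(g_\ell)-\binom{k}{2}\Bigr)\le \deg \mathcal W\le k(n-k)
\]
holds after a reduction to the normalized degree bound (a $k$-dim space of degree $<n$ polynomials has Wronskian of degree at most $k(n-k)$). Combined with $\ord_a(\mathcal W)\ge d_a$ from Step 3, summing over $a\in T_k$ yields $\sum_a d_a\le k(n-k)$. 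This is the strong design bound; the hypothesis $m>k(n-k)$ guarantees $\Delta<m$.

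\textbf{Step 4 (complexity).} For $T_k=\{0,\dots,m-1\}$, the entries of $Q_{k,a}$ are
\[
    (Q_{k,a})_{t,j}=\frac{j!}{(j-t)!}\,a^{j-t}.
\]
These are integers of $O(n\log(nm))$ bits. They are computable bit by bit in logspace, since products and powers of integers are $\mathsf{L}$-computable (iterated multiplication is in logspace by Chiu–Davida–Litow / Hesse–Allender–Barrington). This step is routine, modulo citing that result.
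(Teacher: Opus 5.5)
Your ``Alternative'' has the right skeleton, and it is the paper's. The Wronskian $\mathcal W=\det(Q_{k,z}N)$ is a nonzero polynomial, it vanishes to order at least $d_a=\dim(H_a\cap W)$ at each $a$, and $\sum_a d_a\le\sum_a\ord_a\mathcal W\le\deg\mathcal W\le k(n-k)$. As written, however, neither of the two inequalities this rests on is established.

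First, the degree bound. What you prove in Step~2 is $k(n-1)-\binom{k}{2}$, which exceeds $k(n-k)$ by $\binom{k}{2}$, so summing against it cannot give $\Delta=k(n-k)$. This is exactly the ``constant'' obstacle you flag, and the normalized bound you later invoke is only asserted. The missing idea is to do your Gaussian elimination at infinity instead of at $a$. Choose a basis of $V_W$ with strictly decreasing degrees $d_1>\cdots>d_k$, so $d_\ell\le n-\ell$. Since $\deg g_\ell^{(t)}\le d_\ell-t$, every term of the determinant expansion has degree at most $\sum_\ell d_\ell-\binom{k}{2}\le\sum_{\ell=1}^k(n-\ell)-\binom{k}{2}=k(n-k)$. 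This is precisely the paper's degree lemma.

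Second, the order bound in Step~3 is wrong, not merely weak. The claim $\ord_a\mathcal W\ge d_a(k-d_a)$ fails for the monomials $1,z,z^2,z^3,z^5$ at $a=0$: here $k=5$ and $d_a=1$, but $\ord_0\mathcal W=11-10=1<4$. Your own setup gives the correct count. With orders sorted increasingly, the element in position $k-d_a+i$ has baseline $k-d_a+i-1$ but order at least $k+i-1$, so it gains $d_a$. Hence $\ord_a\mathcal W\ge d_a^2\ge d_a$, including when $d_a=k$. So no refined bound ``with slope about $k$'' is needed; indeed $\ord_a\mathcal W\ge k\,d_a$ already fails for $1,z^2$ with $k=2$. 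Your final line ``$\ord_a(\mathcal W)\ge d_a$ from Step 3'' only becomes true after this correction. You also only need the inequality $\ord_a\mathcal W\ge\sum_\ell\ord_a(g_\ell)-\binom{k}{2}$, which is immediate from expanding the determinant.

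For comparison, the paper gets $\ord_a\mathcal W\ge d_a$ with no adapted basis or order formula. It picks constant invertible $P,Q$ with $P(Q_{k,a}N)Q=\mathrm{diag}(I_{k-d_a},0)$. The last $d_a$ rows of $P(Q_{k,z}N)Q$ then vanish at $z=a$, so $(z-a)^{d_a}$ divides $\det(P)\,\mathcal W(z)\det(Q)$. Your corrected route is more work but yields the stronger $\sum_a d_a^2\le k(n-k)$. Your Step~4 matches the paper.
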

We include a self-contained proof of the above theorem in \cref{sec:wronskian-subspace}.

\subsection{Subspace designs need not be robust}\label{sec:exactvsrobust}

Recall again that every robust subspace design induces an ordinary subspace design by taking the kernels of its maps. In this subsection, we show that the converse fails in a quantitative sense: An ordinary strong subspace design may have arbitrarily poor robustness.
In fact our example is a special case of the Wronskian design from \cref{thm:wronskian-subspace-design}.
This result qualitatively matches an argument in \cite[Theorem 1.8]{li2025linear} showing that dimension expanders can be arbitrarily poor quantum expanders.

\begin{theorem}\label{thm:subspace-design-need-not-be-robust}
    For every $n\geq2$ and every $\eta=\eta(n)\in(0,1]$, there exist maps
    \[
        Q_1,\ldots,Q_n:\C^n\to\C
    \]
    with $\|Q_i\|_{\op}= 1$ such that $\{\ker(Q_1),\ldots,\ker(Q_n)\}$ is a strong $(1,n-1)$-subspace design, while~$\{Q_1,\ldots,Q_n\}$ is not a $(1,\eta,n-1)$-weak robust subspace design.
\end{theorem}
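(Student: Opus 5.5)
We will construct the family explicitly as $n$ linear functionals whose kernels are pairwise distinct hyperplanes, and which all nearly vanish on a common unit vector. For $k=1$, a family of hyperplanes $\{H_1,\dots,H_n\}$ is a strong $(1,n-1)$-subspace design exactly when no line $W$ lies in all of them. Indeed, $\dim(H_i\cap W)\in\{0,1\}$, so the sum counts the hyperplanes containing $W$. It therefore suffices to have $\bigcap_i H_i=\{0\}$, i.e.\ the $n$ functionals $Q_i$ should be linearly independent. Robustness fails if there is a unit vector $w$ with $|Q_i w|\le\eta$ for all $i$. Then $\smin(Q_i|_{\mathrm{span}(w)})=|Q_iw|\le\eta$ for all $n$ indices, which exceeds $\Delta=n-1$.

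Following the remark that the example is a special case of the Wronskian design, take $k=1$, where $Q_{1,a}(x)=f_x(a)=\sum_j x_j a^j$, and normalise to unit operator norm. Concretely, set $Q_i:=v_i^*/\|v_i\|$ with $v_i:=(1,\bar a_i,\dots,\bar a_i^{\,n-1})$ for $n$ distinct real points $a_1,\dots,a_n$. Then $\|Q_i\|_\op=1$. The $v_i$ are linearly independent because their Vandermonde determinant is nonzero. Equivalently, a nonzero polynomial of degree $<n$ cannot vanish at $n$ distinct points. This gives the strong $(1,n-1)$-design property, consistent with \cref{thm:wronskian-subspace-design}, which already allows $m>k(n-k)=n-1$ points.

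To break robustness, choose the points clustered: $a_i:=1+\delta i$ for a small $\delta>0$ to be fixed depending on $\eta$ and $n$. Take $w$ to be the unit vector orthogonal to $v_1$, for instance $w\propto(1,-1,0,\dots,0)$ adjusted appropriately. A cleaner option is $w=(e_1-e_2)/\sqrt2$, so that $Q_iw=(1-a_i)/(\sqrt2\|v_i\|)$ and hence $|Q_iw|\le\delta n/\sqrt2$, using $\|v_i\|\ge1$. Choosing $\delta\le\eta/n$ gives $|Q_iw|\le\eta$ for every $i$. Thus all $n>n-1$ maps are bad for $W=\mathrm{span}(w)$, and $\{Q_i\}$ is not a $(1,\eta,n-1)$-weak robust design.

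The main step to check is that the design property survives the clustering, and it does: distinctness of the $a_i$ is all the Vandermonde argument needs, and it holds for every $\delta>0$. The only remaining care is bookkeeping. We need $\|Q_i\|_\op=1$ exactly, which the normalisation guarantees, and the estimate on $|Q_iw|$ must use the lower bound $\|v_i\|\ge1$ from the first coordinate. If one insists on the literal points $T_1=\{0,\dots,n-1\}$ instead, this approach does not apply, and an explicit nearly-annihilated vector would be harder to find. The statement only asks for existence, so the clustered points suffice.
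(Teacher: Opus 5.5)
Your proof is correct and follows essentially the same route as the paper: normalized evaluation functionals $x\mapsto f_x(a)$ at $n$ distinct clustered points, with the design property coming from root counting (Vandermonde), and non-robustness coming from a fixed vector whose polynomial vanishes at the cluster point. The only cosmetic difference is that the paper clusters the points at $\{0,\varepsilon,\dots,(n-1)\varepsilon\}$ and takes $w=e_2$ (so $f_w(z)=z$), whereas you cluster near $1$ and take $f_w(z)=(1-z)/\sqrt2$.
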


\begin{proof}
    For $a\in\C$, consider
    \[
        Q_{1,a}(x_0,\ldots,x_{n-1})
        =
        x_0+x_1a+\cdots+x_{n-1}a^{n-1},
        \qquad
        \widehat Q_a:=\frac{Q_{1,a}}{\|Q_{1,a}\|_{\op}}.
    \]
    Choose $\varepsilon>0$ such that
    \[
        (n-1)\varepsilon <\eta
        \qquad\text{and let}\qquad
        T:=\{0,\varepsilon,2\varepsilon,\ldots,(n-1)\varepsilon\}.
    \]

    While it follows already from \cref{thm:wronskian-subspace-design}, we show that $\{\ker(\widehat Q_{a}):a\in T\}$ is a $(1,n-1)$ strong subspace design. To this end, let $W=\operatorname{span}\{y\}$ be any one-dimensional subspace of $\C^n$, where $y\neq0$, and associate with $y$ the nonzero polynomial
    \[
        f_y(z)=y_0+y_1z+\cdots+y_{n-1}z^{n-1}.
    \]
    Since
    \[
        \ker(\widehat Q_{a})\cap W\neq\{0\}
        \quad\Longleftrightarrow\quad
        f_y(a)=0,
    \]
    and $f_y$ has degree at most $n-1$, this can occur for at most $n-1$ points $a\in T$. For one-dimensional $W$, weak and strong subspace designs coincide, proving the claim.

    On the other hand, let
    \[
        W:=\operatorname{span}\{e_2\},
        \qquad
        e_2=(0,1,0,\ldots,0)^T.
    \]
    For every $a\in T$,
    \[
        \smin(\widehat Q_{a}|_W)
        =
        |\widehat Q_{a} e_2|
        =
        \frac{|a|}{\|Q_{1,a}\|_{\op}}
        \leq |a|
        \leq (n-1)\varepsilon
        <\eta,
    \]
    where we used that $\|Q_{1,a}\|_{\op}\geq1$. Thus all $n$ maps are bad on the same one-dimensional subspace $W$, and the family $\{\widehat Q_a:a\in T\}$ is not a $(1,\eta,n-1)$-weak robust subspace design.
\end{proof}

\section{Probabilistic robust construction}\label{sec:probabilistic-construction}

In this section we detail our randomized construction of robust subspace designs.
Our analysis borrows tools related to the Grassmannian and random matrix theory, which we introduce first.

\subsection{The Grassmannian manifold of subspaces}

Recall that we let $\Gr(k,n)$ denote the complex Grassmannian of $k$-dimensional subspaces of
$\C^n$.
We equip $\Gr(k,n)$ with the projection metric
\[
    d(W,V):=\norm{\Pi_W-\Pi_V}_{\op}.
\]
We will use a net with respect to this metric.
The following lemma follows from \cite{Szarek1998MetricEntropyHomogeneousSpaces}, and is stated in e.g.~\cite{mangoubi2022private}.

\begin{lemma}[Grassmannian net]
    \label{lem:net}
    There exists an absolute constant $C_{\mathrm{net}}>0$ such that, for every $0<\rho<1$, the complex Grassmannian $\Gr(k,n)$ admits a $\rho$-net $\{V_1,\ldots,V_M\}$ with respect to the projection metric, of cardinality
    \[
        M \leq \left(\frac{C_{\mathrm{net}}}{\rho}\right)^{2k(n-k)}.
    \]
    That is, for every $W\in\Gr(k,n)$, there exists $i\in[M]$ such that
    \[
        d(V_i,W)\leq \rho.
    \]
\end{lemma}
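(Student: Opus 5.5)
We construct the net by volumetric packing and pass to it through a Lipschitz parametrization of $\Gr(k,n)$ by matrices. Because the lemma is attributed to Szarek, the cleanest route is to cite it. If a self-contained argument is wanted, we proceed as follows.

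\textbf{Local charts.} Every $W\in\Gr(k,n)$ can be written, after a suitable choice of coordinate $k$-subset, as the column span of $\binom{I_k}{X}$ with $X\in\C^{(n-k)\times k}$ and $\|X\|_{\op}\le 1$. To see this, pick a $k\times k$ minor of maximal volume (absolute determinant) of an orthonormal basis matrix $N_W$. By Cramer's rule, every entry of $X$ then has modulus at most $1$, which gives $\|X\|_{\op}\le\sqrt{k(n-k)}$. A polynomially bounded norm would only change the constant inside a logarithm, so this suffices. This yields $\binom nk$ charts. The count $\binom nk\le 2^n$ is not of the form $C^{k(n-k)}$ when $k$ is small, though, so instead we use a single chart around a fixed reference subspace. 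That is, we use local charts only at scale $\rho$, which leads to the covering argument in the next two steps.

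\textbf{Lipschitz estimate.} For graphs $W_X,W_Y$ of $X,Y$ over the same coordinate subspace, we show $\|\Pi_{W_X}-\Pi_{W_Y}\|_{\op}\le\|X-Y\|_{\op}$. This holds because $\Pi_W$ is the orthogonal projection onto the column space of $\binom{I}{X}$, and the sine of the largest principal angle is bounded by the difference of the slopes.

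\textbf{Volumetric covering.} Take a maximal $\rho$-separated set $\{V_i\}$ in $\Gr(k,n)$ under the projection metric. By maximality it is automatically a $\rho$-net, so it remains to bound its size. The balls of radius $\rho/2$ around the $V_i$ are disjoint. Measure them with the unitarily invariant probability measure on $\Gr(k,n)$. By invariance, each ball has the same measure $\mu(\rho/2)$, and so $M\le 1/\mu(\rho/2)$.

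\textbf{Main obstacle: lower-bounding $\mu(r)$.} The key step, and the main obstacle, is to show $\mu(r)\ge (cr)^{2k(n-k)}$. We work near $V_0=\mathrm{span}(e_1,\dots,e_k)$, where the chart map $X\mapsto W_X$ is a diffeomorphism. The invariant density in this chart is proportional to $\det(I+X^*X)^{-n}$. This factor is bounded below by $2^{-nk}$ on $\|X\|_{\op}\le 1$, but that bound carries an $n$-dependence that would spoil the exponent $2k(n-k)$. The remedy is to normalize by the total mass of the chart. Integrating the density over all of $\C^{(n-k)\times k}$ gives a Selberg-type integral, and the ratio of the mass of the small ball $\{\|X\|_{\op}\le r\}$ to this total is at least $(cr)^{2k(n-k)}$. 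One checks this by comparison with the Gaussian measure: the image of a complex Gaussian $G\in\C^{n\times k}$ is uniform on $\Gr(k,n)$. Then $\Pr[\|G_2G_1^{-1}\|_{\op}\le r]$, where $G_1$ is the top $k\times k$ block and $G_2$ the bottom $(n-k)\times k$ block, can be lower-bounded through independent events: $\smin(G_1)\gtrsim 1/\sqrt k$, which has constant probability, and $\|G_2\|_{\op}\le r/\sqrt k$, which has probability at least $(cr/\sqrt{k(n-k)})^{2k(n-k)}$ by a small-ball bound on the $2k(n-k)$ real Gaussian entries.

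The remaining difficulty is the leftover polynomial factor $\sqrt{k(n-k)}$ inside the power. Removing it requires the sharper small-ball estimate for the operator norm of $G_2$ (rather than entrywise), namely $\Pr[\|G_2\|_{\op}\le t]\ge (ct/\sqrt{n})^{2k(n-k)}$. Here one uses $\sqrt n$, not $\sqrt{k(n-k)}$, because $\|G_2\|_{\op}$ concentrates at about $\sqrt n+\sqrt k$. If this refinement proves hard, we fall back on Szarek's theorem, which handles exactly this computation.
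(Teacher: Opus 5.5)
The paper gives no proof of this lemma. It simply cites Szarek (as stated in Mangoubi et al.), so your fallback of citing it is exactly the paper's route. Your self-contained sketch, however, has a genuine gap at the step you flag, and the refinement you propose would not close it. Your lower bound on $\mu(r)$ uses $\|G_2G_1^{-1}\|_{\op}\le\|G_2\|_{\op}/\smin(G_1)$ together with the \emph{typical} event $\smin(G_1)\gtrsim 1/\sqrt k$. This forces you to require $\|G_2\|_{\op}\lesssim r/\sqrt k$. Even granting the operator-norm small-ball bound $\Pr[\|G_2\|_{\op}\le t]\ge (ct/\sqrt n)^{2k(n-k)}$, you only get $\mu(r)\ge (cr/\sqrt{nk})^{2k(n-k)}$: the polynomial factor moves, it does not disappear. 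The case $k=1$ shows why this strategy is inherently lossy. There $\mu(r)=r^{2(n-1)}$ exactly, while $\Pr[|g_1|\ge c]\cdot\Pr[\|G_2\|\le cr]\approx (c'r/\sqrt n)^{2(n-1)}$. The missing $\sqrt n$ is recovered only on the atypical event $|g_1|\gtrsim\sqrt n$, whose probability $e^{-n}$ is itself of the affordable form $c^{2(n-1)}$.

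The missing idea is to pay for a large deviation of $G_1$ rather than use a constant-probability event.
\begin{itemize}
\item First reduce to $k\le n/2$, using that $W\mapsto W^\perp$ is an isometry $\Gr(k,n)\to\Gr(n-k,n)$ for the projection metric; then $kn\le 2k(n-k)$.
\item Shift the Gaussian density and use the symmetry $E\mapsto -E$. This gives $\Pr[\|G_1-6\sqrt n\,I\|_{\op}\le 3\sqrt n]\ge e^{-36kn}\Pr[\|E\|_{\op}\le 3\sqrt n]\ge \frac12 e^{-72k(n-k)}$ for a standard $k\times k$ complex Gaussian $E$. On this event $\smin(G_1)\ge 3\sqrt n$.
\item It then suffices that $\|G_2\|_{\op}\le 3r\sqrt n$. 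The substitution $Z=rY$ in the density gives $\Pr[\|G_2\|_{\op}\le 3r\sqrt n]\ge r^{2k(n-k)}\Pr[\|G_2\|_{\op}\le 3\sqrt n]\ge\frac12 r^{2k(n-k)}$.
\end{itemize}
Both constant-probability bounds follow from the paper's Gaussian operator-norm lemma. Hence $\mu(r)\ge\frac14(e^{-36}r)^{2k(n-k)}$, which completes your packing argument with an absolute $C_{\mathrm{net}}$.

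Two smaller points. First, no coordinate chart gives $\|X\|_{\op}\le 1$ in general: take $k=1$ and the all-ones vector. Second, your reason for abandoning the multi-chart cover is wrong, since $\binom nk\le e^{n}\le e^{2k(n-k)}$ for $1\le k<n$. That route is a complete elementary proof of the weaker bound $M\le(C\sqrt{k(n-k)}/\rho)^{2k(n-k)}$. It combines max-volume charts with $\|X\|_F\le\sqrt{k(n-k)}$, your (correct) Lipschitz bound, and a volumetric cover of each chart's operator-norm ball. This weaker bound would suffice for the paper's only use of the lemma, where only $\log(C_{\mathrm{net}}/\rho)=O(\log(nk))$ matters, but not for the absolute constant as stated.
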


Finally, we will also use that the singular values of a projected matrix are Lipschitz continuous on the Grassmannian manifold.

\begin{lemma}[Lipschitz continuity]\label{lem:lipschitz}
    Let $Q:\C^n\to\C^k$ and $W,V\in\Gr(k,n)$.  For every $j\in[k]$,
    \[
        \abs{\sigma_j(\restr QW)-\sigma_j(\restr QV)}
        \leq \norm Q_{\op}\,\norm{\Pi_W-\Pi_V}_{\op}.
    \]
\end{lemma}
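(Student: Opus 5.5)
The plan is to reduce the statement to the standard singular value perturbation bound recorded in the preliminaries, $|\sigma_j(A)-\sigma_j(B)|\le\|A-B\|_\op$, applied to the $k\times n$ matrices $Q\Pi_W$ and $Q\Pi_V$. From the preliminaries we also have the identity
\[
    \sigma_j(\restr QW)=\sigma_{d-k+j}(Q\Pi_W)
\]
for every $j\in[k]$, with $d=k$ here, so that $\sigma_j(\restr QW)=\sigma_j(Q\Pi_W)$. Here the ordering of the $\min\{k,n\}=k$ singular values of $Q\Pi_W$ is nondecreasing, which matches the ordering of the singular values of the square matrix $QN_W$. The same identity holds with $V$ in place of $W$.

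The key step is then the following chain:
\[
    \abs{\sigma_j(\restr QW)-\sigma_j(\restr QV)}
    =\abs{\sigma_j(Q\Pi_W)-\sigma_j(Q\Pi_V)}
    \le\norm{Q(\Pi_W-\Pi_V)}_\op
    \le\norm Q_\op\norm{\Pi_W-\Pi_V}_\op .
\]
The first inequality is the perturbation bound, and the second is submultiplicativity of the operator norm.

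The only point requiring care is the index bookkeeping. I have to check that both $Q\Pi_W$ and $Q\Pi_V$ are $k\times n$ matrices with $k\le n$, so that the same index $j$ refers to the same position in both lists. This holds since $\dim W=\dim V=k$. I also have to check that the identity from the preliminaries applies, which needs $\dim W\le d=k$, and that is true. There is no real obstacle beyond verifying this alignment, so I expect the proof to be a direct two-line application of these earlier facts.
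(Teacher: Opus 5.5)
Your proposal is correct and matches the paper's proof: both use $\sigma_j(\restr QW)=\sigma_j(Q\Pi_W)$ (the preliminaries identity with $d=k$), then the singular value perturbation (Weyl) bound applied to $Q\Pi_W$ and $Q\Pi_V$, and finally submultiplicativity of the operator norm. Your index-alignment check, that both matrices are $k\times n$ with $k\le n$, is the only bookkeeping needed.
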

\begin{proof}
    Since $Q$ has a co-domain of dimension $k\leq n$ and $\dim W=\dim V=k$, we have $\sigma_j(Q|_W) = \sigma_j(Q\Pi_W)$ for every $j\leq k$, and similarly for $V$.
    Hence Weyl's inequality gives
    \[
        |\sigma_j(Q|_W)-\sigma_j(Q|_V)|
        = |\sigma_{j}(Q\Pi_W)-\sigma_{j}(Q\Pi_V)|
        \leq \|Q(\Pi_W-\Pi_V)\|_{\op},
    \]
    which is at most
    \[
        \|Q\|_{\op}\|\Pi_W-\Pi_V\|_{\op}.\qedhere
    \]
\end{proof}

\subsection{Random Gaussian matrix theory}

A standard complex Gaussian random variable, denoted $g\sim\mathcal N_\C(0,1)$,
has density $\pi^{-1}e^{-\abs z^2}$ on $\C$.  Equivalently,
$g=(X+iY)/\sqrt2$, where $X,Y$ are independent real standard Gaussians.
The following lemma shows concentration of the operator norm of $G$.
\begin{lemma}[{Gaussian operator norm \cite[Proposition 6.33]{aubrun2017alice}}]\label{lem:gaussian-op}
    Let $G\in\C^{k\times n}$, with $k\le n$, have independent entries distributed as
    $\mathcal N_\C(0,1/n)$.  Then
    \[
        \Prob[\norm G_{\op}>3] \le \frac{1}{2} e^{-n}.
    \]
\end{lemma}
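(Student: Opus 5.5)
We will combine a bound on the expected operator norm with Gaussian concentration of measure. Throughout, identify $G$ with a function of $2kn$ independent real standard Gaussians by writing $G=(X+iY)/\sqrt{2n}$ with $X,Y\in\R^{k\times n}$ having i.i.d.\ $\mathcal N(0,1)$ entries. Set $f(X,Y):=\norm{G}_{\op}$.

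\emph{Step 1: Lipschitz constant.} The map $f$ is a norm composed with a linear map. Since $\norm{\cdot}_{\op}\le\norm{\cdot}_F$,
\[
\abs{f(X,Y)-f(X',Y')}\le \tfrac{1}{\sqrt{2n}}\,\norm{(X-X',Y-Y')}_F .
\]
So $f$ is $L$-Lipschitz with $L=1/\sqrt{2n}$ for the Euclidean metric on $\R^{2kn}$. It is also convex.

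\emph{Step 2: Expectation bound.} We will show $\E\norm{G}_{\op}\le (\sqrt k+\sqrt n)/\sqrt n\le 2$. For complex Gaussian matrices this is the Gordon/Slepian-type comparison, in the form established by Haagerup--Thorbj{\o}rnsen: with unit-variance complex entries, $\E\norm{G}_{\op}\le\sqrt k+\sqrt n$. A fallback is to compare $\sup_{u,v}\mathrm{Re}\,\langle u,Gv\rangle$ over complex unit spheres against the process $\langle g,u\rangle+\langle h,v\rangle$ via Slepian's lemma, which gives the same bound up to the variance normalization.

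\emph{Step 3: Concentration.} The Gaussian concentration inequality gives
\[
\Prob[f\ge \E f+t]\le e^{-t^2/(2L^2)}=e^{-nt^2}.
\]
With $\E f\le 2$ and $t=1$, this yields $\Prob[\norm G_{\op}>3]\le e^{-n}$, which is exactly the claimed bound up to the factor $\tfrac12$.

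\emph{Main obstacle: the factor $\tfrac12$.} The leading constant is the only delicate point. To obtain it, I would use Gaussian isoperimetry (Borell, Sudakov--Tsirelson) in its median form,
\[
\Prob[f\ge M+t]\le 1-\Phi(t/L)\le \tfrac12 e^{-t^2/(2L^2)},
\]
where $M$ is the median of $f$. The remaining task is to show $M\le 2$. Two routes are available:
\begin{itemize}
\item Use that for a convex Lipschitz function of a Gaussian vector, the median does not exceed the mean by more than what is needed. The crude estimate $M\le \E f+L\sqrt{\pi/2}$ loses an $O(1/\sqrt n)$ term, so it is not sufficient on its own.
\item Sharpen the expectation bound when $k<n$, since $(\sqrt k+\sqrt n)/\sqrt n<2$ leaves slack. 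When $k=n$, exploit the strict inequality in Haagerup--Thorbj{\o}rnsen's bound or use a direct small-$n$ check.
\end{itemize}
If neither route closes the gap cleanly, I would fall back on citing \cite[Proposition 6.33]{aubrun2017alice} for the constant, since for the paper's downstream union bounds any bound of the form $Ce^{-n}$ suffices.
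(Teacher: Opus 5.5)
The paper does not prove this lemma; it imports it from Aubrun--Szarek. Measured against the statement, your proposal has a genuine gap at exactly the point you flag.

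Steps 1--3 are fine. From $\E\norm{G}_{\op}\le 2$ they give $\Prob[\norm{G}_{\op}>3]\le e^{-n}$. This weaker bound would indeed suffice downstream, turning $\frac m2e^{-n}$ into $me^{-n}$ in the Gaussian design theorem. The factor $\tfrac12$, however, requires the median $M$ of $f=\norm{G}_{\op}$ to satisfy $M\le 2$, and neither of your routes gets there.

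\emph{Route 1.} It loses $L\sqrt{\pi/2}=\Theta(n^{-1/2})$ in the threshold. No sharper Gaussian tail can absorb this, since $1-\Phi(\sqrt{2n}-\sqrt{\pi/2})$ exceeds $\tfrac12e^{-n}$ by a factor $e^{\Theta(\sqrt n)}/\sqrt n$.

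\emph{Route 2.} It cannot supply $\Theta(n^{-1/2})$ of slack either. Already for $k=n-1$ the slack is $1-\sqrt{k/n}\approx\frac1{2n}$. For $k=n$ the true gap $2-\E\norm{G}_{\op}$ is only of Tracy--Widom order $n^{-2/3}$. A small-$n$ check says nothing about large $n$.

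\emph{The missing idea.} You need Kwapie\'n's inequality: for any convex function of a Gaussian vector, the median is at most the mean. Since $f$ is convex in $(X,Y)$, this gives $M\le\E f\le 2$ with no quantitative mean--median comparison at all. The isoperimetric form with $t=1$ and $L=1/\sqrt{2n}$ then yields $\Prob[\norm{G}_{\op}>3]\le 1-\Phi(\sqrt{2n})\le\tfrac12e^{-n}$.

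\emph{A secondary issue.} Your Slepian fallback for Step 2 does not go through verbatim over complex spheres. Take $X_{u,v}=\mathrm{Re}\langle u,Gv\rangle$ and $Y_{u,v}=\mathrm{Re}\langle g,u\rangle+\mathrm{Re}\langle h,v\rangle$. The Sudakov--Fernique hypothesis $\E(X_s-X_t)^2\le\E(Y_s-Y_t)^2$ fails for the pair $(u,v)$, $(e^{i\theta}u,e^{-i\theta}v)$ with $0<\theta<\pi/2$. Already for $k=n=1$ the two sides are $2\sin^2\theta$ and $2(1-\cos\theta)$, and the first is larger.

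The cause is that the real-case identity $\|u-u'\|^2+\|v-v'\|^2-\|uv^*-u'v'^*\|_F^2=2(1-\langle u,u'\rangle)(1-\langle v,v'\rangle)$ picks up, in the complex case, a cross term in $\mathrm{Im}\langle u,u'\rangle\,\mathrm{Im}\langle v,v'\rangle$ of uncontrolled sign. Your argument needs $\E\norm{G}_{\op}\le(\sqrt k+\sqrt n)/\sqrt n$ with no constant loss. So the sharp complex bound must be taken from a reference that proves the complex case, rather than rederived this way.
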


Now consider a square, standard complex Gaussian matrix $G\in\C^{k\times k}$, and list its singular values
\[
    0\le\sigma_1(G)\le\cdots\le\sigma_k(G).
\]
For $\eta>0$, define
\[
    b_\eta(G)
    :=\left|\set{j\in[k]:\sigma_j(G) \leq \eta}\right|.
\]
In \cref{sec:determinantal-process} we prove the following bound on the concentration of $b_\eta(G)$.

\begin{lemma}[{Concentration of small singular values, \cref{lem:sing-value-count}}]\label{lem:small-count}
    Fix $0 < u \leq \frac{1}{2\sqrt{2}}$.
    If $G_1,\ldots,G_m$ are independent copies of $G \in \mathbb{C}^{k \times k}$, then
    \[
        \Prob\left[\sum_{i=1}^m b_{u/\sqrt{k}}(G_i)
        \geq \frac m4\right]
        \leq e^{-u^2 m/3}.
    \]
\end{lemma}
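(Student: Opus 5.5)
The statement to prove is \cref{lem:small-count}. I would argue through a single-matrix moment generating function bound followed by independence (a Chernoff argument).

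\textbf{Step 1: determinantal structure.} The squared singular values of a standard $k\times k$ complex Gaussian matrix are the eigenvalues of a complex Wishart (Laguerre unitary) ensemble with parameter $0$. This ensemble is a determinantal point process with a Hermitian kernel. By the Hough--Krishnapur--Peres--Vir\'ag theorem \cite{hough2006determinantal}, the count $b_\eta(G)$ of points in $[0,\eta^2]$ is therefore distributed as a sum of independent Bernoulli variables $\xi_\ell\sim\mathrm{Ber}(p_\ell)$. Here the $p_\ell$ are the eigenvalues of the kernel restricted to $[0,\eta^2]$. Consequently
\[
    \E\bigl[e^{t\,b_\eta(G)}\bigr]=\prod_\ell\bigl(1+p_\ell(e^t-1)\bigr)\le \exp\bigl((e^t-1)\,\E b_\eta(G)\bigr).
\]

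\textbf{Step 2: bound the mean.} With $\eta=u/\sqrt k$ we have $\E b_\eta(G)=\int_0^{u^2/k}K_k(x,x)\,dx$, where $K_k$ is the Laguerre kernel. One route is the explicit kernel $K_k(x,x)=\sum_{j<k}L_j(x)^2e^{-x}$, together with $|L_j(x)|\le e^{x/2}$; this gives $K_k(x,x)\le k$ and hence $\E b\le u^2$. An alternative is Edelman's density of $\smin^2$ combined with a direct bound. I aim for $\E b_{u/\sqrt k}(G)\le u^2$, possibly up to a harmless constant.

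\textbf{Step 3: Chernoff bound.} Independence of the $G_i$ gives
\[
    \Prob\Bigl[\sum_i b_i\ge m/4\Bigr]\le \exp\bigl(m\,[(e^t-1)u^2-t/4]\bigr)
\]
for every $t>0$. Since $u^2\le 1/8$, one suitable choice of constant $t$, say $t=\log 2$, gives an exponent of at most $m(u^2-0.17)\le -c\,m$. Because $u^2\le1/8$, this is at most $-u^2m/3$ once the constants are checked, for example via $0.17-u^2\ge 0.045\ge u^2/3$. Alternatively, take $t=\log(1/(4u^2))$, which yields an even stronger decay.

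\textbf{Main obstacle.} The main difficulty is the sharp mean bound in Step 2, namely controlling the Laguerre kernel diagonal near the hard edge uniformly in $k$. The uniform bound $|L_j(x)|\le e^{x/2}$ for $x\ge0$ (Szeg\H{o}) is what I would rely on. The rest of the argument is routine.
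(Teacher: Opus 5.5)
Your proposal is correct and follows the paper's route. You use the HKPV Bernoulli decomposition of the LUE count, the mean bound $\E b_{u/\sqrt k}(G)\le u^2$ obtained from $K(x,x)\le k$ via $e^{-x/2}|L_j(x)|\le 1$, and then a Chernoff bound. The only difference is cosmetic: you bound the moment generating function directly at threshold $m/4$ with $t=\log 2$, and your constant check under $u^2\le 1/8$ does go through. The paper instead applies the multiplicative Chernoff bound at threshold $2u^2m$ and then uses $2u^2m\le m/4$.
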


\subsection{The Gaussian construction}

The following theorem combines the concentration bounds from last section with the Grassmannian net to argue that a family of $m = \poly(n)$ independent Gaussian matrices $\{Q_1,\dots,Q_m\}$ form a robust subspace design with high probability.

For $Q:\C^n\to\C^k$, $W\in\Gr(k,n)$ and $\eta\in(0,1]$, we will use the notation
\[
    b_\eta(Q,W)
    :=\left|\set{j\in[k]:\sigma_j(\restr QW) \leq \eta}\right|.
\]
Equivalently, if $N\in\C^{n\times k}$ has orthonormal columns spanning $W$, then $b_\eta(Q,W)$ is the number of singular values of $QN$ at most $\eta$.

\begin{theorem}[Gaussian strong robust design]\label{thm:gaussian-strong-robust-design}
    There are absolute constants $C,c>0$ with the following property.
    Fix $1\le k< n$ and
    \[
        m \geq C k (n-k) \log(nk).
    \]
    Choose $Q_1,\ldots,Q_m\in\C^{k\times n}$ independently, with entries
    $\mathcal N_\C(0,1/(9n))$.
    Then, with probability at least
    \[
        1-e^{-cm}-\frac{m}{2}e^{-n},
    \]
    the family $\{Q_1,\ldots,Q_m\}$ is a $(k,\frac{1}{18\sqrt{nk}},m/4)$-strong robust subspace design.
\end{theorem}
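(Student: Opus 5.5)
The plan is a standard net argument combining \cref{lem:gaussian-op}, \cref{lem:small-count}, \cref{lem:net} and \cref{lem:lipschitz}. Write $Q_i = G_i/3$ with $G_i$ having i.i.d.\ $\cN_\C(0,1/n)$ entries. By \cref{lem:gaussian-op} and a union bound, with probability at least $1-\frac m2 e^{-n}$ every $\|G_i\|_\op\le 3$, i.e.\ $\|Q_i\|_\op\le1$; call this event $E_{\op}$. The condition $\Delta=m/4<mk$ is automatic.

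Next, fix a subspace $V\in\Gr(k,n)$ with orthonormal basis matrix $N_V$. By unitary invariance of the complex Gaussian distribution, $G_iN_V$ is a $k\times k$ matrix with i.i.d.\ $\cN_\C(0,1/n)$ entries, so $\sqrt n\,G_iN_V$ is a standard square complex Gaussian, and these are independent over $i$. Thus $\sigma_j(Q_i|_V)\le 2\eta$ with $\eta=\frac1{18\sqrt{nk}}$ is equivalent to $\sigma_j(\sqrt n G_iN_V)\le 6\sqrt n\eta = \frac{1}{3\sqrt k}$. Apply \cref{lem:small-count} with $u=1/3\le \frac1{2\sqrt2}$. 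This gives $\Prob[\sum_i b_{2\eta}(Q_i,V)\ge m/4]\le e^{-m/27}$ for each fixed $V$.

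Then take a $\rho$-net $\{V_1,\dots,V_M\}$ from \cref{lem:net} with $\rho=\eta$. On $E_{\op}$, for any $W$ choose $V_\ell$ with $d(W,V_\ell)\le\eta$. By \cref{lem:lipschitz}, $\sigma_j(Q_i|_W)\le\eta$ implies $\sigma_j(Q_i|_{V_\ell})\le 2\eta$. Hence $\sum_i b_\eta(Q_i,W)\le\sum_i b_{2\eta}(Q_i,V_\ell)<m/4$ whenever all net points are good; I may need to adjust so that the strict and non-strict inequalities match $\le m/4$, which is harmless. A union bound gives failure probability at most
\[
M e^{-m/27}\le \exp\bigl(2k(n-k)\log(18C_{\mathrm{net}}\sqrt{nk}) - m/27\bigr).
\]
Choosing $C$ large makes the exponent at most $-m/54$ once $m\ge Ck(n-k)\log(nk)$, since $\log(18C_{\mathrm{net}}\sqrt{nk})=O(\log(nk))$ (handle $nk=2$ by the constant). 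So $c=1/54$ works, and combining with $E_{\op}$ gives the stated probability.

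The main obstacle is the per-subspace tail bound, namely \cref{lem:small-count} itself, which is deferred to the appendix and relies on the determinantal structure of the singular values. Given that lemma, the only delicate points are the order of quantifiers (the Lipschitz transfer needs the operator-norm event, which is independent of $W$) and the bookkeeping of the factor $2$ in the threshold $\eta$ versus $u/\sqrt k$. The constant $18 = 3\cdot 6$ is exactly what makes $u=1/3$ fit.
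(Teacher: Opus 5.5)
Your proposal is correct and follows the paper's proof step for step: unitary invariance plus \cref{lem:small-count} with $u=1/3$ for each fixed subspace, a union bound over a Grassmannian net at scale $\Theta(1/\sqrt{nk})$, the operator-norm event from \cref{lem:gaussian-op}, and the transfer via \cref{lem:lipschitz}; the only difference is that you use threshold $2\eta$ at net points with $\rho=\eta$, where the paper uses threshold $\eta$ with $\rho=\eta/2$, which is purely cosmetic. Your worry about strict versus non-strict inequalities is unnecessary, since on the complement of the failure event every net point already satisfies $\sum_i b_{2\eta}(Q_i,V_\ell)<m/4$.
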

\begin{proof}
    Fix constants $\eta=\frac{1}{9\sqrt{nk}}$ and $\rho=\frac{\eta}{2}$.
    Fix $W\in\Gr(k,n)$ and let $N\in\C^{n\times k}$ have orthonormal columns spanning $W$.
    The key fact that we will use is that by unitary invariance of the complex Gaussian distribution,
    \[
        Q_iN = \frac1{3\sqrt n}G_i,
    \]
    for $G_i\in\C^{k\times k}$ with independent standard complex Gaussian entries.
    Since
    \[
        b_{\eta}(Q_i,W)
        = b_{3\eta\sqrt{n}}(G_i)
        = b_{1/(3\sqrt{k})}(G_i),
    \]
    by our choice of $\eta$, we can apply \Cref{lem:small-count} with $u=1/3$ to get
    \[
        \Prob\left[\sum_{i=1}^m b_{\eta}(Q_i,W)\geq\frac m4\right]
        \le e^{-m/27}.
    \]

    Now let $\cN_\rho$ be a $\rho$-net of $\Gr(k,n)$ as in \Cref{lem:net}.  A union
    bound gives
    \begin{align*}
        \Prob\left[ \exists W\in\cN_\rho: \sum_{i=1}^m b_{\eta}(Q_i,W) \geq \frac m4 \right]
        &\leq
        \left(\frac{C_{\mathrm{net}}}{\rho}\right)^{2k(n-k)}e^{-m/27}.
    \end{align*}
    By our choice of $\rho = \Theta(1/\sqrt{nk})$,
    \[
        \log\left(\frac{C_{\mathrm{net}}}{\rho}\right)
        =O(\log(nk)),
    \]
    and so there exist constants $C,c$ such that
    \[
        \left(\frac{C_{\mathrm{net}}}{\rho}\right)^{2k(n-k)}e^{-m/27}
        \leq e^{-c m}
    \]
    provided $m \geq C k (n-k) \log(nk)$.

    By \Cref{lem:gaussian-op} applied to the operator $3 Q_i$ (whose entries are distributed as $\mathcal{N}_{\mathbb C}(0,1/n)$), and a union bound over the $Q_i$'s, it holds that
    \[
        \Prob\left[\exists i \in [m]: \norm{Q_i}_{\op}>1\right]
        \le \frac{m}{2} e^{-n}.
    \]
    Assume that neither failure event occurs.  Let $W\in\Gr(k,n)$ be arbitrary, and
    choose $V\in\cN_\rho$ with
    $d(W,V)\le\rho$.
    For every $i$ and every singular-value index $j$,
    \Cref{lem:lipschitz} gives
    \[
        \abs{\sigma_j(\restr{Q_i}{W})-\sigma_j(\restr{Q_i}{V})}
        \leq \rho=\frac{\eta}{2}.
    \]
    Therefore every singular value of $\restr{Q_i}{W}$ at most $\eta/2$ corresponds to
    a singular value of $\restr{Q_i}{V}$ at most $\eta$, and hence
    \[
        b_{\eta/2}(Q_i,W)
        \leq b_{\eta}(Q_i,V).
    \]
    By our assumption, summing over $i$ yields
    \[
        \sum_{i=1}^m b_{\eta/2}(Q_i,W)
        \leq \sum_{i=1}^m b_{\eta}(Q_i,V)
        \leq \frac m4.
    \]
    This holds simultaneously for every $W$, making the $Q_i$'s a $(k,\eta/2,m/4)$ robust subspace design.
\end{proof}

\subsection{Computability in randomized logspace}

We now show how to obtain a finite-precision version of the Gaussian construction in \cref{thm:gaussian-strong-robust-design} that is computable in randomized logspace.
This will be used in all of our complexity-theoretic applications.

To this end, we first show that a randomized algorithm can output a finite-precision scalar that approximates a continuous Gaussian scalar.
One subtlety in formalizing such an approximation is that the output of a randomized algorithm has a discrete distribution, whereas a Gaussian random variable has a continuous distribution.
We handle this using the notion of a \emph{coupling}.
A coupling of two random variables $X$ and $Y$ is a pair of random variables $(X',Y')$ defined on a common probability space such that $X'$ has the same distribution as $X$ and $Y'$ has the same distribution as $Y$.
As is customary, we will identify $X'$ with $X$ and $Y'$ with $Y$ once such a coupling has been fixed.

\begin{lemma}[Finite-precision Gaussian sampling]
    \label{lem:finite-precision-gaussian-sampling}
    Let $\varepsilon,\delta \in(0,1/2]$.
    There is a randomized algorithm which uses $O(\log(1/\eps) + \log\log(1/\delta))$ workspace, runs in time $\poly(1/\varepsilon,\,\log(1/\delta))$, and outputs a complex rational $\widetilde g\in\mathbb Q(i)$ with the following properties:
    \begin{enumerate}[label=(\roman*)]
        \item it has an $O(\log(1/\eps) + \log\log(1/\delta))$-bit dyadic representation,
        \item its modulus is bounded by $|\widetilde g| = O(\sqrt{\log(1/\delta)})$, and
        \item there is a coupling of \(\widetilde g\) with \(g\sim\mathcal N_\C(0,1)\) such that
        \[
            \mathrm{Pr} \bigl[|\widetilde g-g|>\varepsilon\bigr]\leq \delta.
        \]
    \end{enumerate}
\end{lemma}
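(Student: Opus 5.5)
We sample the real and imaginary parts separately, each as an approximate real standard Gaussian, and set $\widetilde g = (\widetilde X + i\widetilde Y)/\sqrt 2$. The $\sqrt2$ is absorbed by working with a dyadic approximation of $1/\sqrt2$, or simply by sampling $X,Y\sim\mathcal N(0,1/2)$ directly. It therefore suffices to give a sampler for a real Gaussian $X$ with a coupling satisfying $\Prob[|\widetilde X - X| > \eps/2]\le \delta/2$ within the stated resources. A union bound over the two coordinates then gives (iii).

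For the real sampler we use \emph{inverse-CDF sampling with truncation and discretization}. Let $T := \sqrt{2\log(4/\delta)}$, so that $\Prob[|X|>T]\le \delta/2$ by the standard Gaussian tail bound. Partition $[-T,T]$ into dyadic cells of width $h=2^{-\ell}$ with $h\le \eps/2$, giving $L = 2T/h = O(\sqrt{\log(1/\delta)}/\eps)$ cells. The natural coupling draws $X$ and lets $\widetilde X$ be the left endpoint of the cell containing $X$ (say $0$ if $|X|>T$). This $\widetilde X$ is within $\eps/2$ of $X$ except on the tail event, and it has a dyadic representation with $O(\log(1/\eps)+\log\log(1/\delta))$ bits and modulus at most $T=O(\sqrt{\log(1/\delta)})$, which gives (i) and (ii). The real work is to sample this discrete distribution in logspace. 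Its cell probabilities $p_c=\Phi(b_c)-\Phi(a_c)$ are irrational, so we replace them with approximations $\widetilde p_c$ to precision $\delta/(4L)$. This incurs total variation error at most $\delta/4$, and a coupling can absorb that error into the failure probability; we adjust the constants $\delta/2\to\delta/4$ accordingly. Sampling then proceeds by drawing a uniform dyadic number $U$ with $t=O(\log(L/\delta))$ random bits and scanning the cells $c=1,\dots,L$ in order while maintaining the running sum $\sum_{c'\le c}\widetilde p_{c'}$, stopping at the first cell where the sum exceeds $U$. The comparison can be done bit by bit if we read the random bits once and store them; this needs only $O(\log(L/\delta))$ bits.

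The main obstacle is the space bound. Storing $U$ and the running sum to precision $\delta/L$ costs $\log(1/\delta)$ bits, whereas we are allowed only $O(\log(1/\eps)+\log\log(1/\delta))$. To get around this, we compare $U$ against the cumulative sums lazily: we generate the bits of $U$ online and compare them with the bits of the target CDF value at the current cell boundary, aborting as soon as the comparison is decided. The probability of needing more than $j$ bits decays like $2^{-j}$, but a logspace machine cannot store long prefixes. Our plan is therefore different. We first sample a \emph{coarse} cell index using only $O(\log L)$ bits of precision, which is accurate up to total variation distance $O(1/L)$ per boundary. We then argue that the remaining error is concentrated near the cell boundaries and at distance at most $h$ from them. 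Consequently, in the coupling, a misclassified sample still lies within $\eps$ of $X$, provided we use half-width cells and allow misclassification only into neighbouring cells. This leaves an error of $\eps$ and failure probability $\delta$ coming only from truncation. Each CDF value $\Phi(b_c)$ is computed to $O(\log L)$ bits by a Taylor series or numerical integration in $O(\log L)$ space and $\poly(L)$ time. Since $L=\poly(1/\eps)\cdot\sqrt{\log(1/\delta)}$, this yields the claimed time $\poly(1/\eps,\log(1/\delta))$ and workspace $O(\log(1/\eps)+\log\log(1/\delta))$. We expect this neighbouring-cell coupling argument, and making it rigorous, to be the delicate step.
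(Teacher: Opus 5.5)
There is a genuine gap, and it is in the step you flag as delicate: the coarse sampler with neighbouring-cell misclassification. If the cell is determined by $t=O(\log L)$ random bits, cumulative probabilities are resolved only to precision $2^{-t}=1/\poly(L)$.

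\begin{itemize}
\item In the bulk every cell has mass well above $2^{-t}$, so misclassification is indeed confined to neighbouring cells there.
\item In the Gaussian tails, cells with $|x|\gtrsim\sqrt{2t\ln 2}$ have mass far below $2^{-t}$. An error of $2^{-t}$ in the running sum can then move the output across many cells, not just one.
\item The unresolved tail region has total mass about $2^{-t}=1/\poly(L)$, with $L=O(\sqrt{\log(1/\delta)}/\eps)$. This is not bounded by $\delta$ once $\delta\ll 1/\poly(L)$. That is exactly the regime the paper needs: in \cref{thm:finite-precision-design}, $\log(1/\delta)=\Theta(m+n)$ while the space budget is $O(\log(nm))$.
\end{itemize}

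This is not a matter of constants. Any sampler whose output depends on only $t$ random bits has atoms of mass at least $2^{-t}$. It can therefore neither place an atom deep in the tail, since $X$ cannot match it within $\eps$, nor leave that tail uncovered. Either way the coupling fails with probability $\Omega(2^{-t})$. So you must consume at least about $\log(1/\delta)$ random bits while storing only $O(\log(1/\eps)+\log\log(1/\delta))$ bits. An inverse-CDF scan can do that only by producing high-order bits of $\Phi(b_c)$ in tiny space, and you have no method for this: you only compute $\Phi$ to $O(\log L)$ bits.

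The missing idea is a sampler that uses many random bits but almost no memory. The paper takes $\widetilde X_N=(2S_N-N)/\sqrt N$, where $S_N$ is a sum of $N$ fair coins, so the only state is a counter and a running sum.

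\begin{itemize}
\item Tusn\'ady's quantile coupling gives $|X-\widetilde X_N|\le 2/\sqrt N+X^2/(4\sqrt N)$.
\item On the event $|X|\le B$ with $B=\lceil\sqrt{2\log(4/\delta)}\rceil$, the error is at most $\eps$ once $N=O(B^4/\eps^2)=O(\log^2(1/\delta)/\eps^2)$. The only failure event is then the Gaussian tail itself.
\item Choosing $N=2^{2k-1}$ makes $\widetilde X_N/\sqrt2=(2S_N-N)/2^k$ dyadic. This removes the $1/\sqrt2$ without approximating it.
\item No evaluation of $\Phi$ is ever needed.
\end{itemize}

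The binomial's own tails track the Gaussian's all the way out to $|X|\le B$, with error only quadratic in $X$. That is precisely what your coarse discretized inverse-CDF cannot reproduce.
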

\begin{proof}
    We first consider the case of a real Gaussian scalar.
    Let $Z_1,\dots,Z_N$ be independent Bernoulli random variables taking values $0$ and $1$ with equal probability, and set $S_N:=Z_1+\cdots+Z_N$.
    Furthermore, we define
    \[
        \widetilde X_N:=\frac{2S_N-N}{\sqrt N}.
    \]
    By Tusnády's lemma \cite[Proposition T]{MZ12}, there is a coupling of $\widetilde X_N$ with a standard real Gaussian $X\sim\mathcal N_{\mathbb R}(0,1)$ such that
    \[
        |X-\widetilde X_N|
        \leq \frac{2}{\sqrt N} + \frac{X^2}{4\sqrt N}.
    \]

    For $B:=\left\lceil\sqrt{2\log(4/\delta)}\right\rceil$, we find by the standard Gaussian tail bound,
    \[
        \Pr[|X|>B]
        \leq
        2e^{-B^2/2}
        \leq
        \frac{\delta}{2}.
    \]
    We now choose
    \[
        N=2^{2k-1}
        =
        O(B^4/\varepsilon^2)
        =
        O(\log^2(1/\delta)/\varepsilon^2)
    \]
    for some integer $k\geq1$ large enough so that, on the event $|X|\leq B$, $|X-\widetilde X_N|\leq \varepsilon$.
    Consequently,
    \[
        \Pr\left[|X-\widetilde X_N|>\varepsilon\right]
        \leq
        \frac{\delta}{2}.
    \]

    We now pass to a complex Gaussian.
    Take two independent copies $\widetilde X_N$ and $\widetilde Y_N$ of the above construction, coupled respectively with independent standard real Gaussians $X,Y\sim\mathcal N_{\mathbb R}(0,1)$, and define $g:=(X+iY)/\sqrt{2}\sim\mathcal N_{\mathbb C}(0,1)$.
    Furthermore, set
    \[
        \widetilde g := (\widetilde X_N+i\widetilde Y_N)/\sqrt2,
    \]
    whenever both $\widetilde X_N$ and $\widetilde Y_N$ are bounded by $B+\varepsilon$ in modulus, and set $\widetilde g:=0$ otherwise.
    Since $N=2^{2k-1}$, we have
    \[
        \frac{\widetilde X_N}{\sqrt2}
        =
        \frac{2S_N-N}{\sqrt{2N}}
        =
        \frac{2S_N-N}{2^k},
    \]
    and likewise for $\widetilde Y_N/\sqrt2$.
    Hence $\widetilde g$ is a dyadic complex rational. Moreover, since $2S_N-N$ has $O(\log N)$ bits and the denominator is $2^k \leq N$, its dyadic representation uses $O(\log N)=O(\log(1/\varepsilon)+\log\log(1/\delta))$ bits.
    Further,
    \[
        |\widetilde g|
        \leq
        B+\varepsilon
        =
        O\bigl(\sqrt{\log(1/\delta)}\bigr).
    \]
    On the event $|X|\leq B$ and $|Y|\leq B$, we have that $|\widetilde X_N| \leq B+\varepsilon$ and $|\widetilde Y_N|\leq B+\varepsilon$.
    Therefore
    \[
        |g-\widetilde g|^2
        \leq
        \frac{|X-\widetilde X_N|^2}{2}
        +
        \frac{|Y-\widetilde Y_N|^2}{2}
        \leq
        \varepsilon^2.
    \]
    Thus, by a union bound,
    \[
        \Pr[|g-\widetilde g|>\varepsilon]
        \leq \delta.
    \]

    It remains to verify the computational bounds.
    An algorithm can generate the $N$ Bernoulli variables used for the real and imaginary part sequentially and only needs to maintain their running sum and a counter.
    From the final value of the sum, it can directly determine whether the $B+\varepsilon$ truncation condition is satisfied and accordingly output either the resulting dyadic value or $0$.
    It may therefore use $O(\log N) = O(\log(1/\varepsilon) + \log\log(1/\delta))$ workspace and run in time $\poly(1/\varepsilon,\,\log(1/\delta))$.
\end{proof}

The above lemma shows in particular that for inverse-polynomial accuracy and inverse-exponential failure probability, we only require \(O(\log n)\) workspace to approximately sample a Gaussian scalar. Applying the sampler independently entrywise gives the following finite-precision version of the Gaussian construction from \cref{thm:gaussian-strong-robust-design}.

\begin{theorem}[Finite-precision strong robust subspace designs]
    \label{thm:finite-precision-design}
    Let $C,c>0$ be the constants from \cref{thm:gaussian-strong-robust-design}. For every $1 \leq k<n$ and
    \[
        m\geq Ck(n-k)\log(nk),
    \]
    there is a randomized algorithm that outputs $\widetilde Q_1,\ldots,\widetilde Q_m \in\mathbb Q(i)^{k\times n}$ such that, with probability at least
    \[
        1-e^{-cm}-\frac m2e^{-n}-e^{-(m+n)},
    \]
    the family $\{\widetilde Q_1,\ldots,\widetilde Q_m\}$ is a $\left(k,\frac{1}{144\sqrt{nk}},\frac m4 \right)$ strong robust subspace design.

    The algorithm uses $O(\log(nm))$ workspace and runs in $\poly(n,m)$ time. Moreover, all entries of the matrices $\widetilde Q_i$ are complex rationals representable using $O(\log(nm))$ bits.
\end{theorem}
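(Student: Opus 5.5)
The plan is to couple the finite-precision sampler of \cref{lem:finite-precision-gaussian-sampling} entrywise with the continuous Gaussian construction of \cref{thm:gaussian-strong-robust-design}. Then we transfer the strong robust design property through a singular-value perturbation bound, losing a constant factor in $\eta$ (from $\frac{1}{18\sqrt{nk}}$ to $\frac{1}{144\sqrt{nk}}$) and absorbing the norm normalization.

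\emph{Step 1 (sampling).} For each of the $mkn$ entries we run the sampler with accuracy $\varepsilon$ and failure probability $\delta$. This gives $\widetilde g_{i,ab}$ coupled with independent $g_{i,ab}\sim\mathcal N_\C(0,1)$, and we set $Q_i := g_i/(3\sqrt n)$ (entries $\mathcal N_\C(0,1/(9n))$). We take $\delta := e^{-(m+n)}/(mkn)$. A union bound then makes the event that some $|\widetilde g-g|>\varepsilon$ have probability at most $e^{-(m+n)}$; this produces the extra term in the failure probability. Since $\log(1/\delta)=O(m+n)$, the term $\log\log(1/\delta)$ is $O(\log(nm))$. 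Taking $\varepsilon = 1/\poly(n)$, the workspace and bit-length bounds of $O(\log(nm))$ follow, and the time is $\poly(n,m)$. The entries are output one at a time while reusing space.

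\emph{Step 2 (perturbation).} On the good event, $\|Q_i - g_i'/(3\sqrt n)\|_\op \le \|\cdot\|_F \le \sqrt{kn}\,\varepsilon/(3\sqrt n)$, where $g_i'$ is the sampled matrix. To guarantee $\|\widetilde Q_i\|_\op\le1$ we rescale: put $\widetilde Q_i := \tfrac12\, g_i'/(3\sqrt n)$, or more carefully divide by a constant such as $2$. On the event of \cref{thm:gaussian-strong-robust-design} we have $\|Q_i\|_\op\le 1$. Choosing $\varepsilon \le 1/(c'\sqrt{k}\cdot\sqrt{nk})$ makes $\|g_i'/(3\sqrt n) - Q_i\|_\op \le \eta_0/4$, where $\eta_0=\frac{1}{18\sqrt{nk}}$. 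Hence $\|\widetilde Q_i\|_\op \le \tfrac12(1+\eta_0/4)\le 1$.

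\emph{Step 3 (transfer).} Fix any $W$. By the perturbation bound $|\sigma_j(A|_W)-\sigma_j(B|_W)|\le\|A-B\|_\op$ (restrict via $N_W$), any $j$ with $\sigma_j(\widetilde Q_i|_W)\le \eta_0/8$ satisfies $\sigma_j(g_i'/(3\sqrt n)|_W)\le \eta_0/4$. It therefore also satisfies $\sigma_j(Q_i|_W)\le \eta_0/2<\eta_0$. So the bad count for $\widetilde Q_i$ at threshold $\eta_0/8 = \frac{1}{144\sqrt{nk}}$ is at most the bad count for $Q_i$ at $\eta_0$, and its sum over $i$ is at most $m/4$. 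Since $\Delta=m/4<mk$, the family is a strong robust design. Union-bounding the three failure events gives the claimed probability.

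The main obstacle is bookkeeping rather than depth. One must check that rescaling for the norm constraint and the approximation error together cost exactly the factor $8$. One must also check that the sampler's truncated modulus $O(\sqrt{\log(1/\delta)})=O(\sqrt{m+n})$ does not affect the bit length beyond $O(\log(nm))$; it does not, because the dyadic representation is already bounded by part (i) of the lemma.
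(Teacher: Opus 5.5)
Your route matches the paper's: entrywise coupling via \cref{lem:finite-precision-gaussian-sampling}, a union bound over the $mkn$ entries for the extra $e^{-(m+n)}$ term, a Frobenius/Weyl transfer of the bad counts, and a rescaling to force $\|\widetilde Q_i\|_{\op}\le 1$. Your constants check out, and the chain $\eta_0/8\to\eta_0/4\to\eta_0/2$ even has slack.

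There is one step that fails as written. Your output is $\widetilde Q_i=g_i'/(6\sqrt n)$, and $6\sqrt n$ is irrational whenever $n$ is not a perfect square. Then $\widetilde Q_i\notin\mathbb Q(i)^{k\times n}$, its entries have no exact $O(\log(nm))$-bit representation, and the logspace machine cannot stream them. So the part of the statement about rational, bounded-bit entries is not established. The continuous reference $Q_i=g_i/(3\sqrt n)$ may stay irrational, since it is never output; only the normalization of the sampled matrix must be rational.

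The fix, which is what the paper does, is to divide by an integer $d$ with $6\sqrt n\le d<6\sqrt n+1$, e.g.\ $d=\lceil 6\sqrt n\rceil$, computable in logspace. Then $\widetilde Q_i=\frac{3\sqrt n}{d}\cdot\frac{g_i'}{3\sqrt n}$ with $\frac{3\sqrt n}{d}\in[\frac14,\frac12]$. This gives:
\begin{itemize}
\item $\|\widetilde Q_i\|_{\op}\le\frac12(1+\eta_0/4)\le 1$;
\item $\sigma_j(\widetilde Q_i|_W)\le\eta_0/8$ implies $\sigma_j\bigl(g_i'/(3\sqrt n)|_W\bigr)\le 4\cdot\eta_0/8=\eta_0/2$, hence $\sigma_j(Q_i|_W)\le 3\eta_0/4<\eta_0$.
\end{itemize}
Your slack absorbs the extra factor, so the threshold $\frac{1}{144\sqrt{nk}}$ survives with your choice $\varepsilon\le 1/(24k\sqrt n)$. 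The entries are then dyadic rationals divided by $d$, so they lie in $\mathbb Q(i)$ with $O(\log(nm))$ bits.

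The paper arranges the same bookkeeping a little differently. It first observes that $\{G_\ell/d\}$ is a $(k,\eta_0/4,m/4)$-strong design with norms at most $\frac12$, the factor $4$ coming from $3\sqrt n/d\ge\frac14$. It then perturbs by at most $\eta_0/8$ using $\varepsilon=d/(144nk)$, so its factor $8$ splits as $4\times 2$ rather than your $2\times 2$ (plus slack).
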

\begin{proof}
    Set
    \[
        \eta:=\frac{1}{18\sqrt{nk}},
        \qquad
        \widetilde\eta:=\frac{\eta}{8}
        =
        \frac{1}{144\sqrt{nk}},
    \]
    and choose
    \[
        d:= \left\lceil 6\sqrt n\right\rceil.
    \]

    Independently for every $\ell\in[m]$, $i\in[k]$, and $j\in[n]$, apply the finite-precision scalar sampler from \cref{lem:finite-precision-gaussian-sampling} to obtain a dyadic complex rational $\widetilde g_{\ell,i,j}$ with accuracy and failure probability
    \[
        \varepsilon := \frac{d}{144nk},
        \qquad
        \delta := \frac{2^{-2(m+n)}}{mkn}.
    \]
    We may couple these variables with independent $g_{\ell,i,j}\sim\cN_\C(0,1)$ such that
    \[
        \Pr\left[
        |\widetilde g_{\ell,i,j}-g_{\ell,i,j}|>\varepsilon
        \right]
        \leq
        \delta
    \]
    for every entry.

    Let $G_\ell,\widetilde G_\ell\in\C^{k\times n}$ denote the matrices with entries $g_{\ell,i,j}$ and $\widetilde g_{\ell,i,j}$, respectively, and define
    \[
        Q_\ell:=\frac{G_\ell}{d},
        \qquad
        \widetilde Q_\ell:=\frac{\widetilde G_\ell}{d}.
    \]
    Let $\mathcal E$ be the event that $|\widetilde g_{\ell,i,j}-g_{\ell,i,j}|\leq\varepsilon$ simultaneously for all $mkn$ entries.
    By a union bound,
    \[
        \Pr[\mathcal E^c]
        \leq
        mkn\cdot\delta
        \leq
        2^{-2(m+n)}
        \leq
        e^{-(m+n)}.
    \]
    On $\mathcal E$, for every $\ell\in[m]$,
    \begin{align*}
        \|\widetilde Q_\ell-Q_\ell\|_{\op}
        \leq \|\widetilde Q_\ell-Q_\ell\|_{\mathrm F}
        \leq \frac{\sqrt{kn}\,\varepsilon}{d}
        \leq \frac{1}{144\sqrt{nk}}
        = \widetilde\eta.
    \end{align*}

    We next show that the $\widetilde Q_\ell$ satisfy the robust-design property.
    By \cref{thm:gaussian-strong-robust-design}, with probability at least $1-e^{-cm}-\frac m2e^{-n}$, the matrices $\frac{G_\ell}{3\sqrt n}$ form a $(k,\eta,m/4)$-strong robust subspace design.
    Since
    \[
        Q_\ell
        =
        \frac{3\sqrt n}{d}\,
        \frac{G_\ell}{3\sqrt n}
    \]
    and $\frac14\leq\frac{3\sqrt n}{d} \leq \frac12$, the matrices $Q_\ell$ form a $\left(k,\frac{\eta}{4},\frac m4\right)$ strong robust subspace design.
    Now fix an arbitrary $k$-dimensional subspace $W\in\Gr(k,n)$.
    By Weyl's inequality, on the event $\mathcal E$ we have, for every $\ell\in[m]$ and $r\in[k]$,
    \[
        \left|
        \sigma_r(\widetilde Q_\ell|_W)
        -
        \sigma_r(Q_\ell|_W)
        \right|
        \leq
        \|\widetilde Q_\ell-Q_\ell\|_{\op}
        \leq
        \widetilde\eta.
    \]
    Consequently, we find that if $\sigma_r(\widetilde Q_\ell|_W)\leq\widetilde\eta$, then also $\sigma_r(Q_\ell|_W)\leq2\widetilde\eta=\frac{\eta}{4}$.
    Hence
    \[
        b_{\widetilde\eta}(\widetilde Q_\ell,W)
        \leq
        b_{\eta/4}(Q_\ell,W).
    \]
    Summing over $\ell$ gives
    \[
        \sum_{\ell=1}^m
        b_{\widetilde\eta}(\widetilde Q_\ell,W)
        \leq
        \sum_{\ell=1}^m
        b_{\eta/4}(Q_\ell,W)
        \leq
        \frac m4.
    \]
    Moreover, in the success event of the Gaussian construction and on event $\mathcal{E}$, $\|\widetilde Q_\ell\|_\op \leq \frac{1}{2}+\widetilde \eta \leq 1$ since $\|Q_\ell\|_\op \leq \frac{1}{2}$.
    Thus, the family $\{\widetilde Q_1,\ldots,\widetilde Q_m\}$ is a $\left(k,\widetilde\eta,\frac m4\right)$ strong robust subspace design, with probability at least
    \[
        1-e^{-cm}-\frac m2e^{-n}-e^{-(m+n)}.
    \]

    It remains to verify the computational bounds.
    To see this, we simply verify that
    \[
        1/\varepsilon
        = O(k\sqrt n)
        = \poly(n),
        \qquad
        \log(1/\delta)
        =
        O(m+n+\log(mkn)).
    \]
    Hence, by \cref{lem:finite-precision-gaussian-sampling}, each entry can be generated in $\poly(n,m)$ time using $O\bigl(\log(1/\varepsilon)+\log\log(1/\delta)\bigr)= O(\log(nm))$ workspace and, has an exact $O(\log(nm))$-bit rational representation.
    The entries can be generated sequentially, using only additional $O(\log(nm))$-bit counters.
\end{proof}

\section{Explicit robust constructions}\label{sec:explicit-construction}

In the following section we describe explicit constructions of robust subspace designs.
First, in \cref{sec:explicit-construction-based-on-coordinate-projections} we describe a very simple construction based on coordinate projections that yields a design of size $\binom{n}{k}$.
Then, in \cref{sec:explicit-construction-using-sketching} we combine this construction with an idea based on sketching to reduce the size to $n^4 k^{O(k)}$.

\subsection{Explicit construction based on coordinate projections}\label{sec:explicit-construction-based-on-coordinate-projections}

In this section, we present a simple construction of a weak robust subspace design.
The construction has a good inverse polynomial robustness parameter $\eta$. However, it is of large size $m=\binom{n}{k}$.
This unfortunately makes it impractical for our later complexity theoretic applications, which require polynomial-sized designs.

For $I\subseteq[n]$ with $|I|=k$, let $P_I:\C^n\to\C^k$ denote the coordinate projection onto $I$. The following theorem is a restatement of \cref{thm:coordinate}.
It uses a lemma about maximum-volume submatrices from \cite{goreinov2010find}, who developed it in the context of matrix approximation theory.

\begin{theorem}
    The family
    \[
        \mathcal P_k=\{P_I: I\subseteq[n],\;|I|=k\}
    \]
    is an~$(k,\eta,\binom{n}{k}-1)$-weak robust subspace design with $\eta=1/(2\sqrt{nk})$.
\end{theorem}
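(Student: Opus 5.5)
Since $\Delta=\binom nk-1<m=\binom nk$, the claim reduces to this: for every $W\in\Gr(k,n)$ there is at least one index set $I$ with $\smin(P_I|_W)>\eta$. The norm condition is immediate, since each $P_I$ is a coordinate projection and so $\|P_I\|_\op=1$. Fix $W$ and an orthonormal basis matrix $N=N_W\in\C^{n\times k}$. Then $P_I|_W$ is represented by the $k\times k$ submatrix $N_I:=P_IN$, formed by the rows of $N$ indexed by $I$, and we need some $I$ with $\smin(N_I)>1/(2\sqrt{nk})$.

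The plan is to take $I$ to be a maximum-volume row subset, that is, one maximizing $|\det N_I|$. This maximum is positive because $N$ has rank $k$. By the standard max-volume lemma of \cite{goreinov2010find}, this choice makes $N_I$ invertible and gives $|(NN_I^{-1})_{ij}|\le 1$ for all entries. The argument is short: by Cramer's rule, the entry $(NN_I^{-1})_{ij}$ equals the ratio $\det(N_{I'})/\det(N_I)$, where $I'$ is obtained from $I$ by replacing its $j$-th row with row $i$. Maximality then bounds this ratio by $1$ in modulus.

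Set $C:=NN_I^{-1}\in\C^{n\times k}$. Each entry has modulus at most $1$, so $\|C\|_\op\le\|C\|_F\le\sqrt{nk}$. On the other hand, $N=CN_I$ and $N$ is an isometry, so for any unit vector $x$,
\[
1=\|Nx\|\le \|C\|_\op\,\|N_Ix\|\le\sqrt{nk}\,\|N_Ix\|.
\]
Hence $\smin(N_I)\ge 1/\sqrt{nk}>1/(2\sqrt{nk})=\eta$. So $P_I$ is not bad for $W$, and at most $\binom nk-1$ maps are bad.

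The only step that needs real care is the max-volume entry bound, in particular the Cramer's rule identification of each entry as a ratio of determinants, which must be done with the correct row placement and sign. Everything else is routine norm bookkeeping. The factor $2$ in $\eta$ is slack; it ensures the inequality is strict, as the weak-design definition requires, since "bad" means $\smin\le\eta$.
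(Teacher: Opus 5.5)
Your proposal is correct and follows the paper's proof essentially step for step. You choose the maximum-volume $k\times k$ row submatrix $N_I=P_IN$, invoke the dominant-submatrix entry bound $|(NN_I^{-1})_{ij}|\le 1$ from \cite{goreinov2010find}, and conclude with the Frobenius bound $\sqrt{nk}$. The only differences are cosmetic: you justify the entry bound yourself via Cramer's rule, and you phrase the last step through $N=CN_I$ instead of the paper's identity $\|(P_IN)^{-1}\|_{\op}=\|N(P_IN)^{-1}\|_{\op}$.
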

\begin{proof}
    It suffices to show that for any fixed $k$-dimensional subspace $W\subseteq\C^n$, there exists $I \subseteq [n]$ with $|I| = k$ such that $\sigma_{\min}(P_I|_W) \geq 1/\sqrt{nk} > \eta$.
    With $N\in\C^{n\times k}$ having orthonormal columns spanning $W$, we claim that this is satisfied by $I$ maximizing
    \[
        |\!\det(P_I N)|.
    \]
    Following \cite[Lemma 1]{goreinov2010find}, the matrix $P_I N$ is called a \emph{dominant submatrix} of $N$ and its entries satisfy
    \[
        |N (P_I N)^{-1}(i,j)|
        \leq 1.
    \]
    It remains to note that
    \[
        \sigma_{\min}(P_I|_W)
        = \sigma_{\min}(P_I N)
        = 1/\|(P_I N)^{-1}\|_{\op}
    \]
    and that
    \[
        \|(P_I N)^{-1}\|_{\op}
        = \|N (P_I N)^{-1}\|_{\op}
        \leq \|N (P_I N)^{-1}\|_F
        \leq \sqrt{nk}. \qedhere
    \]
\end{proof}

\subsection{Improved explicit construction using sketching} \label{sec:explicit-construction-using-sketching}

We first state the main result of this section.

\begin{theorem}[Explicit robust family]\label{thm:composed-subspace-design}

    For every $1\le k\le n$ there is an explicitly computable family $\cQ_{n,k}\subseteq\mathbb Q^{k\times n}$ of size $m= O(n^4k^{O(k)})$ which is a $(k, \eta, \Delta)$-weak robust subspace design, with parameters
    \begin{equation*}
        \eta = \Omega\left( \frac{1}{k^{3/2}\sqrt n}\right), \qquad \Delta<m.
    \end{equation*}

    Moreover, every entry is rational of polynomial bit length in $n$ and enumerable in space $O(\log n + k\log k)$.
\end{theorem}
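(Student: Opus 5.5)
The family will be $\cQ_{n,k}=\{P_I S : S\in\mathcal S,\ I\subseteq[r],\ |I|=k\}$. Here $\mathcal S$ is a family of sign sketches $S:\C^n\to\C^r$ with $r=\Theta(k^2)$, and $P_I$ is the coordinate projection from the coordinate-projection theorem above. The argument has two steps. First, for every $W\in\Gr(k,n)$ some (in fact many) $S\in\mathcal S$ embed $W$ well, say $\smin(S|_W)\ge 1/2$ after normalization. Second, we apply the max-volume argument from the coordinate-projection theorem to the $k$-dimensional subspace $SW\subseteq\C^r$. Together these give a single pair $(S,I)$ with $\smin(P_IS|_W)$ bounded below. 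That yields $\Delta<m$ automatically, since at least one member is good for every $W$.

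\textbf{Sketching step.} Let $x\in\{\pm1\}^{r\times n}$ be drawn from a 4-wise independent unbiased sample space, for instance the Alon--Babai--Itai construction over $\mathrm{GF}(2^\ell)$ with $2^\ell\ge rn$. It has $O((rn)^2)$ points, which is $O(n^2k^4)$ up to the exact construction, and is enumerable in $O(\log n)$ space. Set $S=x/\sqrt r$. Fix $N=N_W$ and consider $M=(SN)^*(SN)-I_k$. Pairwise independence gives $\E[(SN)^*(SN)]=I_k$. Four-wise independence gives the standard AMS-type second-moment bound $\E\|M\|_F^2\le 2k^2/r$: each entry $\sum_{a}\frac1r\sum_{j\ne j'}x_{aj}x_{aj'}\bar N_{jb}N_{j'c}$ has variance $O(\frac1r\|N_b\|^2\|N_c\|^2)$. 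With $r=8k^2$, Markov gives $\Pr[\|M\|_{\op}\ge 1/2]\le \Pr[\|M\|_F^2\ge 1/4]\le 1$, so the constant must be tuned, for example $r=16k^2$, to get probability at least $1/2$. Then all singular values of $SN$ lie in $[1/\sqrt2,\sqrt{3/2}]$. I also need $\|P_IS\|_{\op}\le1$. Since $\|S\|_{\op}\le\|S\|_F=\sqrt n$, I rescale to $Q=P_IS/\sqrt n$, and this rescaling costs the $\sqrt n$ factor in $\eta$.

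\textbf{Max-volume step.} Let $V=SW$, which is $k$-dimensional, and let $N'$ be an orthonormal basis of $V$, so that $SN=N'R$ with $\smin(R)\ge 1/\sqrt2$. By the coordinate-projection theorem applied in $\C^r$, there is $I$ with $\smin(P_IN')\ge 1/\sqrt{rk}$. Then
\[
\smin(P_ISN)\ge \smin(P_IN')\,\smin(R)\ge \frac{1}{\sqrt{2rk}}=\Omega(k^{-3/2}),
\]
and after dividing by $\sqrt n$ we get $\eta=\Omega(1/(k^{3/2}\sqrt n))$ as claimed. The size is $|\mathcal S|\binom{r}{k}=O(n^4)\cdot k^{O(k)}$, allowing a sample space of size $O((rn)^2)$ with $r=O(k^2)$ absorbed into $k^{O(k)}$. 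The entries are $\pm1/\sqrt{rn}$. These are not rational, so I would instead use $\pm1/\lceil\sqrt{rn}\rceil$, which only changes constants. Enumeration requires an index into the sample space, $O(\log n+\log k)$ bits, plus a $k$-subset of $[r]$, $O(k\log k)$ bits, and each sample-space entry is computable in logspace.

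\textbf{Main obstacle.} The delicate part is the sketching step: proving the operator-norm subspace embedding from only 4-wise independence, uniformly over all $W$. This is not a net argument; it is a per-$W$ existence claim, which is all we need. So I would carefully verify the variance computation of $\|M\|_F^2$, where the cross terms vanish by 4-wise independence and unbiasedness, and pick $r$ so that Markov leaves positive success probability.
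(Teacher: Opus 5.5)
Your proposal is correct and follows essentially the same route as the paper. Both use a $4$-wise independent sign sketch with $r=16k^2$, analyzed through the second-moment bound $\E\|M\|_F^2\le 2k^2/r$ and Markov, composed with the max-volume coordinate projection on $SW\subseteq\C^r$, with a rescaling of order $\sqrt n$ to get $\|Q\|_{\op}\le 1$. The paper packages the combination as a general composition theorem, and your direct bound $\smin(P_ISN)\ge\smin(P_IN')\,\smin(R)$ reproduces it. The remaining differences are cosmetic: the paper uses the size-$O((rn)^4)$ polynomial ABI sample space and a power-of-two normalization $d\in[\sqrt n,2\sqrt n)$, while you use a size-$O((rn)^2)$ binary construction and the denominator $\lceil\sqrt{rn}\rceil$, and both choices meet the claimed bounds.
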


The proof uses a composition of two families.
The first family maps into $O(k^2)$ dimensions rather than into $\C^k$.
Effectively, it is a sketching family that robustly embeds the original $n$ dimensions into $O(k^2)$ dimensions.
We then compose this family with our explicit coordinate-projection design to reduce the output dimension to $k$, yielding a valid robust subspace design.

\subsubsection{Robust sketching family}

Our improved explicit construction goes through an intermediate dimension reduction phase based on sketching.
To describe this step, we use the following terminology.
It is effectively a relaxation of a weak robust subspace design where the maps are allowed to have rank $d > k$.

\begin{definition}[Weak robust sketching family] \label{def:sketching-family}
    Let $d\geq k$. A family of linear maps
    \[
        A_1,\ldots,A_m:\C^n\to\C^d
    \]
    is a $(k,\eta,\Delta)$-weak robust sketching family if $\Delta < m$, $\|A_i\|_{\op}\leq 1$ for every $i$ and, for every $k$-dimensional subspace $W\subseteq\C^n$,
    \[
        \bigl|\{i\in[m]:\sigma_{\min}(A_i|_W)\leq\eta\}\bigr|
        \leq \Delta.
    \]
    When $d=k$, this is exactly a $(k,\eta,\Delta)$-weak robust subspace design.
\end{definition}

We next show that an explicit, polynomial-sized family of sketching matrices yields a weak robust sketching family (hence the name).
To establish this, we adapt some ideas from \cite[Section~2.1, Eq.~(1)]{CW09}, and \cite{Sarlos06} in our setting. Our main technical tool is based on the \emph{fourth moment method}.

\begin{definition}[Tug-of-war matrix {\cite{Sarlos06}}]
    A random matrix $S\in\R^{r\times n}$ is called a $t$-tug-of-war matrix if its entries are $t$-wise independent and, for every $a\in[r]$ and $i\in[n]$,
    \[
        \Prob\left(S_{a,i}=\pm\frac1{\sqrt r}\right)=\frac12.
    \]
\end{definition}

Sarl\'os proves the corresponding estimate below over the reals \cite{Sarlos06}. The same fourth-moment calculation gives the following complex version; we include the short argument for completeness.

\begin{lemma}[Expected product of tug-of-war matrices]
    \label{lem:expected-tug-of-war}
    Let $A,B\in\C^{n\times k}$ and let $S\in\R^{r\times n}$ be a $4$-tug-of-war matrix. Then
    \[
        \E\!\left[\|B^*S^*SA-B^*A\|_F^2\right]
        \leq
        \frac{2}{r}\|A\|_F^2\|B\|_F^2.
    \]
\end{lemma}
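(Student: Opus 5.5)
The plan is to reduce the matrix statement to a scalar fourth-moment computation, entry by entry. Write $M := S^*S - I_n \in \R^{n\times n}$, so that $B^*S^*SA - B^*A = B^*MA$ and
\[
    \|B^*MA\|_F^2 = \sum_{p,q\in[k]} \abs{b_p^* M a_q}^2,
\]
where $a_q,b_p$ denote the columns of $A$ and $B$. By linearity of expectation it therefore suffices to show, for fixed vectors $x,y\in\C^n$,
\[
    \E\abs{y^* M x}^2 \le \frac{2}{r}\|x\|^2\|y\|^2.
\]
Summing this over $p,q$ gives $\frac{2}{r}\sum_{p,q}\|a_q\|^2\|b_p\|^2 = \frac2r\|A\|_F^2\|B\|_F^2$.

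\emph{Expanding $M$.} Writing $S_{a,i}=\epsilon_{a,i}/\sqrt r$ with unbiased signs, we get
\[
    M_{ij} = \frac1r\sum_{a=1}^r\epsilon_{a,i}\epsilon_{a,j} - \delta_{ij}.
\]
Since $\epsilon_{a,i}^2=1$, the diagonal of $M$ vanishes. Hence
\[
    y^*Mx = \frac1r\sum_{a}\sum_{i\ne j}\bar y_i x_j\,\epsilon_{a,i}\epsilon_{a,j}.
\]

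\emph{Second moment via 4-wise independence.} The quantity $\E\abs{y^*Mx}^2$ is a sum over pairs of terms $(a,i,j)$ and $(b,i',j')$ of
\[
    \frac1{r^2}\,\bar y_i x_j y_{i'}\bar x_{j'}\;\E[\epsilon_{a,i}\epsilon_{a,j}\epsilon_{b,i'}\epsilon_{b,j'}].
\]
Each such expectation involves at most four entries, so 4-wise independence makes it equal to the fully independent value: $1$ exactly when the multiset of entries pairs up, and $0$ otherwise. Because $i\neq j$ and $i'\ne j'$, a nonzero term forces $a=b$ and $\{i',j'\}=\{i,j\}$.

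\emph{Bounding the surviving terms.} After the $\frac1{r^2}$ factor and the sum over the $r$ rows, the surviving terms give
\[
    \frac1r\sum_{i\ne j}\Bigl(\abs{y_i}^2\abs{x_j}^2 + \bar y_i x_j y_j \bar x_i\Bigr).
\]
The first part is at most $\|x\|^2\|y\|^2$. The cross part is at most $\sum_{i,j}\abs{y_i x_j}\abs{y_j x_i}\le \|x\|^2\|y\|^2$ by Cauchy--Schwarz, applied to the vectors $(\abs{y_ix_j})_{i,j}$ and $(\abs{y_jx_i})_{i,j}$, both of norm $\|x\|\|y\|$. This yields the bound $\frac2r\|x\|^2\|y\|^2$.

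The main point requiring care is the case analysis of which index patterns survive, together with checking that the complex conjugations put the cross term in the form $\bar y_i x_j y_j\bar x_i$. Since the signs are real, conjugation only acts on $x,y$, so no extra terms arise compared with Sarl\'os' real argument.
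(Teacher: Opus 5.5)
Your proof is correct and follows the same route as the paper. You reduce to the scalar bound $\E\abs{y^*(S^*S-I)x}^2\le\frac2r\|x\|^2\|y\|^2$ for each pair of columns, note that the diagonal cancels, and use $4$-wise independence to keep only same-row terms with $\{i',j'\}=\{i,j\}$; the paper then bounds the surviving single-row sum by grouping over unordered pairs and applying $|u+v|^2\le2(|u|^2+|v|^2)$, whereas you expand it exactly and control the cross term by Cauchy--Schwarz, which is a cosmetic difference.
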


\begin{proof}
    For $x,y\in\C^n$, write
    \[
        Z:=\langle Sx,Sy\rangle-\langle x,y\rangle.
    \]
    If the $a$-th row of $\sqrt r\,S$ is $\varepsilon_a=(\varepsilon_{a1},\ldots,\varepsilon_{an})$, then
    \[
        Z = \frac 1r
        \sum_{a=1}^r \sum_{i\neq j} \varepsilon_{ai} \varepsilon_{aj} \,\overline{x_i}y_j.
    \]
    By $4$-wise independence, the cross terms corresponding to distinct rows vanish in the second moment. For one row, grouping the terms indexed by $\{i,j\}$ and using $|u+v|^2\leq2(|u|^2+|v|^2)$ gives
    \[
        \E |Z|^2 \leq \frac{2}{r} \|x\|^2\|y\|^2.
    \]
    Apply this estimate to every pair of columns of $A$ and $B$ and sum the resulting bounds over the entries of $B^*S^*SA-B^*A$.
\end{proof}

Here, we will just need to consider the case $A=B=N \in \C^{n \times k}$ for some $N$ satisfying $N^*N = I_k$.
From this bound, we can deduce a lower bound on the minimum singular value with high probability.

\begin{lemma}\label{lem:tow-sigma-min}
    Let $N\in\C^{n\times k}$ have orthonormal columns, let $S\in\R^{r\times n}$ be a $4$-tug-of-war random matrix, and let $0\leq\beta<1$. Then
    \[
        \Prob\!\left[\sigma_{\min}(SN)\leq\sqrt\beta\right]
        \leq \frac{2k^2}{(1-\beta)^2r},
    \]
    and $\|S\|_{\op}\leq\sqrt n.$
\end{lemma}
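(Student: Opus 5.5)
The plan is to apply \cref{lem:expected-tug-of-war} with $A=B=N$ and then use Markov's inequality to bound the deviation of the Gram matrix $N^*S^*SN$ from $N^*N=I_k$. Since $\|N\|_F^2=k$, the lemma gives
\[
    \E\!\left[\|N^*S^*SN-I_k\|_F^2\right]\leq \frac{2}{r}\,k\cdot k=\frac{2k^2}{r}.
\]

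Next I relate the event $\smin(SN)\leq\sqrt\beta$ to this deviation. We have $\smin(SN)^2=\lambda_{\min}(N^*S^*SN)$. By Weyl's inequality for Hermitian matrices,
\[
    \lambda_{\min}(N^*S^*SN)\geq 1-\|N^*S^*SN-I_k\|_\op \geq 1-\|N^*S^*SN-I_k\|_F .
\]
So if $\smin(SN)\leq\sqrt\beta$, then $\|N^*S^*SN-I_k\|_F\geq 1-\beta$, which squared gives $\|N^*S^*SN-I_k\|_F^2\geq(1-\beta)^2$. Markov's inequality applied to the nonnegative random variable $\|N^*S^*SN-I_k\|_F^2$ then gives
\[
    \Prob\!\left[\smin(SN)\leq\sqrt\beta\right]\leq \frac{2k^2}{(1-\beta)^2 r},
\]
which is the claimed bound. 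The only point needing care is that the threshold is applied to the squared singular value, which is why the bound involves $\beta$ and not $\sqrt\beta$.

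The operator norm bound is deterministic, since every entry of $S$ has modulus $1/\sqrt r$:
\[
    \|S\|_\op\leq\|S\|_F=\sqrt{rn\cdot\tfrac1r}=\sqrt n .
\]

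I expect no real obstacle. The substantive fourth-moment computation is already contained in \cref{lem:expected-tug-of-war}, and what remains is the reduction from $\smin$ to the Frobenius deviation together with Markov's inequality.
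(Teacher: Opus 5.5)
Your proposal is correct and follows the paper's proof essentially step for step. Both apply the expected-product lemma with $A=B=N$, use $\lambda_{\min}(N^*S^*SN)\geq 1-\|N^*S^*SN-I_k\|_\op$, apply Markov's inequality to the squared Frobenius deviation at threshold $(1-\beta)^2$, and obtain $\|S\|_\op\leq\sqrt n$ deterministically from $\|S\|_F$.
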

\begin{proof}
    The second bound follows from $\|S\|_{\op}^2 \leq\|S\|_F^2 = rn\cdot\frac1r = n$.

    For the first bound, set
    \[
        G:=N^*S^*SN=(SN)^*(SN)\in\C^{k\times k}.
    \]
    By \cref{lem:expected-tug-of-war} with $A=B=N$,
    \[
        \E\|G-I_k\|_F^2
        \leq
        \frac{2\|N\|_F^4}{r}
        =
        \frac{2k^2}{r}.
    \]
    Hence, for every $a>0$, Markov's inequality gives
    \begin{equation}\label{eq:frobenius-markov-bound}
        \Prob\!\left[\|G-I_k\|_{\op}^2\geq a\right]
        \leq
        \Prob\!\left[\|G-I_k\|_F^2\geq a\right]
        \leq
        \frac{2k^2}{ar}.
    \end{equation}
    On the other hand, for every $x\in\C^k$, $x^*Gx \geq \bigl(1-\|G-I_k\|_{\op}\bigr)\|x\|^2$, and therefore
    \[
        \sigma_{\min}(SN)^2
        =
        \lambda_{\min}(G)
        \geq
        1-\|G-I_k\|_{\op}.
    \]
    Consequently, by Equation \eqref{eq:frobenius-markov-bound},
    \[
        \begin{aligned}
            \Prob\!\left[\sigma_{\min}(SN)^2\leq\beta\right]
            \leq
            \Prob\!\left[\|G-I_k\|_{\op}\geq1-\beta\right]
            =
            \Prob\!\left[\|G-I_k\|_{\op}^2\geq(1-\beta)^2\right]
            \leq
            \frac{2k^2}{(1-\beta)^2r}.
        \end{aligned} \qedhere
    \]
\end{proof}

Having established the robust embedding properties of a 4-tug-of-war matrix ensemble, it remains to notice that there exists a logspace-enumerable and polynomial-sized instantiation of this ensemble.
It follows from a standard construction of Alon--Babai--Itai
\cite[Proposition~6.3]{ABI86} giving a logspace-enumerable uniform sample space ${\cal T} = \{t \in \{\pm1\}^M\}$ of size $|{\cal T}| \leq 16M^4$ with vectors whose coordinates are 4-wise independent unbiased signs. More precisely, let $\ell=\lceil \log_2 M\rceil$ and $q=2^\ell$.
Taking $d=4$ and this choice of $q$ in
\cite[Proposition~6.3]{ABI86} yields $4$-wise independent random
variables uniformly distributed over $\mathbb{F}_q$ (equivalently, over
$[q]$ after fixing an identification). To obtain unbiased signs, we apply
the standard field trace
\[
    \operatorname{Tr}_{\mathbb{F}_q/\mathbb{F}_2}:
    \mathbb{F}_q\to\mathbb{F}_2,
    \qquad
    \operatorname{Tr}(z)
    =
    z+z^2+\cdots+z^{2^{\ell-1}},
\]
and map $b\in\mathbb{F}_2$ to $(-1)^b$; see, e.g., \cite{DPT24}.
Since the trace is a surjective $\mathbb{F}_2$-linear map, this preserves
$4$-wise independence and gives unbiased signs.

At this point, we have all the ingredients to construct and analyze the sketching family.

\begin{theorem}[Robust sketching family]\label{thm:tow-subspace}
    For every $1\leq k\leq n$, let $r:=16k^2$.
    There is an indexed family
    \[
        S_1,\ldots,S_m \in \mathbb Q^{r\times n},
        \qquad
        m\leq16(rn)^4,
    \]
    which is a $(k,\eta,m/2)$-weak robust sketching family, with $\eta=\Omega(1/\sqrt n)$.
    The family is enumerable in logspace and its coefficients can be computed in logspace.
\end{theorem}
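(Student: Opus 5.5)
The family is the Alon--Babai--Itai sample space, rescaled so that every map has operator norm at most one. Set $M := rn$ and let $\mathcal T=\{t\in\{\pm1\}^{M}\}$ be the logspace-enumerable $4$-wise independent unbiased sign space described above, of size $|\mathcal T|\leq 16M^4=16(rn)^4$. Index the coordinates of $t$ by pairs $(a,i)\in[r]\times[n]$ and put $S_t(a,i):=t_{a,i}/\sqrt r$. If $t$ is drawn uniformly from $\mathcal T$, then $S_t$ is a $4$-tug-of-war matrix in the sense of Sarl\'os. The members of the family are the rescaled maps $A_t:=S_t/\sqrt n$, with $m:=|\mathcal T|$; strictly speaking the theorem's $S_i$ are these rescaled maps. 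By \cref{lem:tow-sigma-min} we have $\|S_t\|_{\op}\leq\sqrt n$ for every $t$, so $\|A_t\|_{\op}\leq 1$. One point to handle is that $1/\sqrt{rn}$ is rational only when $rn$ is a square. Since $r=16k^2$, we have $1/\sqrt r=1/(4k)\in\mathbb Q$, so only the factor $1/\sqrt n$ is at issue. I would replace it by the rational $1/\lceil\sqrt n\rceil$, which is at most $1/\sqrt n$ and at least $1/(2\sqrt n)$. This keeps the norm bound and changes $\eta$ by at most a factor of $2$.

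The robustness argument goes subspace by subspace. Fix $W\in\Gr(k,n)$ and let $N$ have orthonormal columns spanning $W$. Then $\smin(A_t|_W)$ equals $\smin(S_tN)/\sqrt n$, up to the constant from the rounding step. Apply \cref{lem:tow-sigma-min} with $\beta=1/4$ and $r=16k^2$:
\[
\Prob_t\bigl[\smin(S_tN)\leq 1/2\bigr]\leq \frac{2k^2}{(3/4)^2\cdot 16k^2}=\frac{2}{9}<\frac12 .
\]
The probability is over the uniform distribution on $\mathcal T$, and it is an exact proportion of members of the family, not just an expectation. Hence at most $\tfrac29 m<m/2$ indices $t$ satisfy $\smin(A_t|_W)\leq 1/(2\sqrt n)$, or $1/(4\sqrt n)$ after rounding. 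Taking $\eta=1/(4\sqrt n)=\Omega(1/\sqrt n)$ gives the $(k,\eta,m/2)$ bound for every $W$ simultaneously. No net over the Grassmannian is needed, because the bound is a deterministic fraction of the sample space for each fixed $W$. Finally, $\Delta=m/2<m$.

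For explicitness, the $t$-th matrix is enumerated by running the ABI construction over $\mathbb F_q$ with $q=2^{\lceil\log_2 M\rceil}$ and applying the trace map coordinatewise. Each sign is obtained by evaluating a degree-$3$ polynomial over $\mathbb F_q$ at the point labelled $(a,i)$ and then taking the trace. This uses $O(\log q)=O(\log n)$ bits, so it runs in logspace, and the entries $\pm 1/(4k\lceil\sqrt n\rceil)$ are computable in logspace.

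I expect the only delicate step to be the bookkeeping: checking that the ABI space has exactly the tug-of-war distribution after the trace map and the indexing, and that the rational normalization preserves both $\|A_t\|_{\op}\leq1$ and the constant in $\eta$. Once that is settled, the probabilistic core is just \cref{lem:tow-sigma-min}.
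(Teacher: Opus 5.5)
Your proposal is correct and follows the paper's proof: you take the Alon--Babai--Itai $4$-wise independent sign space scaled by $1/(4k)$, rescale to operator norm at most one, and apply \cref{lem:tow-sigma-min} separately for each fixed subspace, reading the probability as an exact fraction of the indexed family. The only differences are cosmetic. You use $\beta=1/4$ (bad fraction $2/9$) where the paper uses $\beta=1/2$ (bad fraction exactly $1/2$), and you normalize by $\lceil\sqrt n\rceil$ where the paper uses a power of two $d\in[\sqrt n,2\sqrt n)$; both choices give rational entries and $\eta=\Omega(1/\sqrt n)$.
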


\begin{proof}
    By \cite{ABI86}, we can generate in logspace an indexed family of sign matrices
    \[
        E_1,\ldots,E_m\in\{\pm1\}^{r\times n},
        \qquad
        m=O((rn)^4),
    \]
    such that the $rn$ coordinates of a uniformly chosen $E_i$ are $4$-wise independent unbiased signs.
    Recalling that $r=16k^2$, and we can then define $S_i^{(0)}:=\frac1{4k}E_i$ so that a uniformly chosen $S_i^{(0)}$ is a $4$-tug-of-war matrix.

    Let $d$ be the smallest power of two satisfying
    $\sqrt n\leq d<2\sqrt n$.
    Define $S_i:=\frac1dS_i^{(0)}$.
    Every entry of $S_i$ is rational of polynomial bit length, and
    \[
        \|S_i\|_{\op}
        \leq
        \frac{\sqrt n}{d}
        \leq1.
    \]

    Now fix a $k$-dimensional subspace $W\subseteq\C^n$ and let $N\in\C^{n\times k}$ have orthonormal columns spanning $W$.
    Applying \cref{lem:tow-sigma-min} with $\beta=1/2$ and $r=16k^2$,
    \[
        \Prob\!\left[
        \sigma_{\min}(S^{(0)}N)\leq\frac1{\sqrt2}
        \right]
        \leq
        \frac{2k^2}{(1/2)^2\,16k^2}
        =
        \frac12.
    \]
    Therefore at most half of the indices satisfy
    \[
        \sigma_{\min}(S_iN)
        \leq
        \frac1{\sqrt2\,d}.
    \]
    Since $d<2\sqrt n$, we may take $\eta:=\frac1{\sqrt2\,d} = \Omega(1/\sqrt n)$, establishing that the family is a $(k,\eta,m/2)$-weak robust sketching family.
\end{proof}

\subsection{Composing families}

For the rank-completion application, the final maps must have output dimension $k$ (and hence rank at most $k$), whereas the sketching family has output dimension $r=\Theta(k^2)$. We now combine the sketching family with the coordinate construction in order to reduce the output dimension from $r$ to $k$.

\begin{theorem}[Composition]\label{thm:composition-subspace-designs}
    Let $A_1,\ldots,A_{m_A}\in\C^{n_{AB}\times n}$ be an $(m_A)$-sized $(k,\eta_A,\Delta_A)$-weak robust sketching family, and let $B_1,\ldots,B_{m_B}\in\C^{k\times n_{AB}}$ be an $(m_B)$-sized $(k,\eta_B,\Delta_B)$-weak robust subspace design, with $\Delta_A<m_A$ and $\Delta_B<m_B$.

    Then
    \[
        \{B_jA_i:i\in[m_A],\,j\in[m_B]\}
    \]
    is a $(k,\eta_{AB},\Delta_{AB})$-weak robust subspace design, where $m_{AB}=m_Am_B$ and $\eta_{AB}=\eta_A\eta_B$.
    Moreover,
    \[
        \Delta_{AB}
        =
        m_A\Delta_B+m_B\Delta_A-\Delta_A\Delta_B
        <
        m_{AB}.
    \]
    If the two families can be enumerated and their entries computed in logspace, then the composed family has the same property.
\end{theorem}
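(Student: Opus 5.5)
The plan is to fix an arbitrary $k$-dimensional subspace $W\subseteq\C^n$ and count the bad pairs $(i,j)$ by splitting them according to which factor fails. First, the norm condition is immediate from submultiplicativity: $\|B_jA_i\|_{\op}\le\|B_j\|_{\op}\|A_i\|_{\op}\le 1$.

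Call $i$ \emph{good} for $W$ if $\smin(A_i|_W)>\eta_A$. By the sketching-family property, at least $m_A-\Delta_A$ indices are good. For a good $i$, the map $A_i|_W$ is injective, so $W_i:=A_iW$ is a $k$-dimensional subspace of $\C^{n_{AB}}$. Applying the design property of the $B_j$ to $W_i$, at least $m_B-\Delta_B$ indices $j$ satisfy $\smin(B_j|_{W_i})>\eta_B$.

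The key step is the multiplicativity bound for such a good pair. Take a unit vector $x\in W$. Then
\[
\|B_jA_ix\|\ge \eta_B\|A_ix\|\ge \eta_B\eta_A ,
\]
since $A_ix\in W_i$. In matrix language, with $N$ an orthonormal basis of $W$, we have $\smin(B_jA_iN)\ge\smin(B_j|_{W_i})\,\smin(A_iN)$. To justify this, factor $A_iN=U R$, where $U$ has orthonormal columns spanning $W_i$ and $\smin(R)=\smin(A_iN)$. Hence $\smin(B_jA_i|_W)>\eta_A\eta_B$, so the pair is not bad at level $\eta_{AB}$.

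The counting is then direct. The bad pairs are contained in the set of pairs with $i$ bad, plus the pairs with $i$ good and $j$ bad for $W_i$. This gives at most
\[
\Delta_A m_B+(m_A-\Delta_A)\Delta_B=m_A\Delta_B+m_B\Delta_A-\Delta_A\Delta_B .
\]
Strictly speaking, if fewer than $\Delta_A$ indices are bad the bound only improves, because the count is monotone in the number of bad $i$; this holds since $m_B\ge\Delta_B$. Moreover,
\[
m_Am_B-\Delta_{AB}=(m_A-\Delta_A)(m_B-\Delta_B)>0,
\]
which gives $\Delta_{AB}<m_{AB}$.

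For logspace computability, enumerate the pairs $(i,j)$ with two counters. Each entry $(B_jA_i)(a,b)=\sum_c B_j(a,c)A_i(c,b)$ can be computed by recomputing the factor entries on demand and accumulating an iterated sum of products. This accumulation is a standard logspace (indeed $\mathsf{TC}^0$-style) arithmetic operation on rationals of polynomial bit length, and composing logspace computations stays in logspace.

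I expect no real obstacle here. The only care points are keeping the inequality strict versus non-strict in the definition of bad, and noting that the $B$-design is applied to the image subspace $W_i$, which depends on $i$. That dependence is harmless, because the design property holds for every $k$-dimensional subspace.
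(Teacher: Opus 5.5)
Your proposal is correct and matches the paper's proof step for step. You fix $W$ and take the $m_A-\Delta_A$ good indices $i$, then apply the $B$-design to the $k$-dimensional image $A_iW$ and chain the two lower bounds $\|B_jA_ix\|\ge\eta_B\|A_ix\|$ and $\|A_ix\|\ge\eta_A\|x\|$ (with the strict inequality justified, as you do, by your factorization argument). The only cosmetic difference is that you bound the bad pairs directly, using the monotonicity remark, whereas the paper lower-bounds the good pairs by $(m_A-\Delta_A)(m_B-\Delta_B)$ and takes the complement.
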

\begin{proof}
    The size follows directly, and the operator norms are bounded by $1$ because the operator norm is submultiplicative.

    Fix a $k$-dimensional subspace $W\subseteq\C^n$. There are at least $m_A-\Delta_A$ indices $i$ such that
    \[
        \sigma_{\min}(A_i|_W) > \eta_A.
    \]
    Fix such an $i$. Then $A_i|_W$ is injective, so $A_iW$ is a $k$-dimensional subspace of $\C^{n_{AB}}$. By the design property of the $B_j$'s, at least $m_B-\Delta_B$ indices $j$ satisfy
    \[
        \sigma_{\min}(B_j|_{A_iW}) > \eta_B.
    \]
    For every such pair $(i,j)$ and every non-zero $x\in W$,
    \[
        \begin{aligned}
            \|B_jA_ix\|
            &\geq
            \sigma_{\min}(B_j|_{A_iW})\,\|A_ix\|
            \geq
            \sigma_{\min}(B_j|_{A_iW})
            \sigma_{\min}(A_i|_W)\,\|x\|
            > \eta_B\eta_A\,\|x\|.
        \end{aligned}
    \]
    Hence there are at least
    \[
        (m_A-\Delta_A)(m_B-\Delta_B)
    \]
    good pairs. Equivalently, the number of bad pairs is at most
    \[
        \begin{aligned}
            m_Am_B-(m_A-\Delta_A)(m_B-\Delta_B)
            &=
            m_A\Delta_B+m_B\Delta_A-\Delta_A\Delta_B.
        \end{aligned}
    \]
    This proves the claimed parameters.

    For the final assertion, the pairs $(i,j)$ can be enumerated in logspace, and each entry of $B_jA_i$ can be computed entrywise in logspace from the corresponding entries of $A_i$ and $B_j$.
\end{proof}

We can now prove our main theorem.

\begin{proof}[Proof of \cref{thm:composed-subspace-design}]
    We compose the two constructions above.
    First, \cref{thm:tow-subspace} gives a robust sketching family $A_i:\C^n\to\C^{n_{AB}}$ with
    \[
        n_{AB}=r=16k^2,
        \qquad
        m_A=O((rn)^4),
        \qquad
        \eta_A=\Omega(1/\sqrt n),
        \qquad
        \Delta_A=m_A/2.
    \]
    Second, \cref{thm:coordinate}, applied in ambient dimension $n_{AB}$, gives a coordinate subspace design $B_j:\C^{n_{AB}}\to\C^k$ with
    \[
        m_B=\binom{n_{AB}}k,
        \qquad
        \eta_B=\frac1{2\sqrt{n_{AB}k}},
        \qquad
        \Delta_B=m_B-1.
    \]
    By \cref{thm:composition-subspace-designs}, their compositions form a weak robust subspace design with
    \[
        m=m_Am_B
        =
        O\!\left(
        (k^2n)^4\binom{16k^2}{k}
        \right)
        =
        O\!\left(n^4k^{O(k)}\right),
    \]
    where we used $\binom{16k^2}{k}\leq(16ek)^k$.
    Moreover,
    \[
        \eta
        =
        \eta_A\eta_B
        =
        \Omega\!\left(
        \frac1{\sqrt n}\cdot\frac1{\sqrt{16k^3}}
        \right)
        =
        \Omega\!\left(\frac1{k^{3/2}\sqrt n}\right).
    \]
    Finally, $\Delta = m_A\Delta_B+m_B\Delta_A-\Delta_A\Delta_B < m_Am_B=m$.

    All matrices have rational entries of polynomial bit length, and the construction is enumerable in space $O(\log m) = O(\log n + k \log k)$ by the corresponding properties of the two individual families and the composition theorem.
\end{proof}

\section{Isolating a unique small quantum witness and other applications} \label{sec:isolating-a-unique-small-quantum-witness-and-other-applications}

In this section we show our main application of robust subspace designs and kernel peeling, to quantum witness isolation, as well as new linear algebra containment results.
Before getting into this, let us recall some basic complexity theory notation:

We use $\mathsf L$, $\mathsf{RL}$, $\mathsf{coRL}$, and $\mathsf{BPL}$ for deterministic logspace and its randomized one-sided- and two-sided-error variants.
These machines have read-only access to the $n$-bit input, and an additional work tape where they may only use $O(\log n)$ tape cells.
Randomized machines run in polynomial time, and error probabilities are at most $1/3$. A family of circuits or matrices is \emph{logspace uniform} if a deterministic logspace algorithm can output its description on a write-only uni-directional tape.

In an oracle class such as $\mathsf{RL}^{\mathcal O}$, the machine has access to a fixed problem in $\mathcal O$ and may write a polynomial-length query on a write-only oracle tape that is not counted as workspace. If the oracle problem has a promise, answers to queries outside that promise may be arbitrary. All claimed correctness and error guarantees must hold regardless of those answers.

\subsection{Quantum space-bounded complexity classes}

Let $s(n)=\Omega(\log n)$ be a space bound, and let $c,k:\mathbb{N}\rightarrow[0,1]$ satisfy $c(n)>k(n)$ for every $n$.
The functions $c$ and $k$ are the completeness and soundness thresholds, respectively. We assume that $s$, $c$, and $k$ are computable in space $O(s(n))$. All unitary circuits use a fixed finite universal gate set with entries in $\mathbb Q(i)$.

Following \cite{FeffermanRemscrim2021}, we use space-uniform families of unitary quantum circuits. Specifically, a family $\{Q_x\}$ is space-$s$ uniform if a deterministic Turing machine, on input $x\in\{0,1\}^n$, runs in space $O(s(n))$ and outputs the description of $Q_x$. Each circuit acts on $O(s(n))$ qubits and consists of at most $2^{O(s(n))}$ gates.

\begin{definition}[Quantum space-bounded computation]
    A promise problem $L=(L_{\mathrm{yes}},L_{\mathrm{no}})$ belongs to $\mathsf{Q_U SPACE}_{c,k}(s(n))$ if there exists a space-$s$ uniform family of unitary quantum circuits $\{Q_x\}$ such that
    \begin{itemize}
        \item if $x\in L_{\mathrm{yes}}$, then $\Pr[Q_x\text{ accepts}]\geq c(n)$,
        \item if $x\in L_{\mathrm{no}}$, then $\Pr[Q_x\text{ accepts}]\leq k(n)$.
    \end{itemize}
    Here $Q_x$ acts on $a_n=O(s(n))$ qubits initialized in $\ket{0}^{\otimes a_n}$ and accepts if a final computational-basis measurement of the first qubit yields outcome $1$.
\end{definition}

For the witness-based classes below, the circuit $Q_x$ acts on $a_n$ auxiliary qubits and $w_n$ witness qubits, where $a_n+w_n=O(s(n))$. Given a witness $\ket{\psi}\in\mathbb{C}^{2^{w_n}}$, the circuit is initialized to $\ket{0}^{\otimes a_n}\ket{\psi}$, and it accepts if the first of the auxiliary qubits is measured to be in state $\ket{1}$ at the end of the computation.
Throughout, all witness states are understood to be normalized.

\begin{definition}[Quantum space-bounded Merlin--Arthur protocols]
    A promise problem $L=(L_{\mathrm{yes}},L_{\mathrm{no}})$ belongs to $\mathsf{Q_U MASPACE}_{c,k}(s(n))$ if there exists a space-$s$ uniform family of unitary verification circuits $\{Q_x\}$ such that
    \begin{itemize}
        \item if $x\in L_{\mathrm{yes}}$, then there exists a witness $\ket{\psi}\in\mathbb{C}^{2^{w_n}}$ with $\Pr[Q_x\text{ accepts }\ket{\psi}]\geq c(n)$,
        \item if $x\in L_{\mathrm{no}}$, then every witness $\ket{\psi}\in\mathbb{C}^{2^{w_n}}$ satisfies $\Pr[Q_x\text{ accepts }\ket{\psi}]\leq k(n)$.
    \end{itemize}
\end{definition}

We next introduce space-bounded analogues of the classes $\mathsf{FewQMA}$ and $\mathsf{UQMA}$ introduced in \cite{jain2011poweruniquequantumwitness}.
These classes mirror the classical restrictions underlying $\mathsf{FewP}$ (introduced in \cite{All86}) and $\mathsf{UP}$ (introduced in \cite{VALIANT197620}), and their logspace counterparts $\mathsf{FewL}$ and $\mathsf{UL}$ (introduced in \cite{BJLR91}): the number of accepting computation paths is replaced by the dimension of an accepting witness subspace $W_x$, which is then restricted to either polynomial dimension ($\mathsf{FewQMA}$) or dimension 1 ($\mathsf{UQMA}$).

In the quantum setting, we additionally require an acceptance gap between the accepting subspace and its orthogonal complement.
To distinguish this requirement from soundness on no-instances, let $\ell:\mathbb{N}\rightarrow[0,1]$ be computable in space $O(s(n))$, with $\ell(n)<c(n)$.
This parameter bounds the acceptance probability of witnesses orthogonal to the accepting subspace on yes-instances.

\begin{definition}[Few quantum witnesses]
    A promise problem $L=(L_{\mathrm{yes}},L_{\mathrm{no}})$ belongs to $\mathsf{FewQ_U MASPACE}_{c,\ell,k}(s(n))$ if there exist a polynomial $q$ and a space-$s$ uniform family of unitary verification circuits $\{Q_x\}$ such that
    \begin{itemize}
        \item if $x\in L_{\mathrm{yes}}$, then there exists a subspace $W_x\subseteq\mathbb{C}^{2^{w_n}}$ with $1\leq\dim W_x\leq q(n)$ such that
        \[
            \Pr[Q_x\text{ accepts }\ket{\psi}]\geq c(n)
            \quad\text{for every }\ket{\psi}\in W_x,
        \]
        and
        \[
            \Pr[Q_x\text{ accepts }\ket{\phi}]\leq \ell(n)
            \quad\text{for every }\ket{\phi}\in W_x^\perp,
        \]
        \item if $x\in L_{\mathrm{no}}$, then every witness $\ket{\psi}\in\mathbb{C}^{2^{w_n}}$ satisfies $\Pr[Q_x\text{ accepts }\ket{\psi}]\leq k(n)$.
    \end{itemize}
\end{definition}

\begin{definition}[A unique quantum witness]
    A promise problem $L=(L_{\mathrm{yes}},L_{\mathrm{no}})$ belongs to $\mathsf{UQ_U MASPACE}_{c,\ell,k}(s(n))$ if there exists a space-$s$ uniform family of unitary verification circuits $\{Q_x\}$ such that
    \begin{itemize}
        \item if $x\in L_{\mathrm{yes}}$, then there exists a one-dimensional subspace $W_x\subseteq\mathbb{C}^{2^{w_n}}$ such that
        \[
            \Pr[Q_x\text{ accepts }\ket{\psi}]\geq c(n)
            \quad\text{for every }\ket{\psi}\in W_x,
        \]
        and
        \[
            \Pr[Q_x\text{ accepts }\ket{\phi}]\leq \ell(n)
            \quad\text{for every }\ket{\phi}\in W_x^\perp,
        \]
        \item if $x\in L_{\mathrm{no}}$, then every witness $\ket{\psi}\in\mathbb{C}^{2^{w_n}}$ satisfies $\Pr[Q_x\text{ accepts }\ket{\psi}]\leq k(n)$.
    \end{itemize}
\end{definition}

We will focus on the special cases:
\[
    \begin{aligned}
        \mathsf{BQ_U L}
        &:=\mathsf{Q_U SPACE}_{2/3,\,1/3}(\log n),\\
        \mathsf{Q_U MAL}
        &:=\mathsf{Q_U MASPACE}_{2/3,\,1/3}(\log n),\\
        \mathsf{Q_U MAL}_1
        &:=\mathsf{Q_U MASPACE}_{1,\,1/3}(\log n),\\
        \mathsf{FewQ_U MAL}_1
        &:=\mathsf{FewQ_U MASPACE}_{1,\,1/3,\,1/3}(\log n),\\
        \mathsf{UQ_U MAL}_1
        &:=\mathsf{UQ_U MASPACE}_{1,\,1/3,\,1/3}(\log n).
    \end{aligned}
\]
Here we note that, for logarithmic space, the polynomial upper bound on $\dim W_x$ is automatic, since the witness register has polynomial dimension.
The only additional restriction in $\mathsf{FewQ_U MAL_1}$ compared to $\mathsf{Q_U MAL_1}$ is therefore the acceptance
bound on $W_x^\perp$ on yes-instances.

Allowing general quantum channels, including intermediate measurements, gives the corresponding classes without the unitary subscript $\mathsf{U}$.
Fefferman and Remscrim \cite[Lemma~18]{FeffermanRemscrim2021} showed that
\[
    \mathsf{BQL}
    =\mathsf{QMAL}
    =\mathsf{BQ_U L}
    =\mathsf{Q_U MAL}.
\]
Consequently, the bounded-error versions of our Few and Unique
classes also coincide with $\mathsf{BQ_U L}$.
The key idea in \cite{FeffermanRemscrim2021} behind removing a logarithmic-size quantum witness is to first amplify the verifier using space-efficient error reduction (see e.g. \cite{fefferman_et_al:LIPIcs.ICALP.2016.14}) and then replace the witness by the maximally mixed state. Since the witness space has polynomial dimension, this leaves an inverse-polynomial acceptance gap, which can again be amplified to obtain a bounded-error algorithm.
This procedure does not, however, preserve perfect completeness. Our focus is therefore on the perfect-completeness witness
classes, for which we have
\[
    \mathsf{UQ_U MAL}_1
    \subseteq \mathsf{FewQ_U MAL}_1
    \subseteq \mathsf{Q_U MAL}_1
    \subseteq \mathsf{Q_U MAL}.
\]
Note that the bounded-error equivalence between space-bounded unitary circuits and general quantum channels does not establish $\mathsf{Q_U MAL}_1=\mathsf{QMAL}_1$.

One motivation for studying $\mathsf{Q_U MAL}_1$ is that it completely characterizes the complexity of testing the singularity of well-conditioned matrices \cite{FeffermanRemscrim2021}.
This makes it a natural quantum counterpart to $\mathsf{C_{=}L}$, the classical counting class characterized by testing whether a general (potentially ill-conditioned) matrix is singular (see e.g. \cite{allender_complexity_1999}).
We show that $\mathsf{FewQ_U MAL}_1$ and $\mathsf{UQ_U MAL}_1$ likewise admit complete problems obtained by imposing additional spectral-gap and kernel-dimension promises.
We will use this characterization to prove our main application, \cref{thm:fewqumal1-in-rl-to-uqumal1}.
To state these characterizations, we introduce the following promise problems:

\medskip
\noindent
\begin{tabular}{@{}p{0.33\linewidth}@{\hspace{1.0em}}l@{\hspace{1.2em}}p{0.53\linewidth}@{}}
    \addlinespace[0.5em]
    \hline
    \addlinespace[0.5em]
    $\kappa$-$\mathsf{conditioned}$-$\mathsf{SINGULAR}$
    & \textit{Input:}
    & $A\in\mathbb{C}^{n\times n}$ with $\|A\|_{\op} \leq 1$. \\

    & \textit{Yes:}
    & $\dim\ker(A)\geq 1$. \\

    & \textit{No:}
    & $\dim\ker(A)=0$ and $\smin(A) \geq 1/\kappa(n)$. \\
    \addlinespace[0.5em]
    \hline
    \addlinespace[0.5em]

    $\kappa$-$\mathsf{conditioned}$-$\mathsf{FewSINGULAR}$
    & \textit{Input:}
    & $A\in\mathbb{C}^{n\times n}$ with $\|A\|_{\op} \leq 1$ and $\splus(A) \geq 1/\kappa(n)$. \\

    & \textit{Yes:}
    & $\dim\ker(A)\geq 1$. \\

    & \textit{No:}
    & $\dim\ker(A)=0$. \\
    \addlinespace[0.5em]
    \hline
    \addlinespace[0.5em]

    $\kappa$-$\mathsf{conditioned}$-$\mathsf{UniqueSINGULAR}$
    & \textit{Input:}
    & $A\in\mathbb{C}^{n\times n}$ with $\|A\|_{\op} \leq 1$ and $\splus(A) \geq 1/\kappa(n)$. \\

    & \textit{Yes:}
    & $\dim\ker(A)= 1$. \\

    & \textit{No:}
    & $\dim\ker(A)=0$. \\
    \addlinespace[0.5em]
    \hline
    \addlinespace[0.5em]
\end{tabular}
\medskip

We write $\pcSINGULAR, \pcFewSINGULAR$ and $\pcUSINGULAR$ for the above problems with polynomial conditioning parameter, i.e. for the case $\kappa(n) = \poly(n)$.

\begin{restatable}[Completeness of poly-conditioned singularity]{lemma}{conditionedsingularitycompleteness}
    \label{thm:conditioned-singularity-completeness}
    Under deterministic logspace many-one reductions,
    \begin{enumerate}[label=(\roman*)]
        \item $\pcSINGULAR$ is $\mathsf{Q_U MAL}_1$-complete (proven in \cite[Lemma~35]{FeffermanRemscrim2021}),
        \item $\pcFewSINGULAR$ is $\mathsf{FewQ_U MAL}_1$-complete, and
        \item $\pcUSINGULAR$ is $\mathsf{UQ_U MAL}_1$-complete.
    \end{enumerate}
\end{restatable}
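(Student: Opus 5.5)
Part (i) is \cite[Lemma~35]{FeffermanRemscrim2021}, so only (ii) and (iii) need proof. Both follow the same two-direction template as that lemma. The plan is to reuse its constructions and track the extra promises: the spectral gap $\splus(A)\ge 1/\poly(n)$ and the kernel dimension. The key point is that the hypothesis on $W_x^\perp$ (acceptance at most $\ell=1/3$) is exactly what produces the gap $\splus$ on yes-instances. Conversely, the gap $\splus$ is exactly what gives the verifier an acceptance bound on the orthogonal complement of its perfectly accepting subspace.

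\textbf{Hardness (reduction from the class to the problem).} Given a $\mathsf{FewQ_U MAL}_1$ verifier $Q_x$, I would form, in logspace, a matrix in the style of Kitaev's construction. The simplest candidate avoids the clock altogether. Let $\Pi_0=\ket{0}\!\bra{0}^{\otimes a_n}\otimes I$ and $\Pi_{\mathrm{rej}}$ be the projector onto rejection of the output qubit, and set
\[
    H := \Pi_0^{\perp} + \Pi_0 Q_x^*\Pi_{\mathrm{rej}}Q_x\Pi_0 .
\]
This is PSD with polynomial dimension and has norm at most $2$, which we rescale. Its kernel is exactly $\ket{0}\otimes W_x$, where $W_x$ is the perfectly accepting subspace, so $\dim\ker H=\dim W_x$. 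On $\Pi_0$ the operator is the rejection operator $I-M$ restricted to the witness register, with $M$ the acceptance operator. The gap condition states that on $W_x^\perp$ acceptance is at most $1/3$, and we need this for \emph{every} state in $W_x^\perp$, not only eigenvectors. Since $M$ commutes with the projector onto its $1$-eigenspace, $W_x^\perp$ is invariant, and the condition gives $\lambda(I-M)\ge 2/3$ there. So $\splus(H)\ge 2/3$.

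On no-instances every witness is accepted with probability at most $1/3$, so $H\succeq \tfrac23 I$ and the matrix is nonsingular with the same bound. The entries of $Q_x$ are a product of $2^{O(\log n)}$ gates over $\mathbb Q(i)$, so they are not directly logspace computable exactly. I would handle this the way \cite[Lemma 35]{FeffermanRemscrim2021} does, namely by using the Feynman--Kitaev history-state Hamiltonian on polynomially many clock steps, whose entries are single gates. Its kernel is in bijection with $W_x$, via history states of perfectly accepted witnesses. A spectral gap of $1/\poly(n)$ then follows from the standard geometric lemma (Kitaev's projection lemma), using the $2/3$ bound above on the complement of the history states of $W_x$. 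This gives a $\pcFewSINGULAR$ instance. When $\dim W_x=1$ the kernel has dimension exactly $1$, which gives a $\pcUSINGULAR$ instance.

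\textbf{Membership (problem in the class).} Given $A$ with $\splus(A)\ge 1/\kappa$, I would use the phase-estimation verifier from \cite[Lemma 35]{FeffermanRemscrim2021}, applied to the Hermitian $H=A^*A$ after dilating it to a logspace unitary. The witness is a kernel vector. The verifier performs phase estimation to precision $\ll 1/\kappa^2$ and accepts iff the estimated phase is $0$. With exact perfect completeness on $\ker A$, this yields their $\mathsf{Q_U MAL}_1$ protocol. The new requirement is the bound on $W_x^\perp$ on yes-instances. Take $W_x=\ker A$ (dimension $\ge 1$, or exactly $1$ in the Unique case). Any witness in $W_x^\perp$ is supported on eigenvalues of $A^*A$ that are at least $\kappa^{-2}$. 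For such eigenvalues the accept probability of the phase test is bounded away from $1$. This is the same estimate they use for soundness, where $\smin\ge 1/\kappa$ holds globally; here we apply it only on the complement. We then amplify to $1/3$ by sequential repetition, which preserves perfect completeness on $\ker A$ and keeps the space logarithmic.

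\textbf{Main obstacle.} I expect the main difficulty to be checking that the Fefferman--Remscrim phase test accepts kernel vectors with probability exactly $1$, and that the amplification step is applied \emph{uniformly} to every state of $W_x^\perp$ and not only on average. Non-eigen superpositions could in principle interfere across repetitions. I would resolve this by noting that the acceptance operator of the full test is a function of $A^*A$, so it commutes with the projector onto $\ker A$. Hence the worst case over $W_x^\perp$ is attained at an eigenvector.
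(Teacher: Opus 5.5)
Your plan is correct and matches the paper's architecture. For hardness you use the clock Hamiltonian $H_x=H_{\mathrm{prop}}+H_{\mathrm{in}}+H_{\mathrm{out}}$ with $\ker H_x=F_xW_x$. For containment you use the Fefferman--Remscrim kernel-verification test on $A^*A$ with $\delta=\kappa^{-2}$. You are also right to discard the clock-free operator for logspace-computability reasons.

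Where you genuinely differ is the yes-instance spectral gap. You appeal to Kitaev's geometric lemma. Since that lemma needs trivially intersecting null spaces, you must apply it to the restrictions of $H_{\mathrm{prop}}$ and $H_{\mathrm{in}}+H_{\mathrm{out}}$ to $(F_xW_x)^\perp$. This subspace is invariant for both, because $F_xW_x$ lies in both kernels. The angle between the restricted null spaces then comes from the two-projector identity $\lambda_{\max}(P_1+P_2)=1+\norm{P_1P_2}_{\op}$, applied to $\Pi_{\mathrm{in}}$ and $V_T^*\Pi_{\mathrm{acc}}V_T$ restricted to $(\ket{0^a}\otimes W_x)^\perp$. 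Here $\norm{P_1P_2}_{\op}^2$ is exactly the maximal acceptance probability on $W_x^\perp$. This step is what converts your $2/3$ bound, stated only on the $\ket{0^a}$ block, into control over histories of arbitrary initial states. Together with the $\Omega(1/T^2)$ gap of $H_{\mathrm{prop}}$ it yields $\lambda_{\min}^+(H_x)=\Omega(\gamma/T^3)$.

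The paper instead argues directly. For $\ket{y}\perp\ker H_x$ it compares $\ket{y}$ with the zero-energy history of $\Pi_{W_x}\ket{\psi_0}$, and by telescoping and Cauchy--Schwarz gets $\norm{\ket{y}}^2\le\frac{24(T+1)^2}{\gamma}\bra{y}H_x\ket{y}$. That argument is self-contained, avoids the path-Laplacian gap, and gives a slightly better bound. Your route is more modular and reuses the textbook soundness analysis.

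Finally, your amplification step in the containment direction is unnecessary. The cited test already accepts every vector in the span of eigenvectors with eigenvalue at least $\delta$ with probability at most $1/3$, which is exactly how the paper uses it. If you did amplify in the unitary model, note that rerunning the test after copying the outcome and uncomputing does not reduce the acceptance probability. Repetitions would therefore need fresh ancilla registers.
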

The hardness proofs for parts~$(ii)$ and~$(iii)$ refine the circuit-to-Hamiltonian argument used for part~$(i)$ in \cite[Lemma~35]{FeffermanRemscrim2021}.
The additional observations are that the construction preserves the dimension of the perfectly accepting witness subspace and that an acceptance gap on its orthogonal complement yields a spectral gap also on yes-instances.
The corresponding containment results follow from the kernel-verification procedure also invoked in \cite[Lemma~35]{FeffermanRemscrim2021}.
We defer the proofs to \cref{sec:completeness-of-conditioned-singularity}.

\subsection{Isolating a unique small quantum witness}

We can now finally prove our main application of robust subspace designs from the introduction:
\qwitnessisolation*
\begin{proof}
    We prove that $\pcFewSINGULAR$ can be decided by a randomized logspace machine with access to an oracle in $\mathsf{UQ_U MAL}_1$.
    The result then follows since $\pcFewSINGULAR$ is $\mathsf{FewQ_U MAL_1}$-complete by \cref{thm:conditioned-singularity-completeness}.

    \textbf{The oracle problem:}
    We introduce an auxiliary promise problem $\pcUSINGULAR'$.
    Its input consists of a positive semidefinite matrix $H\in\mathbb C^{n\times n}$ and a matrix $Q\in\mathbb C^{k'\times n}$, where $0\leq k'\leq n$.
    Writing $M=H+Q^*Q$, the input promise is that for all entries $i,j$
    \[
        |Q(i,j)| \leq 1,
        \qquad
        \|H\|_{\op} \leq 1,
        \qquad
        \splus(M)\geq1/\poly(n).
    \]
    The yes-instances satisfy $\dim\ker(M)=1$, and the no-instances satisfy $\dim\ker(M)=0$.

    This problem belongs to $\mathsf{UQ_U MAL}_1$.
    Indeed, given $(H,Q)$ on the input tape, a deterministic logspace machine can compute the entries of
    \[
        \widehat M:=\frac{H+Q^*Q}{1+nk'}.
    \]
    Then $\|\widehat M\|_{\op}\leq1$ since $\|Q^*Q\|_\op \leq nk'$, $\ker(\widehat M)=\ker(M)$, and $\splus(\widehat M)\geq1/\poly(n)$, yielding an immediate reduction to $\pcUSINGULAR$, which belongs to $\mathsf{UQ_U MAL}_1$ by \cref{thm:conditioned-singularity-completeness}.
    The purpose of supplying~$H$ and~$Q$ separately is that the oracle can reread the entries of~$Q$ to obtain easy access to $Q^*Q$, whereas the randomized reduction only needs to stream the entries of $Q$ once.

    \textbf{The randomized reduction:}
    Let $A\in\mathbb C^{n\times n}$ be a promised input to $\pcFewSINGULAR$, and put
    \[
        H=A^*A,
        \qquad
        k=\dim\ker(A).
    \]
    Then
    \[
        H\succeq0,\qquad
        \|H\|_{\op}\leq1,\qquad
        \ker(H)=\ker(A),\qquad
        \lambda_{\min}^+(H)=\splus(A)^2\geq1/\poly(n).
    \]

    For each $k'\in\{1,\ldots,n-1\}$, we use \cref{thm:finite-precision-design} to generate a strong robust subspace design in randomized logspace of the form
    \[
        \mathcal Q^{(k')}
        =\{Q_1^{(k')},\ldots,Q_{m_{k'}}^{(k')}\}
        \subseteq\mathbb C^{k'\times n},
    \]
    where $m_{k'}=\poly(n)$ and the robustness parameter $\eta_{k'}$ is inverse-polynomial.
    The failure bound in \cref{thm:finite-precision-design} ensures this happens with probability at least 2/3 for all sufficiently large $n$.
    In particular every entry of each $Q_i^{(k')}$ is of modulus at most $1$ in that case. Smaller dimensions can be handled by padding $A$ with an identity block.
    For $k'=0$, take the family consisting of just the zero map.

    For each generated matrix $Q_i^{(k')}$, we write $(H,Q_i^{(k')})$ to the oracle tape, except we replace entries of modulus greater than $1$ by $0$.
    We then query $\pcUSINGULAR'$ with $(H,Q_i^{(k')})$ as input.
    We accept if any query returns yes, and reject otherwise.
    It is easy to see that the reduction uses logarithmic workspace and polynomial time:
    The entries of $H=A^*A$ can be computed in deterministic logspace from the input, and each $Q_i^{(k')}$ is streamed directly to the oracle tape as it is generated.
    Beyond the workspace needed for these procedures, only loop indices need to be stored. Hence, the total space usage is only $O(\log n)$.

    \textbf{Soundness:}
    Suppose that $A$ is a no-instance.
    Then $H$ is positive definite and
    \[
        \lambda_{\min}(H)=\smin(A)^2\geq1/\poly(n).
    \]
    For every generated matrix $Q$, regardless of whether the robust-design construction succeeds, $H+Q^*Q\succeq H$.
    Hence $H+Q^*Q$ is invertible and
    \[
        \lambda_{\min}(H+Q^*Q)
        \geq\lambda_{\min}(H)
        \geq1/\poly(n).
    \]
    Every query is therefore a promised no-instance of $\pcUSINGULAR'$, and the algorithm rejects with probability~$1$.

    \textbf{Completeness:}
    Suppose that $A$ is a yes-instance, so $k\geq1$.
    If $k=1$, the query with $Q=0$ is a promised yes-instance, and the algorithm accepts.

    If $k\geq2$, consider the family generated for $k'=k-1$.
    With probability at least $2/3$, this family has the required robust-design property.
    On this event, the kernel peeling lemma, \cref{lem:kernel-peeling}, specifically its Hermitian version \cref{rem:psd-kernel-peeling}, gives some $Q_i^{(k-1)}$ such that
    \[
        \dim\ker\bigl(H+(Q_i^{(k-1)})^*Q_i^{(k-1)}\bigr)=1
    \]
    and
    \[
        \lambda_{\min}^+
        \bigl(H+(Q_i^{(k-1)})^*Q_i^{(k-1)}\bigr)
        \geq1/\poly(n).
    \]
    This query is a promised yes-instance of $\pcUSINGULAR'$, so the algorithm accepts.
    Other queries may lie outside the oracle promise, but their answers cannot prevent acceptance.
    Thus the algorithm accepts with probability at least $2/3$.
\end{proof}

Since the only place where we use randomness in the reduction above is the construction of the robust subspace designs, we immediately find the following:

\begin{corollary} \label{cor:robust-l}
    If there exists a logspace-uniform weak robust subspace design of polynomial size with inverse polynomial robustness parameter for every ambient subspace dimension $1\leq k< n$, then $\mathsf{FewQ_U MAL_1} \subseteq \mathsf{L}^{\mathsf{UQ_U MAL_1}}$.
\end{corollary}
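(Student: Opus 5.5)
The plan is to rerun the proof of \cref{thm:fewqumal1-in-rl-to-uqumal1} with the randomized design construction replaced by the hypothesized deterministic one. By \cref{thm:conditioned-singularity-completeness}, $\pcFewSINGULAR$ is $\mathsf{FewQ_U MAL}_1$-complete under deterministic logspace many-one reductions. Since $\mathsf{L}^{\mathsf{UQ_U MAL}_1}$ is closed under such reductions, it suffices to decide $\pcFewSINGULAR$ by a deterministic logspace machine with an oracle for $\pcUSINGULAR'$. That auxiliary problem was already shown to lie in $\mathsf{UQ_U MAL}_1$ in the proof above.

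\textbf{The reduction.} Given $A$, the machine sets $H=A^*A$, which is computable entrywise in logspace. For each $k'\in\{0,1,\dots,n-1\}$ it enumerates the assumed logspace-uniform $(k',\eta_{k'},\Delta_{k'})$-weak robust subspace design $\{Q^{(k')}_i\}$, with $m_{k'}=\poly(n)$ and $\eta_{k'}\ge 1/\poly(n)$; for $k'=0$ it uses the zero map. Each pair $(H,Q^{(k')}_i)$ is streamed onto the oracle tape, and the machine accepts if any query answers yes. Because $\|Q^{(k')}_i\|_{\op}\le 1$, every entry has modulus at most $1$, so the input promise $|Q(i,j)|\le1$ holds automatically and no truncation is needed. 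The workspace is loop counters plus the logspace enumeration and entry computation, with outputs composed in the usual streaming manner of logspace transducers.

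\textbf{Correctness.} Soundness is unchanged from before. If $A$ is a no-instance, then $H\succ0$ with $\lambda_{\min}(H)\ge1/\poly(n)$, and $H+Q^*Q\succeq H$. So every query is a promised no-instance and the machine rejects.

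For completeness, let $k=\dim\ker A\ge1$. If $k=1$, the $k'=0$ query is a promised yes-instance. If $k\ge2$, apply \cref{rem:psd-kernel-peeling} with $k'=k-1$. This is legitimate because the design is now guaranteed deterministically rather than with probability $2/3$, and because a weak design is exactly what the lemma requires. It yields an $i$ with $\dim\ker(H+Q_i^*Q_i)=1$ and
\[
\lambda^+_{\min}(H+Q_i^*Q_i)\ge \eta_{k-1}^2\,\splus(A)^2/3\ge 1/\poly(n).
\]
That query is therefore a promised yes-instance. Answers to queries outside the promise can only add acceptances, which is harmless on yes-instances, and on no-instances all queries are inside the promise.

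\textbf{Main obstacle.} The only delicate point is the exact uniformity of the polynomial bounds. The proof needs one polynomial $p$ with $\eta_{k'}\ge 1/p(n)$ for all $k'<n$, so that the oracle problem's conditioning promise $\splus\ge 1/\poly(n)$ refers to a single fixed polynomial. I read this uniform bound into the hypothesis. The spectral bound also needs $\splus(A)\ge1/\kappa(n)$, which is part of the promise of $\pcFewSINGULAR$.

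Finally, the ambient dimension constraint $k'<n$ must be met. Small $n$, or the degenerate case $k=n$ with $A=0$, can be handled by padding $A$ with an identity block, as before, which keeps the problem inside the hypothesis's range $1\le k'<n$.
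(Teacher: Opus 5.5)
Your proposal is correct and is the paper's own argument. The paper obtains this corollary from the observation that the only randomness in the proof of \cref{thm:fewqumal1-in-rl-to-uqumal1} is the construction of the robust designs, so substituting the hypothesized logspace-uniform design yields a deterministic $\mathsf{L}^{\mathsf{UQ_U MAL}_1}$ reduction. (A minor aside: the case $k=n$ needs no padding, since $k'=n-1$ already lies in the range $1\le k'<n$.)
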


Instead of computing the robust subspace designs used in \cref{thm:fewqumal1-in-rl-to-uqumal1}, we can also supply it as an advice string to obtain:
\begin{corollary}
    $\mathsf{FewQ_U MAL_1} \subseteq \mathsf{L}^{\mathsf{UQ_U MAL_1}}/\poly$.
\end{corollary}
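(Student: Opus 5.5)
The plan is to reuse the reduction from the proof of \cref{thm:fewqumal1-in-rl-to-uqumal1} verbatim, but to replace the randomly generated designs by ones supplied as advice. The advice for input length $n$ (after the identity padding used there, so the ambient dimension is a fixed polynomial $N=N(n)$) will be the concatenation, over all $k'\in\{1,\ldots,N-1\}$, of the entries of one fixed family $\mathcal Q^{(k')}$. Each such family will be a $(k',\eta_{k'},m_{k'}/4)$-strong robust subspace design with $m_{k'}=\poly(N)$ and $\eta_{k'}\ge 1/(144\sqrt{Nk'})$, and all of its entries will be $O(\log(Nm_{k'}))$-bit rationals bounded by $1$ in modulus.

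The first step is to establish that such families exist. \Cref{thm:finite-precision-design} succeeds with positive probability, in fact close to $1$, for $m_{k'}=\lceil Ck'(N-k')\log(Nk')\rceil+N$. Hence for each $k'$ some outcome of the sampler has the strong design property, and we fix it. On the success event the entries have modulus at most $\|\widetilde Q_\ell\|_{\op}\le 1$, so no truncation is needed. Since every strong robust design with $\Delta<m$ is also a weak one, \cref{rem:psd-kernel-peeling} applies. The total advice length is $\sum_{k'} m_{k'}k'N\cdot O(\log N)=\poly(n)$, and it depends only on $n$, as required.

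The second step is the machine itself. A deterministic logspace machine with read access to the advice tape loops over $k'$ and over $i\in[m_{k'}]$. For each pair it writes $(H,Q_i^{(k')})$ to the oracle tape, with $H=A^*A$ computed entrywise from the input and $Q_i^{(k')}$ copied from the advice. It then queries $\pcUSINGULAR'$, and additionally queries with $Q=0$ to cover the case $k=1$. It accepts if and only if some answer is yes. Only loop counters and the logspace computation of the entries of $H$ are stored, so the workspace is $O(\log n)$.

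The final step is correctness, which follows the earlier proof. Soundness holds because $H+Q^*Q\succeq H\succ 0$ for every query, so each query is a promised no-instance. Completeness holds because for $k\ge2$ the family $\mathcal Q^{(k-1)}$ is deterministically a valid design, so \cref{rem:psd-kernel-peeling} yields an index giving a promised yes-instance with inverse-polynomial gap $\eta^2\gamma/3$. Now there is no failure probability at all. The only point requiring care is that the existence argument and the padding are uniform in $n$, namely that the advice length and $\eta$ stay polynomially bounded, and this is immediate from the stated parameters of \cref{thm:finite-precision-design}.
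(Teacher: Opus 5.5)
Your proposal is correct and follows the paper's argument. The only randomness in the proof of \cref{thm:fewqumal1-in-rl-to-uqumal1} is the input-independent choice of the robust subspace designs, and the design property holds for all subspaces at once. So for each $k'$ a single successful outcome of the sampler from \cref{thm:finite-precision-design} can be fixed and supplied as polynomial-length advice; you spell out the details the paper leaves implicit, noting that a positive success probability requires the padded dimension $N$ to be large enough.
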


Here we use standard non-uniform logspace notation.
A promise problem $L=(L_{\mathrm{yes}}, L_{\mathrm{no}})$ is in $\L/\poly$ if there exist advice strings $a(n)\in\{0,1\}^{\poly(n)}$, and a deterministic logspace machine $M$ such that, for every $x\in L_{\mathrm{yes}} \cup L_{\mathrm{no}}$, $x\in L$ if and only if $M(x,a(|x|))$ accepts.
Similarly, $\L^{\mathsf{UQ_{U}MAL}_1}/\poly$, or $\L^{\mathsf{Q_{U}MAL}_1}/\poly$ denote deterministic logspace with polynomial-size advice and oracle access to a promise problem in $\mathsf{UQ_{U}MAL}_1$ or $\mathsf{Q_{U}MAL}_1$, respectively.

\subsection{Other applications}
In this subsection we show that robust and non-robust subspace designs and our kernel peeling lemma help us to solve linear algebra problems such as nullity testing and linear system feasibility in logspace.
The non-robust subspace designs based on the Wronskian construction from \cref{thm:wronskian-subspace-design} allow us to put nullity testing in $\mathsf{C_= L}$ and linear system feasibility in $\mathsf{L^{C_= L}}$, recovering the bounds from \cite{allender_complexity_1999}.
Robust subspace design allow us to put well-conditioned versions of nullity testing and feasibility in $\mathsf{coRL^{Q_U MAL_1}}$ and $\mathsf{BPL^{Q_U MAL_1}}$, respectively.
Before getting into the proofs, let us first formalize the problem statements:

\medskip
\noindent
\begin{tabular}{@{}p{0.5\linewidth}@{\hspace{-10em}}l@{\hspace{0.5em}}p{\dimexpr0.59\linewidth+2.8em\relax}@{}}
    \addlinespace[0.5em]
    \hline
    \addlinespace[0.5em]
    $\mathsf{NULL}_k$
    & \textit{Input:}
    & $A\in\mathbb{C}^{n\times n}$. \\

    & \textit{Yes:}
    & $\dim\ker(A)\geq k$. \\

    & \textit{No:}
    & $\dim\ker(A)<k$. \\
    \addlinespace[0.5em]
    \hline
    \addlinespace[0.5em]

    $\mathsf{FSLE}$
    & \textit{Input:}
    & $A\in\mathbb{C}^{n\times n}$ and $b\in\mathbb{C}^{n}$. \\
    & \textit{Yes:}
    & $\exists x\in\mathbb{C}^{n}: Ax=b$. \\
    & \textit{No:}
    & $\forall x\in\mathbb{C}^{n}: Ax\neq b$. \\
    \addlinespace[0.5em]
    \hline
    \addlinespace[0.5em]

    $\kappa$-$\mathsf{conditioned}$-$\mathsf{NULL}_k$
    & \textit{Input:}
    & $A\in\mathbb{C}^{n\times n}$ with $\|A\|_{\op} \leq 1$. \\

    & \textit{Yes:}
    & $\dim\ker(A)\geq k$. \\

    & \textit{No:}
    & $\dim\ker(A)<k$ and $\sigma_k(A)\geq 1/\kappa(n)$. \\
    \addlinespace[0.5em]
    \hline
    \addlinespace[0.5em]

    $(\kappa,\varepsilon)$-$\mathsf{conditioned}$-$\mathsf{FSLE}$
    & \textit{Input:}
    & $A\in\mathbb{C}^{n\times n}$ and $b\in\mathbb{C}^{n}$ with $\|A\|_{\op},\|b\|\leq 1$ and $\splus(A) \geq 1/\kappa(n)$. \\

    & \textit{Yes:}
    & $\exists x\in\mathbb{C}^{n}: Ax=b$. \\

    & \textit{No:}
    & $\forall x\in\mathbb{C}^{n}: \|Ax-b\|\geq\varepsilon(n)$. \\

    \addlinespace[0.5em]
    \hline
    \addlinespace[0.5em]
\end{tabular}
\medskip

We write $\mathsf{pc}$-$\mathsf{NULL}_k$ and $\mathsf{pc}$-$\mathsf{FSLE}$ for $\kappa$-$\mathsf{conditioned}$-$\mathsf{NULL}_k$ and $(\kappa,\varepsilon)$-$\mathsf{conditioned}$-$\mathsf{FSLE}$, respectively, with $\kappa(n),\varepsilon(n)^{-1} = \poly(n)$.
Furthermore, we write $\NULL$ without an index if $k$ is given as part of the input.
We define $\kappa\cNULL$, and $\pcNULL$ in the natural way.

\begin{remark}
    The above problems $\NULL$ and $\FSLE$ are formulated for matrices with integer entries from $\mathbb Z$ in \cite{allender_complexity_1999}. They remark that their computational complexities remain unchanged when one allows rational entries from $\mathbb Q$ instead.
    We make the further observation that their complexity further remains unchanged if one allows Gaussian rationals $\mathbb Q(i)$, i.e. complex numbers of the form $(a+ib)/c$ for integers $a,b,c$.
    The reason is that we can use the following realification of a complex matrix $M = A+iB$ with $A,B\in\mathbb{Q}^{n\times n}$:
    \[
        M = A+iB
        \quad \mapsto \quad
        \mathcal{R}(M):=
        \begin{pmatrix}
            A & -B\\
            B & A
        \end{pmatrix}.
    \]
    Then
    \[
        \rank_{\mathbb{R}}(\mathcal{R}(M)) = 2\rank_{\mathbb{C}}(M),
    \]
    and, for any complex vector $b=s+it \in \mathbb Q(i)^n$,
    \[
        Mx=b \quad \text{if and only if} \quad
        \mathcal{R}(M)\begin{pmatrix}
            u\\v
        \end{pmatrix}
        = \begin{pmatrix}
            s\\
            t
        \end{pmatrix},
    \]
    where $u=\Re(x)$ and $v=\Im(x)$.

    Throughout, we always assume that the entries of the input matrices are complex rationals of polynomial bit complexity.
\end{remark}

\subsubsection{Classical upper bounds for nullity testing and feasibility}\label{sec:classical-upper-bounds}
We use ordinary subspace designs and kernel peeling to give an alternative reduction from $\mathsf{NULL}$ to $\mathsf{SINGULAR}$. This recovers the classical upper bounds of Allender, Beals, and Ogihara \cite{allender_complexity_1999} for these problems, whose original proof uses a sequence of
reductions including Berkowitz's algorithm \cite{BERKOWITZ1984147}.

Recall that a language $L$ belongs to $\mathsf{C_=L}$ if there exists a nondeterministic logspace machine~$M$ such that, for every input $x$, $x\in L$ if and only if $M$ has equally many accepting and rejecting computation paths on $x$.
We use the following standard facts.
\begin{fact}[\cite{Allender_Ogihara_94,Mahajan_Vinay_97,allender_complexity_1999}]
    $\SINGULAR$ is $\mathsf{C_=L}$-complete under deterministic logspace many-one reductions, and $\mathsf{C_=L}$ is closed under deterministic logspace conjunctive truth-table reductions.
\end{fact}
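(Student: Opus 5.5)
The plan is to derive both parts from the characterization of $\mathsf{C_=L}$ through $\mathsf{GapL}$: $L\in\mathsf{C_=L}$ if and only if there is a function $f\in\mathsf{GapL}$, meaning the difference between the numbers of accepting and rejecting paths of a nondeterministic logspace machine, with $x\in L \iff f(x)=0$. This equivalence follows from the definition by the usual trick of adding dummy branches that balance acceptance against rejection. With it in hand, the first claim reduces to showing that the determinant is $\mathsf{GapL}$-complete, and the second to closure properties of $\mathsf{GapL}$.

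\emph{Membership of $\SINGULAR$ in $\mathsf{C_=L}$.} First I would clear denominators and realify (as in the remark above), so it suffices to treat integer matrices. Then I would invoke the Mahajan--Vinay clow-sequence formula. It writes $\det(A)$ as a signed sum of weights of closed-walk sequences, and each such sequence can be traced by a nondeterministic logspace machine that stores only the current head, current vertex and a parity bit. Integer entries are handled by branching according to their binary expansions, and signs by swapping acceptance with rejection. This places $\det\in\mathsf{GapL}$, so $\det(A)=0$ is a $\mathsf{C_=L}$ condition.

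\emph{Hardness.} Given a nondeterministic logspace machine $M$ and input $x$, I would build its configuration graph $G_x$, which is acyclic and polynomial size. After normalization it has a single start node $s$ and two sinks, acc and rej. The gap $f(x)$ equals the number of $s\to\mathrm{acc}$ paths minus the number of $s\to\mathrm{rej}$ paths. Merge the two sinks into one node $t$, giving edges into $t$ from accepting configurations weight $+1$ and from rejecting configurations weight $-1$, and add a back edge $t\to s$. In $I-B$, with $B$ the weighted adjacency matrix, the cycle covers are then exactly the $s\to t$ paths closed by the back edge, with self-loops elsewhere. Consequently $\det(I-B)=\pm f(x)$, with a sign computable from the path length parity; I would fix the parity by padding all paths to equal length. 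The matrix is logspace computable, which gives $x\in L\iff \det(I-B)=0$.

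\emph{Conjunctive truth-table closure.} Suppose a logspace machine produces queries $q_1,\dots,q_m$ to $L\in\mathsf{C_=L}$, witnessed by $f\in\mathsf{GapL}$, and accepts if and only if all answers are yes. Set $g(x)=\sum_{i=1}^m f(q_i(x))^2$. Then $g(x)=0$ if and only if every $f(q_i(x))=0$. To see $g\in\mathsf{GapL}$: $\mathsf{GapL}$ is closed under products, by running two machines sequentially and multiplying outcomes via XOR of acceptance; under polynomial sums, by nondeterministically guessing $i$; and under composition with logspace-computable queries, since bits of $q_i(x)$ can be recomputed on demand. This is the step I expect to need the most care, because the query is not stored: the simulated machine must be run twice on the same recomputed $q_i$, and the logspace recomputation and the path bookkeeping must be arranged to respect the space bound.
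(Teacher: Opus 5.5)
The paper does not prove this Fact; it only cites it. Your outline is the standard argument from the cited works: $\mathsf{C_=L}$ as zero sets of $\mathsf{GapL}$ functions, $\det\in\mathsf{GapL}$ via Mahajan--Vinay clow sequences, hardness via the configuration graph, and ctt-closure via $\sum_i f(q_i(x))^2$ with closure of $\mathsf{GapL}$ under products, polynomial sums and logspace composition. The membership and closure parts are fine as sketched. One small addition: the clow-tracing machine must also keep a counter of edges used, since only clow sequences of total length exactly $n$ occur in the formula. This still fits in logarithmic space.

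The hardness computation, however, is wrong as written. In $I-B$ every vertex, including $s$ and $t$, has diagonal entry $1$, so the identity permutation is also a cycle cover and contributes $1$. Every other nonzero term is a single nontrivial cycle $C$ plus fixed points, because all cycles use the back edge $t\to s$ and so pairwise share vertices. A cycle of length $\ell\geq 2$ contributes $(-1)^{\ell-1}\cdot(-1)^{\ell}w(C)=-w(C)$. With the back edge of weight $1$ this gives
\[
    \det(I-B)=1-\sum_{P:\,s\to t}w(P)=1-f(x).
\]
So your map sends every $x$ with $f(x)=0$ to a matrix of determinant $1$, i.e.\ a nonsingular matrix, which is the opposite of what is needed. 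Note also that the minus signs in $I-B$ already make each cycle's contribution independent of $\ell$, so path-length parity is not the issue for this matrix.

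The fix is easy: set the diagonal entry of $I-B$ in row $s$ to $0$. Then every nonzero term must move $s$, so it is exactly one $s\to t$ path closed by the back edge, with all other vertices fixed. The determinant becomes $-f(x)$, so $x\in L$ if and only if the matrix is singular. Alternatively, use $+B$ with weight-$1$ self-loops only on vertices other than $s$ and $t$. In that version a cycle contributes $(-1)^{\ell-1}w(C)$, and your padding of all paths to a common length is exactly what makes the sign uniform.
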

Here a conjunctive truth-table reduction from a language (or promise problem) $A$ to $B$ outputs a polynomial-length list of queries to $B$ so that an input is in $A$ if and only if every query is a yes-instance. We write $A\preccurlyeq_{\mathrm{ctt}}^{\mathsf L} B$.

\begin{lemma}\label{lem:null-reduces-to-singular}
    $\NULL \preccurlyeq^{\mathsf L}_{\mathrm{ctt}}\SINGULAR$.
\end{lemma}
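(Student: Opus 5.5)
The plan is to work with the Hermitian matrix $H:=A^*A$ and apply the Hermitian version of kernel peeling (\cref{rem:psd-kernel-peeling}) with the non-robust Wronskian designs of \cref{thm:wronskian-subspace-design}, for every candidate nullity below $k$. The reduction maps $(A,k)$ to the list of queries
\[
    H+Q^*Q,
    \qquad Q\in\mathcal Q^{(k')},\quad k'\in\{0,1,\dots,k-1\},
\]
all of which must be singular. Here $\mathcal Q^{(0)}=\{0\}$, and for $1\le k'<n$ we set $\mathcal Q^{(k')}=\{Q_{k',a}:a\in\{0,\dots,m-1\}\}$ with $m=k'(n-k')+1$.

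The trivial cases are handled separately. If $k=0$, output one fixed singular matrix. If $k>n$, output one fixed nonsingular matrix. If $k=n$, the instance is yes iff $A=0$, which we can also encode by a fixed query. Since $\ker H=\ker A$, it suffices to argue about $H$.

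\textbf{Yes-instances.} Suppose $\dim\ker A\ge k$. Each $Q\in\mathcal Q^{(k')}$ has rank at most $k'\le k-1$. Then
\[
    \ker H\cap\ker Q\subseteq\ker(H+Q^*Q)
\]
has dimension at least $k-(k-1)=1$. So every query is singular.

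\textbf{No-instances.} Suppose $k_0:=\dim\ker A<k$. If $k_0=0$, then the $k'=0$ query $H$ is already nonsingular. Otherwise take $k'=k_0$ and $W=\ker H$.
\begin{itemize}
    \item By \cref{thm:wronskian-subspace-design}, at most $\Delta=k_0(n-k_0)<m$ of the kernels $\ker Q_{k_0,a}$ meet $W$ nontrivially.
    \item Hence some $Q=Q_{k_0,a}$ is injective on $W$, so $\eta:=\smin(Q|_W)>0$.
    \item After rescaling $Q$ so that $\|Q\|_{\op}\le1$ (this does not change kernels or the relevant singularity), \cref{lem:kernel-removal} applies with this positive $\eta$ and $\gamma=\splus(H)>0$. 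The design is non-robust, but robustness is not needed, since we only require invertibility and not a quantitative bound.
    \item This shows $H+Q^*Q$ is invertible, i.e.\ the peeling of \cref{lem:kernel-peeling} with $k'=k$ removes the whole kernel.
\end{itemize}
So some query is nonsingular. Singularity is unaffected by scaling, so the unnormalized $Q_{k',a}$ can be used in the queries directly.

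\textbf{Logspace computability.} The entries of $H=A^*A$ are inner products of columns of $A$. The entries of $Q_{k',a}$ are computable in space $O(\log nm)$ by \cref{thm:wronskian-subspace-design}, with polynomial bit length. So the entries of $Q^*Q$ are sums of $k'\le n$ products of polynomial-bit rationals, computable in logspace, and the same holds for the entries of $H+Q^*Q$. Looping over $k'$ and $a$ takes only counters. The outputs lie in $\mathbb Q(i)$, which is fine by the realification remark. The condition $m>\Delta$ is what makes the no-case argument work.

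The main obstacle is the no-case: we do not know $k_0$ in advance, and the peeling must hit exactly that dimension. This is why the reduction queries every $k'<k$ and relies on conjunctive acceptance. In the yes-case every $k'$ is harmless because of the rank bound, and in the no-case the single value $k'=k_0$ suffices.
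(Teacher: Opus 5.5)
Your proof is correct and follows the paper's route: Wronskian designs for every $k'<k$ (with the zero map for $k'=0$), the rank bound on yes-instances, and peeling at $k'=\dim\ker A$ on no-instances, accepted conjunctively. The one difference is that you first pass to $H=A^*A$ and use the Hermitian variant of \cref{rem:psd-kernel-peeling}, querying $H+Q^*Q$ (so $m$ queries per $k'$) instead of the paper's $A+Q_i^*Q_j$ over all pairs $(i,j)$. Your claim that rescaling $Q$ does not change singularity is true, but it does not follow from scaling the whole matrix. It holds because both summands are positive semidefinite, so $\ker(H+tQ^*Q)=\ker H\cap\ker Q$ for every $t>0$; this identity also settles the no-case directly, without \cref{lem:kernel-removal}.
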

\begin{proof}
    Let $(A,k)$ be an instance of $\NULL$, where $A\in\mathbb C^{n\times n}$ is the matrix which we wish to decide whether $\dim\ker(A)\geq k$.
    For each $1\leq k'<k$, take the Wronskian maps
    \[
        Q_1^{(k')},\ldots,Q_{m_{k'}}^{(k')}\in\mathbb C^{k'\times n}
    \]
    from \cref{thm:wronskian-subspace-design}, evaluated at $m_{k'}=k'(n-k')+1$ distinct points $T_{k'}=\{0,\ldots,m_{k'}-1\}$.
    For $k'=0$, take the zero map.
    Output singularity queries for each matrix
    \[
        A+\bigl(Q_i^{(k')}\bigr)^*Q_j^{(k')},
        \qquad 0\leq k'<k,\quad i,j\in[m_{k'}].
    \]
    There are polynomially many queries, and their entries are computable in deterministic logspace by the logspace uniformity of the Wronskian construction and since matrix multiplication is in deterministic logspace.

    If $\dim\ker(A)\geq k$, every perturbation has rank at most $k'<k$, and hence
    \[
        \rank\!\left(A+\bigl(Q_i^{(k')}\bigr)^*Q_j^{(k')}\right)
        \leq \rank(A)+k'
        \leq n-k+k'<n.
    \]
    Thus every query is singular.
    Conversely, if $1\leq d:=\dim\ker(A)<k$, the kernel peeling argument in \cref{lem:kernel-peeling}, applied with $k'=d$, yields indices $i,j$ for which
    \[
        \dim\ker\! \left( A+\bigl(Q_i^{(d)}\bigr)^*Q_j^{(d)} \right) =0.
    \]
    For $d=0$, the query $A$ itself is nonsingular.
    Thus at least one query is nonsingular, proving the reduction.
\end{proof}

Consequently, $\NULL\in\mathsf{C_=L}$, recovering \cite[Proposition~2.5]{allender_complexity_1999}.
This also gives an alternative route to $\FSLE\in\mathsf L^{\mathsf{C_=L}}$ through \cite[Corollary 2.15]{allender_complexity_1999}: for a linear system $Ax=b$,
\[
    Ax=b\text{ is feasible}
    \quad\Longleftrightarrow\quad
    \rank(A)=\rank([A\; b]).
\]
Both ranks can be determined in deterministic logspace using nullity queries.

\subsubsection{Quantum upper bounds for nullity testing and feasibility}
\label{sec:quantum-upper-bounds}

We first apply randomized robust subspace designs to conditioned nullity testing. The reduction has one-sided error: on a yes-instance all oracle queries are singular, while on a no-instance a successful design produces an invertible query.

\begin{theorem}\label{thm:pc-null-in-corl-to-qumal1}
    $\pcNULL\in\mathsf{coRL}^{\mathsf{Q_U MAL}_1}$.
\end{theorem}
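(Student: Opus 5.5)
We mirror the classical reduction of \cref{lem:null-reduces-to-singular}, replacing the Wronskian designs by the randomized robust designs of \cref{thm:finite-precision-design} and the $\mathsf{C_=L}$ oracle by $\pcSINGULAR$, which is $\mathsf{Q_U MAL}_1$-complete by \cref{thm:conditioned-singularity-completeness}(i). The machine works with $H:=A^*A$, so it can use the Hermitian form of kernel peeling (\cref{rem:psd-kernel-peeling}) with a single index. On input $(A,k)$ with $\|A\|_\op\le 1$, for each $0\le k'<k$ it generates in randomized logspace a family $\{Q^{(k')}_i\}_{i\le m_{k'}}\subseteq\C^{k'\times n}$ with $m_{k'}=\poly(n)$ and inverse-polynomial robustness $\eta_{k'}$. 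As in the proof of \cref{thm:fewqumal1-in-rl-to-uqumal1}, entries of modulus above $1$ are zeroed, $k'=0$ uses the zero map, and small $n$ is handled by padding. For each $i$ it queries the oracle (a logspace reduction of $\pcSINGULAR$ supplied with $H$ and $Q$ separately, exactly like $\pcUSINGULAR'$) on
\[
M_i^{(k')}:=\frac{H+(Q_i^{(k')})^*Q_i^{(k')}}{1+nk'}.
\]
It accepts iff \emph{every} query answers ``singular''. This is a $\mathsf{coRL}$-type machine: it errs only by rejecting yes-instances? No — the error is one-sided on no-instances, as we now check.

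\textbf{Yes-instances} ($\dim\ker A\ge k$). Every query satisfies $\rank(M_i^{(k')})\le \rank(H)+k'\le n-k+k'<n$, so it is genuinely singular. It is therefore a promised yes-instance of $\pcSINGULAR$, since that problem imposes no conditioning on yes-instances. Hence every query answers yes regardless of the randomness, and the machine accepts with probability $1$. This is the perfect side required for $\mathsf{coRL}$.

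\textbf{No-instances} ($d:=\dim\ker A<k$ and $\sigma_k(A)\ge 1/\kappa$). Answers to queries outside the promise may be arbitrary, so we need just one query that is a promised no-instance, i.e.\ invertible with $\smin\ge 1/\poly(n)$. The main obstacle is that $\splus(A)$ may be tiny: only $\sigma_k(A)$ is controlled, and singular values $\sigma_{d+1},\dots,\sigma_{k-1}$ may be arbitrarily small but nonzero.

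\begin{itemize}
\item \emph{Main route.} Let $d':=|\{j:\sigma_j(A)<1/\kappa\}|\le k-1$. Write $H=H_0+E$, where $H_0$ zeroes the eigenvalues of $H$ below $1/\kappa^2$. Then $\dim\ker H_0=d'$, $\splus(H_0)\ge 1/\kappa^2$, and $\|E\|_\op<1/\kappa^2$. Apply \cref{lem:kernel-peeling} via \cref{rem:psd-kernel-peeling} to $H_0$ with $k'=d'<k$. On the success event of the design (probability at least $2/3$), some $i$ gives $\smin(H_0+Q_i^*Q_i)\ge \eta^2/(3\kappa^2)$.
\item \emph{Passing back to $H$.} Since $E\succeq 0$, we have $H+Q_i^*Q_i\succeq H_0+Q_i^*Q_i$, so $\lambda_{\min}(H+Q_i^*Q_i)\ge \eta^2/(3\kappa^2)$. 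This monotonicity is the key point: no additive error has to be absorbed.
\end{itemize}

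After dividing by $1+nk'$, the matrix $M_i^{(d')}$ is a promised no-instance, so the oracle answers ``nonsingular'' and the machine rejects with probability at least $2/3$. Finally, the space bound follows as in \cref{thm:fewqumal1-in-rl-to-uqumal1}: the matrices $Q$ are streamed once onto the oracle tape, and only loop counters are stored.
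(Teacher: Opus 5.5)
Your proof is correct and follows essentially the same route as the paper. The rank bound makes every query singular on yes-instances. On no-instances you split $H$ into a gapped positive semidefinite part plus a positive semidefinite remainder, peel the gapped part with the Hermitian kernel peeling of \cref{rem:psd-kernel-peeling}, and recover the bound for $H$ by monotonicity.

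The only difference is the truncation. The paper removes exactly the $k-1$ smallest spectral terms, so $\lambda_{\min}^+(H_2)=\lambda_k(H)$ and a single design with $k'=k-1$ suffices. Your threshold truncation at $1/\kappa^2$ leaves $d'$ unknown, so you must query designs for every $k'<k$; this is harmless but unnecessary.
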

\begin{proof}
    Similar to the proof in \cref{thm:fewqumal1-in-rl-to-uqumal1}, we consider an auxiliary problem $\pcSINGULAR'$. Its input consists of a positive semi-definite matrix $H\succeq0$ and a matrix $Q\in\C^{k'\times n}$, where $0\leq k'<n$, all entries of $Q$ are bounded by $|Q(i,j)| \leq 1$ and $\|H\|_{\op}\leq1$.
    Writing $M=H+Q^*Q$, the yes-instances have $\dim\ker M\geq1$, and the no-instances have $\lambda_{\min}(M)\geq1/\poly(n)$.
    This problem belongs to $\mathsf{Q_U MAL}_1$: given $(H,Q)$ on a read-only input tape, the entries of
    \[
        \widehat M:=\frac{H+Q^*Q}{1+nk'}
    \]
    are computable in deterministic logspace. The matrix $\widehat M$ is a contraction, has the same kernel as $M$, and is inverse-polynomially conditioned on no-instances.

    Let $(A,k)$ be a promised input to $\pcNULL$.
    Put $H=AA^*$.
    Then $H\succeq0$, $\|H\|_{\op}\leq1$, and $\dim\ker H=\dim\ker A$. If $k=1$, make the single oracle query $(H,0)$. Otherwise set $k'=k-1$ and use
    \cref{thm:finite-precision-design} to sample a robust subspace design
    \[
        \mathcal Q^{(k')}=
        \{Q_1^{(k')},\ldots,Q_m^{(k')}\}
        \subseteq\C^{k'\times n},
    \]
    of polynomial size and robustness $\eta=\Omega(1/\sqrt{nk'})$ with probability at least $2/3$. Query $\pcSINGULAR'$ on $(H,Q_i^{(k')})$ for each $i$, except replace entries of modulus larger than $1$ by $0$.
    Accept if all query answers are yes.
    Since entries of $H$ are computable in deterministic logspace from $A$, and the $Q_i^{(k')}$s are streamed to the oracle tape, the total space usage of the procedure is $O(\log n)$.

    If $\dim\ker A\geq k$, every query is singular, since
    \[
        \rank\bigl(H+(Q_i^{(k')})^*Q_i^{(k')}\bigr)
        \leq\rank H+k'\leq n-k+(k-1)<n.
    \]
    This also covers $k=1$ with $Q=0$. Hence the algorithm accepts yes-instances with certainty, regardless of the sampled subspace design.

    If $(A,k)$ is a no-instance, its promise gives $\lambda_k(H)=\sigma_k(A)^2\geq1/\poly(n)$.
    For $k=1$, $(H,0)$ is a promised no-query. For $k\geq2$, let $H_1$ contain the first $k'$ terms in a spectral decomposition of $H$, and put $H_2=H-H_1$. Then $H_1\succeq0$, $\dim\ker H_2=k'$, and $\lambda_{\min}^+(H_2)=\lambda_k(H)$. On the successful design event, the Hermitian case of \cref{lem:kernel-peeling,rem:psd-kernel-peeling} yields an index $i$ for which
    \[
        \lambda_{\min}
        \bigl(H_2+(Q_i^{(k')})^*Q_i^{(k')}\bigr)
        \geq\frac{\eta^2\lambda_k(H)}{3}
        \geq\frac1{\poly(n)}.
    \]
    Adding $H_1\succeq0$ preserves this lower bound. Thus one query is a promised no-instance, and the algorithm rejects.
    Answers to other queries outside the promise cannot change this outcome. The rejection probability is at least $2/3$.
\end{proof}

We next transfer this result to well-conditioned linear-system feasibility. The following rank-parity construction adapts the classical reduction of \cite[Corollary 2.15]{allender_complexity_1999}. We add the analysis of the spectral gap for our setting.

\begin{lemma}\label{lem:pcfsle-reduces-to-pcnull}
    Let $\kappa\geq1$ and $\varepsilon\in(0,1]$.
    Given an instance $(A,b)$ of $(\kappa,\varepsilon)$-$\mathsf{conditioned}$-$\mathsf{FSLE}$, one can construct in deterministic logspace a Hermitian contraction $H\in\C^{2n\times2n}$ such that
    \[
        Ax=b\text{ is feasible}
        \quad\Longleftrightarrow\quad
        \rank H\text{ is even},
        \qquad
        \lambda_{\min}^+(H)
        \geq\frac{\varepsilon^2}{18\kappa^2}.
    \]
\end{lemma}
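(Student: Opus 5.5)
The plan is to use the rank criterion already recalled for the classical case: $Ax=b$ is feasible iff $\rank(A)=\rank([A\;b])$, and in general $\rank([A\;b])\in\{\rank A,\rank A+1\}$. Hence $\rank A+\rank[A\;b]$ is even exactly when the system is feasible. I would encode both ranks in one Hermitian matrix via a block diagonal construction:
\[
H:=\frac12\begin{pmatrix} AA^* & 0\\ 0 & AA^*+bb^*\end{pmatrix}\in\C^{2n\times 2n}.
\]
Since $AA^*+bb^*=[A\;b][A\;b]^*$, we get $\rank H=\rank A+\rank[A\;b]$, which gives the parity equivalence. For the contraction property, $\|AA^*\|_\op\le1$ and $\|[A\;b]\|_\op^2\le\|A\|_\op^2+\|b\|^2\le2$, so the factor $\frac12$ suffices. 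The entries are products and sums of input entries, so $H$ is computable in deterministic logspace.

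The spectral gap is handled block by block. The first block has $\lambda_{\min}^+(\tfrac12AA^*)=\tfrac12\splus(A)^2\ge\frac1{2\kappa^2}$. For the second block, the feasible case is easy: $b\in\im A$, so the range of $AA^*+bb^*$ equals $\im A$. On it, $AA^*+bb^*\succeq AA^*$, giving $\lambda_{\min}^+\ge1/\kappa^2$.

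The main obstacle is the infeasible case. Decompose $b=b_1+b_2$ with $b_1\in\im A$ and $b_2\perp\im A$. The promise $\|Ax-b\|\ge\varepsilon$ for all $x$ gives $\|b_2\|=\dist(b,\im A)\ge\varepsilon$. The range of $AA^*+bb^*$ is $\im A\oplus\operatorname{span}(b_2)$. Take a unit vector $y=u+t\,\hat b_2$ in this range, with $u\in\im A$, $\hat b_2=b_2/\|b_2\|$ and $\|u\|^2+|t|^2=1$. Then
\[
y^*(AA^*+bb^*)y=\|A^*u\|^2+|\langle b_1,u\rangle+t\|b_2\||^2\ge \frac{\|u\|^2}{\kappa^2}+\bigl(|t|\varepsilon-\|u\|\bigr)_+^2,
\]
using $\|b_1\|\le1$. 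I would split into two cases at $\|u\|=\varepsilon/3$. If $\|u\|\ge\varepsilon/3$, the first term is at least $\varepsilon^2/(9\kappa^2)$. Otherwise $|t|\ge\sqrt{1-\varepsilon^2/9}\ge 2/3$ (roughly), so the second term is at least $(2\varepsilon/3-\varepsilon/3)^2=\varepsilon^2/9\ge\varepsilon^2/(9\kappa^2)$. Either way the nonzero eigenvalues of the second block are at least $\varepsilon^2/(9\kappa^2)$ before scaling.

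After the factor $\frac12$, this gives $\lambda_{\min}^+(H)\ge\min\{\frac1{2\kappa^2},\frac{\varepsilon^2}{18\kappa^2}\}=\frac{\varepsilon^2}{18\kappa^2}$. The inequality constants in the case split are the only delicate point, and I would tune the threshold there if the stated constant does not come out exactly.
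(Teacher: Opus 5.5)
Your proof is correct. It uses the same matrix $H=\frac12\operatorname{diag}(AA^*,TT^*)$ with $T=[A\;b]$, the same parity argument, and the same feasible-case bound as the paper; the genuine difference is the infeasible case. The paper argues at the level of the map $T$. For each $y\in\im T$ it writes down an explicit preimage $(x,t)$ with $t=\langle b_\perp,y\rangle/\|b_\perp\|^2$ and $x=A^+(y-tb)$, bounds its norm by $3\kappa\|y\|/\varepsilon$, and concludes $\|T^+\|_{\op}\le3\kappa/\varepsilon$, i.e.\ $\splus(T)\ge\varepsilon/(3\kappa)$. You instead lower-bound the quadratic form of $AA^*+bb^*$ directly on its range. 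You use $A^*\hat b_2=0$ to separate the two contributions and then split into cases on $\|u\|$. Both routes give exactly $\lambda_{\min}^+(TT^*)\ge\varepsilon^2/(9\kappa^2)$. The paper's preimage estimate is a single computation with no case distinction. Yours stays with the Hermitian matrix that is actually output, and it shows that the $\kappa$-loss only arises when $y$ has a substantial component in $\im A$; in the other case you get $\varepsilon^2/9$ with no $\kappa$ at all. No tuning of the threshold is needed. Since $\varepsilon\le1$, the condition $\|u\|<\varepsilon/3$ forces $|t|^2\ge 8/9$, so $|t|\ge 2\sqrt2/3>2/3$, and your second case gives $\varepsilon^2/9\ge\varepsilon^2/(9\kappa^2)$ exactly as written; the ``(roughly)'' can be dropped. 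Your explicit check $\|TT^*\|_{\op}\le\|A\|_{\op}^2+\|b\|^2\le2$ is a detail the paper leaves implicit.
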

\begin{proof}
    Put $T=[A\;b]$ and
    \[
        H:=\frac12
        \begin{pmatrix}
            AA^*&0\\
            0&TT^*
        \end{pmatrix}
        =\frac12
        \begin{pmatrix}
            AA^*&0\\
            0&AA^*+bb^*
        \end{pmatrix}.
    \]
    Its entries are computable exactly in deterministic logspace, and $\|H\|_{\op}\leq1$ because $\|A\|_{\op},\|b\|\leq1$.
    If $\rank A=r$, then $\rank T=r$ when $Ax=b$ is feasible and $\rank T=r+1$ otherwise.
    Hence $\rank H=r+\rank T$ has the claimed parity.

    It remains to bound the smallest non-zero eigenvalue. Since $\sigma_{\min}^+(A)\geq1/\kappa$, it suffices to prove
    \[
        \sigma_{\min}^+(T)\geq\frac{\varepsilon}{3\kappa},
    \]
    since $\lambda_{\min}^+(H) = \frac{1}{2} \min\{\left(\splus(A)\right)^2, \left(\splus(T)\right)^2\}$.

    If $Ax=b$ is feasible, then $b\in\im A$. The matrices $TT^*=AA^*+bb^*$ and $AA^*$ have the same kernel, and $TT^*\succeq AA^*$. Therefore $\sigma_{\min}^+(T)\geq\sigma_{\min}^+(A) \geq1/\kappa$.

    Otherwise, if $Ax=b$ is infeasible, decompose $b=b_{\parallel}+b_{\perp}$ with $b_{\parallel}\in\im A$ and $b_{\perp}\perp\im A$. The input promise gives $\|b_{\perp}\|\geq\varepsilon$.
    For $y\in\im T=\im A\oplus\operatorname{span}\{b_{\perp}\}$, set
    \[
        t:=\frac{\langle b_{\perp},y\rangle}{\|b_{\perp}\|^2},
        \qquad
        x:=A^+(y-tb),
    \]
    where $A^+$ is the Moore-Penrose pseudoinverse, which maps every vector $z\in \im(A)$ to its minimum norm preimage $x$ such that $Ax=z$.
    Then $T(x,t)=Ax+tb=y$. Moreover,
    \[
        |t|\leq\frac{\|y\|}{\varepsilon},
        \qquad
        \|x\|\leq\kappa\|y-tb\|
        \leq\kappa\left(1+\frac1\varepsilon\right)\|y\|,
    \]
    where we used that $\|A^+\|_{\op} \leq \kappa$.
    Thus $\|(x,t)\|\leq3\kappa\|y\|/\varepsilon$.
    The minimum-norm preimage of $y$ has no larger norm, so $\|T^+\|_{\op}\leq3\kappa/\varepsilon$, proving the claimed bound on $\sigma_{\min}^+(T)$ and hence on $H$.
\end{proof}

For every threshold $k\in\{1,\ldots,2n\}$, the query ``$\dim\ker H\geq k$'' is a promised instance of $\widetilde\kappa$-$\mathsf{conditioned}$-$\mathsf{NULL}_{k}$ with $\widetilde\kappa=18\kappa^2/\varepsilon^2$: if the answer is no, its $k$th smallest singular value is at least $\lambda_{\min}^+(H)$.
These queries determine $\rank H$ and therefore the feasibility of $Ax=b$.

\begin{theorem}\label{thm:pcfsle-upper-bounds}
    \label{thm:pcfsle-in-bpl-to-qumal1}
    $\pcFSLE\in\mathsf{BPL}^{\mathsf{Q_U MAL}_1}$.
\end{theorem}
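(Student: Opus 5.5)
The plan is to combine \cref{lem:pcfsle-reduces-to-pcnull} with the coRL reduction from \cref{thm:pc-null-in-corl-to-qumal1}, run once for every threshold, and amplify. On input $(A,b)$ we first build in deterministic logspace the Hermitian contraction $H\in\C^{2n\times 2n}$ of \cref{lem:pcfsle-reduces-to-pcnull}. Its entries are computable on demand, so we never store it. By that lemma, $\lambda_{\min}^+(H)\geq \varepsilon^2/(18\kappa^2)=1/\poly(n)$, and feasibility is equivalent to $\rank H$ being even. As remarked after the lemma, for every $k\in\{1,\dots,2n\}$ the pair $(H,k)$ is a promised instance of $\widetilde\kappa$-$\mathsf{conditioned}$-$\mathsf{NULL}_k$ with $\widetilde\kappa=\poly(n)$. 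Hence each query ``$\dim\ker H\geq k$'' is a legitimate $\pcNULL$ instance. Since $\dim\ker H=2n-\rank H$, knowing all these answers determines $\rank H$ and therefore the answer.

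The main step is to decide each $\pcNULL$ query with small error inside $\mathsf{BPL}^{\mathsf{Q_U MAL}_1}$. We use the $\mathsf{coRL}^{\mathsf{Q_U MAL}_1}$ procedure from \cref{thm:pc-null-in-corl-to-qumal1}. It accepts yes-instances with certainty, and it rejects no-instances with probability at least $2/3$, because the sampled design from \cref{thm:finite-precision-design} succeeds with at least that probability. The error is one-sided, so we repeat the procedure independently $t=O(\log n)$ times and declare ``$\dim\ker H\geq k$'' only if all repetitions accept. This brings the error per threshold below $1/(6n)$. A union bound over the $2n$ thresholds then makes all answers simultaneously correct with probability at least $2/3$.

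The algorithm itself is a simple scan. Since the true answers are monotone in $k$, we scan $k=1,\dots,2n$ and record the largest $k$ answered yes, taking $0$ if there is none. This value is $\dim\ker H$, and we accept iff $2n-\dim\ker H$ is even. The bookkeeping needs only the counters $k$, the repetition index and the best $k$ so far, together with the $O(\log n)$ workspace of each inner run. Each run regenerates its randomness and streams its design matrices to the oracle tape, so the total space is $O(\log n)$ and the time is polynomial.

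The main obstacle is the oracle interface. In \cref{thm:pc-null-in-corl-to-qumal1} the input $H'=AA^*$ was recomputed from $A$, whereas here the queried matrix is $H$ itself. We therefore apply that procedure with $H$ in the role of $AA^*$: using $H^{1/2}$ is unnecessary, because the oracle problem $\pcSINGULAR'$ takes the positive semidefinite matrix directly, and only $\dim\ker$ and $\lambda_k$ matter. We must check that $\lambda_k(H)$ is at least $\lambda_{\min}^+(H)$ on no-instances, which gives the spectral promise needed for kernel peeling with robustness $\eta=\Omega(1/\sqrt{nk})$. We also need the oracle answers outside the promise to be harmless. That holds because yes-instances make every query singular regardless of the design, and the soundness direction relies only on one promised no-query. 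The error is two-sided at the end, since both parities are decided from a probabilistic rank computation, which is why the class is $\mathsf{BPL}$ rather than $\mathsf{RL}$ or $\mathsf{coRL}$.
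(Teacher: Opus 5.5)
Your proposal is correct and is essentially the paper's argument. You build $H$ via \cref{lem:pcfsle-reduces-to-pcnull}, amplify the one-sided $\mathsf{coRL}^{\mathsf{Q_U MAL}_1}$ tester of \cref{thm:pc-null-in-corl-to-qumal1} to error $O(1/n)$ on each of the $2n$ threshold queries, and union bound to recover $\rank H$ and hence feasibility. The oracle-interface concern you flag is harmless either way: a black-box application to input $H$ would merely query $HH^*=H^2$, which has the same kernel and satisfies $\lambda_k(H^2)=\lambda_k(H)^2\geq 1/\poly(n)$.
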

\begin{proof}
    Amplify the algorithm of \cref{thm:pc-null-in-corl-to-qumal1} to error smaller than $1/(10n)$ on each threshold query. A union bound then lets us recover $\rank H$ with bounded error.
\end{proof}

\section{Acknowledgements}

The authors gratefully acknowledge stimulating discussions with Quinten Tupker on the topic of logspace complexity, as well as with Isabella Ziccardi about Gaussian sampling.
This work has received support under the program ``Investissement d'Avenir" launched by the French Government and implemented by ANR, with the reference “ANR-22-CMAS-0001, QuanTEdu-France”.
SA was supported in part by the European QuantERA project QOPT (ERA-NET Cofund 2022-25), the French PEPR integrated projects EPiQ (ANR-22-PETQ0007) and HQI (ANR-22-PNCQ-0002), and the French ANR project QUOPS (ANR-22-CE47-0003-01).

\paragraph{AI disclosure:}

The main direction and conceptual contributions (notion of robust subspace designs, kernel peeling and the complexity-theoretic applications) were conceived by the authors.
GPT 5.6 Sol came up with the Grassmannian net argument for the randomized construction and with an improved version of our explicit robust construction (improving the robustness parameter from inverse exponential to inverse polynomial in \cref{thm:coordinate}).
GPT 6.0 Astra came up with the idea for improving the size from $\binom{n}{k}$ to $\poly(n) k^{O(k)}$ through sketching in \cref{thm:composed-subspace-design}.
GPT 5.6 Sol, 6.0 Astra and 6.1 Sol were also used to improve the presentation of parts of the manuscript and to help identify errors.
The authors take full responsibility for the content and correctness of the manuscript.

\bibliographystyle{alpha}
\bibliography{biblio}

@article{ValiantV86,
  author        = {Leslie G. Valiant and Vijay V. Vazirani},
  title         = {{NP} is as Easy as Detecting Unique Solutions},
  journal       = {Theor. Comput. Sci.},
  volume        = {47},
  number        = {3},
  pages         = {85--93},
  year          = {1986},
  doi           = {10.1016/0304-3975(86)90135-0},
  url           = {https://doi.org/10.1016/0304-3975(86)90135-0},
}

@inproceedings{DPT24,
  author        = {Dean Doron and Edward Pyne and Roei Tell},
  title         = {Opening Up the Distinguisher: A Hardness to Randomness Approach for {BPL} = {L} That Uses Properties of {BPL}},
  booktitle     = {Proceedings of the 56th Annual ACM Symposium on Theory of Computing (STOC)},
  pages         = {2039--2049},
  publisher     = {ACM},
  year          = {2024},
  doi           = {10.1145/3618260.3649772},
  note          = {See Claim 3.17 in the ECCC version, Report TR23-208},
}

@article{ABI86,
  author        = {Noga Alon and L{\'a}szl{\'o} Babai and Alon Itai},
  title         = {A Fast and Simple Randomized Parallel Algorithm for the Maximal Independent Set Problem},
  journal       = {Journal of Algorithms},
  volume        = {7},
  number        = {4},
  pages         = {567--583},
  year          = {1986},
  doi           = {10.1016/0196-6774(86)90019-2},
}

@inproceedings{CW09,
  author        = {Kenneth L. Clarkson and David P. Woodruff},
  title         = {Numerical Linear Algebra in the Streaming Model},
  booktitle     = {Proceedings of the 41st Annual ACM Symposium on Theory of Computing (STOC)},
  pages         = {205--214},
  publisher     = {ACM},
  year          = {2009},
  doi           = {10.1145/1536414.1536445},
}

@inproceedings{Sarlos06,
  author        = {Tam{\'a}s Sarl{\'o}s},
  title         = {Improved Approximation Algorithms for Large Matrices via Random Projections},
  booktitle     = {Proceedings of the 47th Annual IEEE Symposium on Foundations of Computer Science (FOCS)},
  pages         = {143--152},
  publisher     = {IEEE Computer Society},
  year          = {2006},
  doi           = {10.1109/FOCS.2006.37},
}

@article{DesrosiersForrester2006,
  author        = {Patrick Desrosiers and Peter J. Forrester},
  title         = {{Asymptotic correlations for Gaussian and Wishart matrices with external source}},
  journal       = {International Mathematics Research Notices},
  volume        = {2006},
  pages         = {27395},
  year          = {2006},
  doi           = {10.1155/IMRN/2006/27395},
  eprint        = {math-ph/0604012},
  archivePrefix = {arXiv},
}

@inproceedings{BrakensiekChenDharZhang2026,
  author        = {Joshua Brakensiek and Yeyuan Chen and Manik Dhar and Zihan Zhang},
  title         = {From Random to Explicit via Subspace Designs with Applications to Local Properties and Matroids},
  booktitle     = {Proceedings of the 58th Annual ACM Symposium on Theory of Computing},
  series        = {STOC 2026},
  publisher     = {Association for Computing Machinery},
  year          = {2026},
  doi           = {10.1145/3798129.3800778},
}

@techreport{Applebaum2026,
  author        = {Benny Applebaum},
  title         = {From Nondeterministic to Modular Branching Programs, Uniformly and in Logspace},
  number        = {TR26--131},
  institution   = {Electronic Colloquium on Computational Complexity},
  year          = {2026},
}

@inproceedings{FeffermanRemscrim2021,
  author        = {Bill Fefferman and Zachary Remscrim},
  title         = {Eliminating Intermediate Measurements in Space-Bounded Quantum Computation},
  booktitle     = {Proceedings of the 53rd Annual ACM SIGACT Symposium on Theory of Computing},
  series        = {STOC 2021},
  pages         = {1343--1356},
  publisher     = {Association for Computing Machinery},
  year          = {2021},
  doi           = {10.1145/3406325.3451051},
  eprint        = {2006.03530},
  archivePrefix = {arXiv},
  primaryClass  = {quant-ph},
}

@inproceedings{tashma,
  author        = {Ta-Shma, Amnon},
  title         = {Inverting well conditioned matrices in quantum logspace},
  booktitle     = {Proceedings of the Annual ACM Symposium on Theory of Computing},
  pages         = {881--890},
  year          = {2013},
  month         = {06},
  doi           = {10.1145/2488608.2488720},
}

@inproceedings{fefferman_lin,
  author        = {Fefferman, Bill and Lin, Cedric Yen-Yu},
  title         = {A Complete Characterization of Unitary Quantum Space},
  editor        = {Karlin, Anna R.},
  booktitle     = {9th Innovations in Theoretical Computer Science Conference (ITCS 2018)},
  series        = {LIPIcs},
  volume        = {94},
  pages         = {4:1--4:21},
  publisher     = {Schloss Dagstuhl--Leibniz-Zentrum fuer Informatik},
  address       = {Dagstuhl, Germany},
  year          = {2018},
  doi           = {10.4230/LIPIcs.ITCS.2018.4},
  url           = {https://doi.org/10.4230/LIPIcs.ITCS.2018.4},
}

@article{allender_complexity_1999,
  author        = {Eric Allender and Robert Beals and Mitsunori Ogihara},
  title         = {The Complexity of Matrix Rank and Feasible Systems of Linear Equations},
  journal       = {Computational Complexity},
  volume        = {8},
  number        = {2},
  pages         = {99--126},
  year          = {1999},
  doi           = {10.1007/s000370050023},
}

@article{JKKSSZ09,
  author        = {Jain, Rahul and Kerenidis, Iordanis and Kuperberg, Greg and Santha, Miklos and Sattath, Or and Zhang, Shengyu},
  title         = {On the Power of a Unique Quantum Witness},
  journal       = {Theory of Computing},
  volume        = {8},
  number        = {17},
  pages         = {375--400},
  publisher     = {Theory of Computing},
  year          = {2012},
  doi           = {10.4086/toc.2012.v008a017},
  url           = {https://theoryofcomputing.org/articles/v008a017/},
}

@article{cook_taxonomy_1985,
  author        = {Cook, Stephen A.},
  title         = {A taxonomy of problems with fast parallel algorithms},
  journal       = {Information and Control},
  volume        = {64},
  number        = {1--3},
  pages         = {2--22},
  year          = {1985},
  month         = jan,
  doi           = {10.1016/S0019-9958(85)80041-3},
  url           = {https://linkinghub.elsevier.com/retrieve/pii/S0019995885800413},
  issn          = {00199958},
}

@inproceedings{Allender_Ogihara_94,
  author        = {Allender, Eric and Ogihara, Mitsunori},
  title         = {{Relationships among PL, {\#}L, and the determinant}},
  booktitle     = {Proceedings of IEEE 9th Annual Conference on Structure in Complexity Theory},
  pages         = {267--278},
  year          = {1994},
  doi           = {10.1109/SCT.1994.315797},
}

@article{Mahajan_Vinay_97,
  author        = {Mahajan, Meena and Vinay, V.},
  title         = {Determinant: Combinatorics, algorithms, and complexity},
  journal       = {Chicago Journal of Theoretical Computer Science},
  volume        = {1997},
  number        = {5},
  publisher     = {MIT Press},
  year          = {1997},
}

@article{hough2006determinantal,
  author        = {Hough, J. Ben and Krishnapur, Manjunath and Peres, Yuval and Vir{\'a}g, B{\'a}lint},
  title         = {Determinantal processes and independence},
  journal       = {Probability Surveys},
  volume        = {3},
  pages         = {206--229},
  publisher     = {The Institute of Mathematical Statistics and the Bernoulli Society},
  year          = {2006},
  doi           = {10.1214/154957806000000078},
}

@book{Forrester2010,
  author        = {Forrester, Peter J.},
  title         = {Log-Gases and Random Matrices},
  series        = {London Mathematical Society Monographs},
  volume        = {34},
  publisher     = {Princeton University Press},
  address       = {Princeton, NJ},
  year          = {2010},
  isbn          = {9780691128290},
}

@article{chatterjee2026bipartite,
  author        = {Chatterjee, Abhranil and Ghosh, Sumanta and Gurjar, Rohit and Raj, Roshan and Thierauf, Thomas},
  title         = {Bipartite Matching is in {NC}},
  journal       = {Electronic Colloquium on Computational Complexity (ECCC)},
  volume        = {TR26-100},
  year          = {2026},
  url           = {https://eccc.weizmann.ac.il/report/2026/100/},
}

@article{GuruswamiKopparty2016ExplicitSubspaceDesigns,
  author        = {Venkatesan Guruswami and Swastik Kopparty},
  title         = {Explicit Subspace Designs},
  journal       = {Combinatorica},
  volume        = {36},
  number        = {2},
  pages         = {161--185},
  year          = {2016},
  doi           = {10.1007/s00493-014-3179-7},
  eprint        = {1304.4205},
  archivePrefix = {arXiv},
  primaryClass  = {cs.CC},
}

@inproceedings{GuruswamiXing2013SubcodesSingleton,
  author        = {Venkatesan Guruswami and Chaoping Xing},
  title         = {{List decoding Reed--Solomon, algebraic-geometric, and Gabidulin subcodes up to the singleton bound}},
  booktitle     = {Proceedings of the 45th Annual ACM Symposium on Theory of Computing (STOC)},
  pages         = {843--852},
  publisher     = {ACM},
  year          = {2013},
  doi           = {10.1145/2488608.2488715},
  eprint        = {1211.0865},
  archivePrefix = {arXiv},
  primaryClass  = {cs.IT},
}

@article{Szarek1998MetricEntropyHomogeneousSpaces,
  author        = {Stanis{\l}aw J. Szarek},
  title         = {Metric Entropy of Homogeneous Spaces},
  journal       = {Banach Center Publications},
  volume        = {43},
  pages         = {395--410},
  year          = {1998},
  doi           = {10.4064/-43-1-395-410},
  eprint        = {math/9701213},
  archivePrefix = {arXiv},
}

@book{Vershynin2018HighDimensionalProbability,
  author        = {Roman Vershynin},
  title         = {High-Dimensional Probability: An Introduction with Applications in Data Science},
  series        = {Cambridge Series in Statistical and Probabilistic Mathematics},
  volume        = {47},
  publisher     = {{Cambridge University Press}},
  year          = {2018},
  doi           = {10.1017/9781108231596},
}

@book{Bhatia1997MatrixAnalysis,
  author        = {Rajendra Bhatia},
  title         = {Matrix Analysis},
  series        = {Graduate Texts in Mathematics},
  volume        = {169},
  publisher     = {Springer},
  year          = {1997},
  doi           = {10.1007/978-1-4612-0653-8},
}

@misc{jain2011poweruniquequantumwitness,
  author        = {Rahul Jain and Iordanis Kerenidis and Greg Kuperberg and Miklos Santha and Or Sattath and Shengyu Zhang},
  title         = {On the power of a unique quantum witness},
  year          = {2011},
  url           = {https://arxiv.org/abs/0906.4425},
  eprint        = {0906.4425},
  archivePrefix = {arXiv},
  primaryClass  = {quant-ph},
}

@incollection{goreinov2010find,
  author        = {Goreinov, Sergei A and Oseledets, Ivan V and Savostyanov, Dimitry V and Tyrtyshnikov, Eugene E and Zamarashkin, Nikolay L},
  title         = {How to find a good submatrix},
  booktitle     = {Matrix Methods: Theory, Algorithms And Applications: Dedicated to the Memory of Gene Golub},
  pages         = {247--256},
  publisher     = {World Scientific},
  year          = {2010},
}

@inproceedings{mangoubi2022private,
  author        = {Mangoubi, Oren and Wu, Yikai and Kale, Satyen and Thakurta, Abhradeep and Vishnoi, Nisheeth K},
  title         = {Private matrix approximation and geometry of unitary orbits},
  booktitle     = {Conference on Learning Theory},
  pages         = {3547--3588},
  organization  = {PMLR},
  year          = {2022},
}

@misc{DLMF,
  author        = {{NIST Digital Library of Mathematical Functions}},
  title         = {Chapter 18: Orthogonal Polynomials, Section 18.14: Inequalities},
  publisher     = {National Institute of Standards and Technology},
  year          = {2025},
  note          = {Eq.~(18.14.8), Release 1.2.4 of 2025-03-15},
  howpublished  = {\url{https://dlmf.nist.gov/18.14}},
}

@book{anderson2010introduction,
  author        = {Anderson, Greg W and Guionnet, Alice and Zeitouni, Ofer},
  title         = {An introduction to random matrices},
  series        = {Cambridge Studies in Advanced Mathematics},
  volume        = {118},
  publisher     = {Cambridge University Press},
  year          = {2010},
}

@book{aubrun2017alice,
  author        = {Aubrun, Guillaume and Szarek, Stanis{\l}aw J},
  title         = {{Alice} and {Bob} meet {Banach}},
  volume        = {223},
  publisher     = {American Mathematical Society},
  year          = {2017},
}

@inproceedings{ben2008quantum,
  author        = {Ben-Aroya, Avraham and Schwartz, Oded and Ta-Shma, Amnon},
  title         = {Quantum expanders: Motivation and constructions},
  booktitle     = {2008 23rd Annual IEEE Conference on Computational Complexity},
  pages         = {292--303},
  organization  = {IEEE},
  year          = {2008},
}

@article{li2025linear,
  author        = {Li, Yinan and Qiao, Youming and Wigderson, Avi and Wigderson, Yuval and Zhang, Chuanqi},
  title         = {On Linear-Algebraic Notions of Expansion},
  journal       = {Theory of Computing},
  volume        = {21},
  number        = {1},
  pages         = {1--28},
  year          = {2025},
  doi           = {10.4086/toc.2025.v021a001},
}

@article{lubotzky2008dimension,
  author        = {Lubotzky, Alexander and Zelmanov, Efim},
  title         = {Dimension expanders},
  journal       = {Journal of Algebra},
  volume        = {319},
  number        = {2},
  pages         = {730--738},
  publisher     = {Elsevier},
  year          = {2008},
}

@inproceedings{forbes2015dimension,
  author        = {Forbes, Michael A. and Guruswami, Venkatesan},
  title         = {Dimension Expanders via Rank Condensers},
  booktitle     = {Approximation, Randomization, and Combinatorial Optimization. Algorithms and Techniques (APPROX/RANDOM 2015)},
  pages         = {800--814},
  publisher     = {Schloss Dagstuhl--Leibniz-Zentrum fuer Informatik},
  year          = {2015},
  doi           = {10.4230/LIPIcs.APPROX-RANDOM.2015.800},
}

@article{CDL01,
  author        = {Chiu, Andrew and Davida, George and Litow, Bruce},
  title         = {Division in logspace-uniform $\mbox{NC}^1$},
  journal       = {RAIRO. Theoretical Informatics and Applications},
  volume        = {35},
  number        = {3},
  pages         = {259--275},
  publisher     = {EDP-Sciences},
  year          = {2001},
  url           = {https://www.numdam.org/item/ITA_2001__35_3_259_0/},
  mrnumber      = {1869217},
  zbl           = {1014.68062},
}

@book{KitaevShenVyalyi2002,
  author        = {Alexei Yu. Kitaev and Alexander Shen and Mikhail N. Vyalyi},
  title         = {Classical and Quantum Computation},
  series        = {Graduate Studies in Mathematics},
  volume        = {47},
  publisher     = {American Mathematical Society},
  year          = {2002},
  doi           = {10.1090/gsm/047},
}

@article{MZ12,
  author        = {David M. Mason and Harrison H. Zhou},
  title         = {{Quantile coupling inequalities and their applications}},
  journal       = {Probability Surveys},
  volume        = {9},
  pages         = {439--479},
  publisher     = {Institute of Mathematical Statistics and Bernoulli Society},
  year          = {2012},
  doi           = {10.1214/12-PS198},
  url           = {https://doi.org/10.1214/12-PS198},
}

@inproceedings{fefferman_et_al:LIPIcs.ICALP.2016.14,
  author        = {Fefferman, Bill and Kobayashi, Hirotada and Lin, Cedric Yen-Yu and Morimae, Tomoyuki and Nishimura, Harumichi},
  title         = {{Space-Efficient Error Reduction for Unitary Quantum Computations}},
  editor        = {Chatzigiannakis, Ioannis and Mitzenmacher, Michael and Rabani, Yuval and Sangiorgi, Davide},
  booktitle     = {43rd International Colloquium on Automata, Languages, and Programming (ICALP 2016)},
  series        = {Leibniz International Proceedings in Informatics (LIPIcs)},
  volume        = {55},
  pages         = {14:1--14:14},
  publisher     = {Schloss Dagstuhl -- Leibniz-Zentrum f{\"u}r Informatik},
  address       = {Dagstuhl, Germany},
  year          = {2016},
  doi           = {10.4230/LIPIcs.ICALP.2016.14},
  url           = {https://drops.dagstuhl.de/entities/document/10.4230/LIPIcs.ICALP.2016.14},
  isbn          = {978-3-95977-013-2},
  issn          = {1868-8969},
  urn           = {urn:nbn:de:0030-drops-62975},
}

@inproceedings{All86,
  author        = {Allender, Eric W.},
  title         = {The complexity of sparse sets in {P}},
  editor        = {Selman, Alan L.},
  booktitle     = {Structure in Complexity Theory},
  pages         = {1--11},
  publisher     = {Springer Berlin Heidelberg},
  address       = {Berlin, Heidelberg},
  year          = {1986},
  isbn          = {978-3-540-39825-7},
}

@inproceedings{BJLR91,
  author        = {Buntrock, Gerhard and Jenner, Birgit and Lange, Klaus-J{\"o}rn and Rossmanith, Peter},
  title         = {Unambiguity and fewness for logarithmic space},
  editor        = {Budach, L.},
  booktitle     = {Fundamentals of Computation Theory},
  pages         = {168--179},
  publisher     = {Springer Berlin Heidelberg},
  address       = {Berlin, Heidelberg},
  year          = {1991},
  isbn          = {978-3-540-38391-8},
}

@article{VALIANT197620,
  author        = {Leslie G. Valiant},
  title         = {Relative complexity of checking and evaluating},
  journal       = {Information Processing Letters},
  volume        = {5},
  number        = {1},
  pages         = {20--23},
  year          = {1976},
  doi           = {10.1016/0020-0190(76)90097-1},
  url           = {https://www.sciencedirect.com/science/article/pii/0020019076900971},
  issn          = {0020-0190},
}

@article{BERKOWITZ1984147,
  author        = {Stuart J. Berkowitz},
  title         = {On computing the determinant in small parallel time using a small number of processors},
  journal       = {Information Processing Letters},
  volume        = {18},
  number        = {3},
  pages         = {147--150},
  year          = {1984},
  doi           = {10.1016/0020-0190(84)90018-8},
  url           = {https://www.sciencedirect.com/science/article/pii/0020019084900188},
  issn          = {0020-0190},
}

@article{Aharonov_2022,
  author        = {Aharonov, Dorit and Ben-Or, Michael and Brandão, Fernando G.S.L. and Sattath, Or},
  title         = {The Pursuit of Uniqueness: Extending {Valiant}-{Vazirani} Theorem to the Probabilistic and Quantum Settings},
  journal       = {Quantum},
  volume        = {6},
  pages         = {668},
  publisher     = {Verein zur Forderung des Open Access Publizierens in den Quantenwissenschaften},
  year          = {2022},
  month         = mar,
  doi           = {10.22331/q-2022-03-17-668},
  url           = {http://dx.doi.org/10.22331/q-2022-03-17-668},
  issn          = {2521-327X},
}

@inproceedings{anshu_et_al:LIPIcs.ITCS.2026.10,
  author        = {Anshu, Anurag and Haferkamp, Jonas and Hwang, Yeongwoo and Nguyen, Quynh T.},
  title         = {{On the Complexity of Unique Quantum Witnesses and Quantum Approximate Counting}},
  editor        = {Saraf, Shubhangi},
  booktitle     = {17th Innovations in Theoretical Computer Science Conference (ITCS 2026)},
  series        = {Leibniz International Proceedings in Informatics (LIPIcs)},
  volume        = {362},
  pages         = {10:1--10:20},
  publisher     = {Schloss Dagstuhl -- Leibniz-Zentrum f{\"u}r Informatik},
  address       = {Dagstuhl, Germany},
  year          = {2026},
  doi           = {10.4230/LIPIcs.ITCS.2026.10},
  url           = {https://drops.dagstuhl.de/entities/document/10.4230/LIPIcs.ITCS.2026.10},
  isbn          = {978-3-95977-410-9},
  issn          = {1868-8969},
  urn           = {urn:nbn:de:0030-drops-252978},
}

\appendix
\crefalias{section}{appendix}

\section{Explicit construction based on Wronskian evaluation maps}\label{sec:wronskian-subspace}

\begin{lemma}[Wronskian degree bound]\label{lem:wronskian}
    Let $N\in\C^{n\times k}$ have rank $k$. Then
    \[
        W_N(z):=\det(Q_{k,z}N)
    \]
    is a nonzero polynomial of degree at most $k(n-k)$.
\end{lemma}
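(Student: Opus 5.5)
The plan is to write $W_N(z)$ as a generalized Wronskian of polynomials and then read off both the degree bound and the non-vanishing from that form. Let $f_1,\dots,f_k$ be the polynomials $f_{N e_\ell}$ associated with the columns of $N$. By definition of $Q_{k,z}$, the $(r,\ell)$ entry of $Q_{k,z}N$ is $f_\ell^{(r-1)}(z)$. Hence
\[
    W_N(z)=\det\bigl(f_\ell^{(r-1)}(z)\bigr)_{r,\ell\in[k]}
\]
is exactly the Wronskian of $f_1,\dots,f_k$. It is clearly a polynomial in $z$, and because $N$ has rank $k$, the $f_\ell$ are linearly independent in the space of polynomials of degree at most $n-1$.

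\textbf{Reduction.} Replacing $N$ by $NG$ with $G\in\C^{k\times k}$ invertible multiplies $W_N$ by the nonzero constant $\det G$. It therefore suffices to treat any convenient basis of $W=\im N$. We column-reduce $N$, equivalently do Gaussian elimination on the polynomials, so that the $f_\ell$ have pairwise distinct degrees $d_1>d_2>\dots>d_k$ with $d_1\le n-1$, and each $f_\ell$ has leading coefficient $1$.

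\textbf{Degree bound.} Expand the determinant over permutations $\sigma$. The term indexed by $\sigma$ is $\prod_\ell f_\ell^{(\sigma(\ell)-1)}$, which has degree at most
\[
    \sum_\ell d_\ell-\sum_{r=0}^{k-1} r=\sum_\ell d_\ell-\binom k2 .
\]
Since the $d_\ell$ are distinct integers in $\{0,\dots,n-1\}$, we have $\sum_\ell d_\ell\le\sum_{j=0}^{k-1}(n-1-j)$. Therefore
\[
    \deg W_N\le k(n-1)-2\binom k2=k(n-k).
\]

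\textbf{Nonvanishing.} This is the only real step. I would show that the coefficient of $z^{D}$, where $D=\sum_\ell d_\ell-\binom k2$, is nonzero. Only the leading monomials contribute to that coefficient. Since $(z^{d})^{(r)}=d^{\underline r}z^{d-r}$, with $d^{\underline r}$ the falling factorial, this coefficient equals
\[
    \det\bigl(d_\ell^{\underline{r}}\bigr)_{r\in\{0,\dots,k-1\},\,\ell\in[k]} .
\]
The falling factorial $d^{\underline r}$ is a monic polynomial of degree $r$ in $d$, so row operations turn this determinant into the Vandermonde determinant $\prod_{\ell<\ell'}(d_{\ell'}-d_\ell)$. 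That product is nonzero because the degrees are distinct. Note that a row whose falling factorial vanishes, i.e.\ $r>d_\ell$, is harmless: the Vandermonde identity still holds entrywise. This argument uses characteristic zero, which is fine over $\C$. The alternative would be the classical fact that the Wronskian of linearly independent analytic functions is not identically zero, but the leading-coefficient computation is self-contained and gives the degree bound at the same time.
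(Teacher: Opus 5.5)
Your proof is correct and has the same skeleton as the paper's: change basis so the column polynomials have distinct degrees, identify $W_N$ as their Wronskian, and bound the degree by $\sum_\ell d_\ell-\binom k2\le k(n-k)$.

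The only difference is the nonvanishing step. The paper cites the classical fact that the Wronskian of linearly independent polynomials over a characteristic-zero field is not identically zero. You instead prove it directly: for the monic echelon basis, the coefficient of $z^{\sum_\ell d_\ell-\binom k2}$ is the falling-factorial determinant $\det\bigl(d_\ell^{\underline r}\bigr)$. This equals the Vandermonde product $\prod_{\ell<\ell'}(d_{\ell'}-d_\ell)\neq0$.

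Your route is self-contained and makes the use of characteristic zero transparent, since a nonzero integer stays nonzero in $\C$. It also gives the exact degree and leading coefficient of $W_N$ in that basis, which is slightly more than the lemma needs. The paper's version is shorter because it outsources exactly this step to the known fact.
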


\begin{proof}
    Let $f_1,\ldots,f_k\in\C[z]_{<n}$ be the polynomials corresponding to the columns of $N$. Since $N$ has rank $k$, the polynomials $f_1,\ldots,f_k$ are linearly independent.
    By an invertible change of basis, we may assume that
    \[
        \deg(f_1)>\deg(f_2)>\cdots>\deg(f_k).
    \]
    Indeed, replacing $N$ by $NR$ for some invertible matrix $R\in\mathbb C^{k\times k}$ changes $W_N$ only by the nonzero scalar $\det(R)$.
    Writing $d_j:=\deg(f_j)$, we thus have
    \[
        d_j \leq n-j.
    \]
    Now
    \[
        Q_{k,z}N
        =
        \begin{pmatrix}
            f_1(z) & \cdots & f_k(z)\\
            f_1'(z) & \cdots & f_k'(z)\\
            \vdots & & \vdots\\
            f_1^{(k-1)}(z) & \cdots & f_k^{(k-1)}(z)
        \end{pmatrix},
    \]
    so $W_N$ is the Wronskian of $f_1,\ldots,f_k$.
    Over a field of characteristic zero, the Wronskian of linearly independent polynomials is nonzero. Finally, as $\deg(f_j^{(\ell)}) \leq d_j - \ell$, we find that
    \[
        \deg W_N
        \leq \sum_{j=1}^k d_j-\frac{k(k-1)}2
        \leq \sum_{j=1}^k (n-j)-\frac{k(k-1)}2
        =k(n-k).\qedhere
    \]
\end{proof}

With the above lemma we can now prove \cref{thm:wronskian-subspace-design}.
\wronskiandesign*
\begin{proof}
    Fix $1\leq k<n$, a set $T_k\subseteq \mathbb R$ of $m$ distinct points and the family $\mathcal{H}_{n,k}$ as in the statement.
    Let $W\subseteq \mathbb C^n$ be a $k$-dimensional subspace, and let $N\in\C^{n\times k}$ have columns forming a basis of $W$.
    By \cref{lem:wronskian}, $W_N(z) = \det(Q_{k,z}N)$ is a nonzero polynomial of degree
    \[
        \deg W_N \leq k(n-k).
    \]
    For $a\in T_k$, let
    \[
        k_a := \dim(H_a\cap W) = \dim\ker(Q_{k,a}N).
    \]
    Since we have $\rank(Q_{k,a}N)=k-k_a$, there exist invertible matrices $P,Q\in\mathbb C^{k\times k}$ such that
    \[
        P (Q_{k,a}N)Q
        =
        \begin{pmatrix}
            I_{k-k_a} & 0\\
            0 & 0
        \end{pmatrix},
    \]
    i.e. every entry in the last $k_a$ rows of $P (Q_{k,z}N)Q$ vanishes at $z=a$, and is therefore divisible by $(z-a)$. Factoring one copy of $(z-a)$ from each of these rows yields
    \[
        (z-a)^{k_a}\mid\det P (Q_{k,z}N)Q.
    \]
    On the other hand,
    \[
        \det P (Q_{k,z}N)Q = \det(P)W_N(z)\det(Q),
    \]
    where $\det(P),\;\det(Q)\neq0$. Consequently,
    \[
        (z-a)^{k_a}\mid W_N(z).
    \]
    Putting all together, we find
    \begin{align*}
        \sum_{a\in T_k}\dim(H_a\cap W)
        = \sum_{a\in T_k} k_a
        \leq \sum_{a\in T_k}\ord_a \left( W_N(z) \right)
        \leq \deg W_N(z)
        \leq k(n-k),
    \end{align*}
    where $\ord_a( W_N(z) )$ denotes the multiplicity of the root $a$ in $W_N(z)$, i.e. the maximum number of times $k\geq 0$ such that $(z-a)^k \mid W_N(z)$.
    Thus, $\mathcal H_{n,k}$ is a $(k,\;k(n-k))$-strong subspace design.

    The computability in space $O(\log nm)$ of the entries of the $Q_{k,a}$ given $T_k = \{0,\dots,m-1\}$ follows directly since integer arithmetic and matrix multiplication over polynomial-bit numbers can be performed in space logarithmic in the input size (see e.g. \cite{CDL01}).
\end{proof}

\section{Proofs of Gaussian matrix concentration} \label{sec:determinantal-process}

Let $G\in \C^{k\times k}$ be a square, standard complex Gaussian matrix; that is, its entries are independent complex Gaussian random variables with density $\frac{1}{\pi}e^{-|z|^2}$.
We devote this appendix to characterizing the smaller singular values of $G$.
Our discussion will largely follow \cite{Forrester2010,hough2006determinantal}.

\subsection{Background: Laguerre unitary ensemble and determinantal point processes}

If $G$ is a square, standard complex Gaussian matrix, then the random matrix $G G^*$ is drawn from the \emph{complex Wishart ensemble}.
Let $x_1,\ldots,x_k$ be the (unordered) eigenvalues of the positive semidefinite matrix $GG^*$.
Equivalently, $x_1,\ldots,x_k$ are the (unordered) squared singular values of $G$.
By \cite[Proposition 3.2.2]{Forrester2010}, the random variable ${\cal X} = (x_1,\dots,x_k)$ follows the square Laguerre unitary ensemble (LUE) with joint density $p(x_1,\dots,x_k)$ proportional to
\[
    p(x_1,\dots,x_k)
    \propto \prod_{1\leq i<j\leq k}(x_i-x_j)^2
    \prod_{i=1}^k e^{-x_i},
    \qquad x_1,\ldots,x_k\geq 0.
\]

The LUE ${\cal X} = (x_1,\dots,x_k)$ is a point process on the real half line $[0,\infty)$.
For any subset $D \subset [0,\infty)$, we denote by ${\cal X}(D)$ the number of points in $\cal X$ that fall in $D$, i.e.,
\[
    {\cal X}(D)
    = \left|\{j \in [k]: x_j \in D\}\right|.
\]
Following \cite{hough2006determinantal}, for any $1 \leq \ell \leq k$ we define the \emph{joint intensities} (or correlation functions) of $\cal X$ as functions $\rho_\ell:[0,\infty)^\ell \to [0,\infty)$ such that for any family of mutually disjoint subsets $D_1,\dots,D_\ell \subset [0,\infty)$ we have
\[
    \mathbb{E}\left[ \Pi_{j=1}^\ell {\cal X}(D_j) \right]
    = \int_{\Pi_{j=1}^\ell D_j} \rho_\ell(y_1,\dots,y_\ell) dy_1 \dots dy_\ell.
\]
In particular, when $\ell = k$ then $\rho_\ell = k!p$ while for $\ell = 1$ the intensity $\rho_1(x)$ simply corresponds to the eigenvalue density.

The key fact is that the LUE is a \emph{determinantal point process} (DPP). For general background, see \cite{Forrester2010}.
This means that there exists a kernel $K:[0,\infty)^2 \to \mathbb R$ such that
\[
    \rho_\ell(x_1,\dots,x_\ell)
    = \det(K(x_i,x_j)_{1 \leq i,j \leq \ell}).
\]
For the LUE, the kernel is
\[
    K(x,y)
    = e^{-(x+y)/2} \sum_{j=0}^{k-1} L_j(x) L_j(y),
\]
with $L_j(x)$ the $j$-th Laguerre polynomial, normalized so that
\[
    \int_0^\infty e^{-x}L_j(x)L_\ell(x)\,dx
    = \delta_{j\ell}.
\]
To explain the above facts more specifically, we refer to Section 2 of \cite{DesrosiersForrester2006}. The equation 2.3 is the key for us which states
\[
    K(x,y)=\sqrt{w(x)w(y)}\sum_{j=0}^{k-1}\frac{p_j(x) p_j(y)}{\norm{p_j}^2}.
\]
For us the weight functions are $w(x)=e^{-x}$ and $w(y)=e^{-y}$. Furthermore, as explained in Equation 2.7 of \cite{DesrosiersForrester2006},
\[
    p_j(x)=(-1)^j j! L_j(x)
    \qquad
    p_j(y)=(-1)^j j! L_j(y).
\]
Putting these values, we derive the required expression for the kernel entry $K(x,y)$ \footnote{For the square matrix case the term $\alpha$ used in \cite{DesrosiersForrester2006} is $0$.}.

By the standard inequality $e^{-x/2}\abs{L_k(x)} \le L_k(0) \le 1$ for $x \geq 0$ \cite{DLMF}, we get that
\[
    \rho_1(x)
    = K(x,x)
    \le k.
\]
It follows that for every $0 < u \leq 1$ we have
\begin{equation} \label{eq:mean-count}
    \mathbb{E}[{\cal X}([0,u^2/k])]
    = \int_0^{u^2/k} \rho_1(x) dx
    \leq \int_0^{u^2/k} k dx
    = u^2.
\end{equation}

\subsection{Singular value concentration}

To prove a concentration bound over families of Gaussian matrices, we will use the following theorem.
It is essentially \cite[Corollary 4.2.24]{anderson2010introduction}, or alternatively follows by applying \cite[Theorem 7]{hough2006determinantal} to the restricted DPP~${\cal X}|_D$ with kernel $K|_D(x,y) = K(x,y)$ if $(x,y) \in D$ and $K|_D(x,y) = 0$ otherwise.
\begin{theorem}
    Suppose $\cal X$ is a DPP on $[0,\infty)$ with a trace-class kernel $K$.
    Fix a measurable subset $D \subset [0,\infty)$ and consider the restricted kernel $K|_D$ with eigenvalues $\{\lambda_j(K|_D) \in [0,1]\}$.
    We have that
    \[
        {\cal X}(D)
        \overset{d}{=} \sum_j \mathrm{Bernoulli}(\lambda_j).
    \]
    I.e., ${\cal X}(D)$ is distributed as a sum of independent $\mathrm{Bernoulli}(\lambda_j)$ random variables.
\end{theorem}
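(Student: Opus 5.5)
The plan is to realize $\mathcal X(D)$ through the spectral decomposition of the integral operator $K|_D$ on $L^2(D)$, and to identify the law of $\mathcal X(D)$ via its probability generating function. Since $K$ is a Hermitian (here real symmetric) trace-class kernel whose associated operator is a projection (for the LUE it is the orthogonal projection onto $\mathrm{span}\{e^{-x/2}L_j(x)\}_{j<k}$), the compression $K|_D = \mathbf 1_D K \mathbf 1_D$ is a positive contraction. Hence its eigenvalues $\lambda_j$ lie in $[0,1]$, it admits an eigen-expansion $K|_D(x,y)=\sum_j \lambda_j\phi_j(x)\overline{\phi_j(y)}$ with $\{\phi_j\}$ orthonormal in $L^2(D)$, and $\sum_j\lambda_j=\operatorname{Tr}K|_D<\infty$. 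I will also note that the points of $\mathcal X$ in $D$ form a DPP with kernel $K|_D$, because the joint intensities of the restricted process are just $\rho_\ell$ restricted to $D^\ell$.

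The core step is the generating-function identity
\[
    \E\bigl[z^{\mathcal X(D)}\bigr] = \det\bigl(I-(1-z)K|_D\bigr) = \prod_j\bigl(1-(1-z)\lambda_j\bigr),
\]
valid for $z\in[0,1]$, with the second expression a Fredholm determinant. To prove it, I will expand $z^{\mathcal X(D)}=(1-(1-z))^{\mathcal X(D)}=\sum_{\ell\ge0}(-(1-z))^\ell\binom{\mathcal X(D)}{\ell}$. For $\ell\ge1$, $\E\binom{\mathcal X(D)}{\ell}$ equals $\frac1{\ell!}\int_{D^\ell}\rho_\ell$, the factorial moment obtained from the definition of the joint intensities after partitioning $D$ into small disjoint cells and passing to the limit. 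Substituting $\rho_\ell=\det(K(x_i,x_j))$ produces exactly the Fredholm series $\sum_\ell \frac{(-(1-z))^\ell}{\ell!}\int_{D^\ell}\det(K|_D(x_i,x_j))$, which equals $\det(I-(1-z)K|_D)$ because $K|_D$ is trace class. Since $\mathcal X(D)\le k$, the series is finite and no convergence issues arise.

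Finally, $\prod_j(1-\lambda_j+\lambda_j z)$ is the generating function of a sum of independent $\mathrm{Bernoulli}(\lambda_j)$ variables; the product converges because $\sum_j\lambda_j<\infty$. Two distributions on $\mathbb N$ with the same generating function on $[0,1]$ coincide, so the claim follows. I expect the main obstacle to be the rigorous identification of factorial moments with $\int_{D^\ell}\rho_\ell$, since the definition of $\rho_\ell$ given in the paper uses disjoint sets only. To handle this, I will use the standard discretization argument, or simply appeal to \cite[Theorem 7]{hough2006determinantal} / \cite[Corollary 4.2.24]{anderson2010introduction} for that step. In our LUE setting everything is finite-rank ($\operatorname{rank}K|_D\le k$), so the Fredholm determinant is an ordinary finite determinant, which simplifies the argument.
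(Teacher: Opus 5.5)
Your argument is correct, but it takes a genuinely different route from the one the paper relies on. The paper gives no proof of its own. It applies \cite[Theorem 7]{hough2006determinantal} to the restricted process $\mathcal X|_D$ with kernel $K|_D$, or alternatively cites \cite[Corollary 4.2.24]{anderson2010introduction}.

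The Hough--Krishnapur--Peres--Vir\'ag theorem writes a DPP with Hermitian kernel $\sum_j\lambda_j\phi_j(x)\overline{\phi_j(y)}$ as a mixture of projection DPPs with kernels $\sum_j\xi_j\phi_j(x)\overline{\phi_j(y)}$, where the $\xi_j\sim\mathrm{Bernoulli}(\lambda_j)$ are independent selectors. It then uses that a rank-$r$ projection DPP has exactly $r$ points. This identifies the law of the whole configuration, not only of the count, and it covers general trace-class kernels without any moment bookkeeping.

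Your generating-function computation only identifies the law of $\mathcal X(D)$. In exchange, it is elementary and completely self-contained in the finite-rank setting the paper actually uses. There, the step you single out as the main obstacle is immediate. The LUE is an exchangeable $k$-point system, and $\rho_\ell$ is $k!/(k-\ell)!$ times the $\ell$-point marginal of $p$. Hence $\E[\mathcal X(D)(\mathcal X(D)-1)\cdots(\mathcal X(D)-\ell+1)]=\frac{k!}{(k-\ell)!}\Prob[x_1,\ldots,x_\ell\in D]=\int_{D^\ell}\rho_\ell$, with no discretization needed. Likewise, the Fredholm series is a finite sum. By Cauchy--Binet it equals $\det(I_k-(1-z)G)$ for the Gram matrix $G_{ab}=\int_D e^{-x}L_a(x)L_b(x)\,dx$, which has the same nonzero eigenvalues as $K|_D$.

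Two of your steps are LUE-specific and would need replacing for the theorem as literally stated.
\begin{enumerate}
\item The bound $\mathcal X(D)\le k$ fails in general. Instead, Hadamard's inequality gives $\int_{D^\ell}\det(K(x_i,x_j))\,dx\le(\Tr K|_D)^\ell$. This yields finite exponential moments of $\mathcal X(D)$, which justifies taking expectations term by term in the binomial expansion of $z^{\mathcal X(D)}$.
\item In general $K$ need not be a projection. For a Hermitian DPP kernel, the bounds $0\le\lambda_j\le1$ come from the Macchi--Soshnikov condition $0\le K\le I$, although the statement already takes them as given.
\end{enumerate}
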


We will apply this theorem to the LUE kernel.
To get a bound on the eigenvalues $\lambda_j(K|_D)$, we note that $K|_D$ is positive semi-definite and that
\[
    \sum_{j=1}^k \lambda_j(K|_D)
    = \mathrm{Tr}(K|_D)
    = \int_D K(x,x) dx
    \leq k |D|,
\]
with $|D|$ the Lebesgue measure of $D$.
This implies that ${\cal X}(D)$ will be distributed as a sum of $k$ Bernoulli with sum of parameters bounded by $k|D|$.
We use this to prove the following claim, which implies \cref{lem:small-count}.

\begin{claim} \label{lem:sing-value-count}
    Fix $0 < u \leq 1$.
    If ${\cal X}_1,\dots,{\cal X}_m$ are independent copies of the LUE ${\cal X}$, then
    \[
        \Prob\left[\sum_{i=1}^m {\cal X}_i([0,u^2/k]) \geq 2 u^2 m \right]
        \le e^{-u^2 m/3}.
    \]
\end{claim}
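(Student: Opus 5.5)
By the DPP representation theorem stated above, applied with $D=[0,u^2/k]$, each count ${\cal X}_i(D)$ is distributed as a sum of $k$ independent Bernoulli variables whose parameters $\lambda_j(K|_D)\in[0,1]$ satisfy $\sum_j \lambda_j(K|_D)\le k|D| = u^2$. By independence of the copies ${\cal X}_1,\dots,{\cal X}_m$, the total
\[
    S := \sum_{i=1}^m {\cal X}_i([0,u^2/k])
\]
is therefore a sum of $mk$ independent Bernoulli variables. Its mean is $\mu := \E[S] \le u^2 m$; equivalently, this follows from \cref{eq:mean-count}.

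The plan is to apply a multiplicative Chernoff bound. One subtlety is that the exact mean $\mu$ may be strictly smaller than $u^2m$, whereas the threshold is $2u^2m$. I will handle this with the standard monotonicity of Chernoff bounds in the mean. Let $\mu^+ := u^2 m \ge \mu$. The Chernoff bound for sums of independent $[0,1]$-valued variables holds with any upper bound $\mu^+$ on the mean:
\[
    \Prob[S \ge (1+\delta)\mu^+] \le \exp\!\left(-\frac{\delta^2\mu^+}{2+\delta}\right).
\]
To justify this, start from the moment generating function estimate
\[
    \E e^{tS} \le \prod \bigl(1+p(e^t-1)\bigr) \le e^{\mu(e^t-1)} \le e^{\mu^+(e^t-1)}
\]
for $t>0$, and then optimize $t$ as usual.

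Taking $\delta=1$ gives
\[
    \Prob[S\ge 2u^2m]\le e^{-u^2 m/3},
\]
which is exactly the claimed bound. The only step needing care is the padding-of-the-mean argument, which the MGF inequality above settles. The hypothesis $u\le 1$ is used only to keep the interval $D$ meaningful; the bound itself does not depend on it.

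To deduce \cref{lem:small-count}, note that $b_{u/\sqrt k}(G_i)={\cal X}_i([0,u^2/k])$, since the ${\cal X}_i$ are the squared singular values. For $u\le 1/(2\sqrt2)$ we have $2u^2m\le m/4$, so the event in \cref{lem:small-count} is contained in the event bounded by the claim.
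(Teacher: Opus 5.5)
Your proof is correct and follows the paper's route: the Bernoulli representation of ${\cal X}_i(D)$ via the restricted LUE kernel, the trace bound $\sum_j\lambda_j\le k|D|=u^2$, independence across the copies, and a multiplicative Chernoff bound. The only difference is in handling $\mu<u^2m$. The paper applies Chernoff at the true mean with $1+\delta=2u^2m/\mu$ and then checks $(2-x)^2/(2+x)\ge 1/3$ for $x=\mu/(u^2m)\in(0,1]$, whereas you raise the mean to $u^2m$ through the MGF and take $\delta=1$, which is equally valid and slightly cleaner.
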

\begin{proof}
    Let $S = \sum_{i=1}^m {\cal X}_i([0,u^2/k])$.
    By the preceding argument, we get that $S$ behaves as a sum of $m k$ independent Bernoulli random variables with sum of parameter bounded by $u^2 m$, and so mean $\mu = \mathbb{E} S \leq u^2 m$.

    Let $1+\delta=\frac{2u^2m}{\mu}$.
    By the multiplicative Chernoff bound for independent Bernoulli variables, we know that for all $\delta \geq 0$ it holds that
    \[
        \Prob[S \geq (1+\delta)\mu]
        \leq e^{-\delta^2 \mu/(2+\delta)}.
    \]
    Let $x\coloneq \frac{\mu}{u^2m}\in(0,1]$. Then,
    \[
        \delta = \frac{2-x}{x} \quad \text{and} \quad x + \delta x = 2 \text{, hence} \quad \frac{\delta^2\mu}{2+\delta}=u^2m \frac{(2-x)^2}{2+x}.
    \]
    The claim follows by noticing that, for $0\leq x\leq 1$,
    \[
        \frac{(2-x)^2}{2+x}\geq \frac{1}{3}. \qedhere
    \]
\end{proof}

\section{Proofs for complexity-theoretic applications}

\subsection{\texorpdfstring{Constant $\pcNULL$ reduces to $\pcSINGULAR$}{Constant pc-NULL reduces to pc-SINGULAR}}
Here we provide a logspace many-one reduction from $\pcNULL_k$ for $k \in O(1)$ to $\pcSINGULAR$.
The underlying idea is due to Jain, Kerenidis, Kuperberg, Santha, Sattath and Zhang \cite{JKKSSZ09}. For a background on antisymmetric subspaces, we refer to \cite[Section 1.5]{Bhatia1997MatrixAnalysis}.

\begin{theorem}\label{thm:pcnull-reduces-to-pcsingular-for-constant-k}
    Let $k=O(1)$. Then
    \[
        \kappa\text{-}\mathsf{conditioned}\text{-}\mathsf{NULL}_k
        \preccurlyeq_m^{\mathsf L}
        (k\kappa^2)\text{-}\mathsf{conditioned}\text{-}\mathsf{SINGULAR}.
    \]
    In particular,
    \[
        \mathsf{pc}\text{-}\mathsf{NULL}_k
        \preccurlyeq_m^{\mathsf L}
        \mathsf{pc}\text{-}\mathsf{SINGULAR}.
    \]
\end{theorem}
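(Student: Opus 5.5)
We map $A$ to the $k$-th antisymmetric power of $A$, i.e.\ the restriction $\Lambda^k A$ of $A^{\otimes k}$ to the antisymmetric subspace $\wedge^k\C^n$. This is the idea of \cite{JKKSSZ09}. The singular values of $\Lambda^k A$ are exactly the products $\sigma_{i_1}(A)\cdots\sigma_{i_k}(A)$ over $i_1<\dots<i_k$. Hence $\Lambda^k A$ is singular if and only if $\dim\ker A\geq k$, and $\smin(\Lambda^k A)=\sigma_1(A)\cdots\sigma_k(A)$. Moreover $\|\Lambda^kA\|_\op\leq\|A\|_\op^k\leq1$. The drawback is that on no-instances this product can be tiny, because $\sigma_1,\dots,\sigma_{k-1}$ are not bounded below. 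So I would not use $A$ directly. Instead I would first move to a matrix whose small singular directions are controlled by the promise.

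\textbf{Shifting the spectrum.} I would use the Hermitian matrix $H:=A^*A$, which satisfies $\lambda_j(H)=\sigma_j(A)^2$. The aim is a contraction whose top-$k$ alternating power has smallest singular value at least $\sigma_k(A)^2/\poly$, while still detecting nullity $\geq k$. The natural candidate is an exterior-power combination: take $M:=\sum_{j}$ of Hermitian operators on $\wedge^k\C^n$ built from $H$ (a derivation-type operator). Concretely, use the derivation
\[
    D_k(H):=\sum_{r=1}^k I\otimes\cdots\otimes H\otimes\cdots\otimes I\Big|_{\wedge^k\C^n},
\]
with $H$ in the $r$-th slot. On the antisymmetric vector built from eigenvectors $i_1<\dots<i_k$ of $H$, it has eigenvalue $\lambda_{i_1}+\dots+\lambda_{i_k}$. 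Its kernel is nonzero if and only if some $k$ distinct eigenvectors all have eigenvalue $0$, i.e.\ if and only if $\dim\ker H\geq k$. On no-instances its smallest eigenvalue is $\lambda_1+\dots+\lambda_k\geq\lambda_k(H)=\sigma_k(A)^2\geq1/\kappa^2$. Its norm is at most $k$, so $D_k(H)/k$ is a contraction with smallest singular value at least $1/(k\kappa^2)$, which matches the target $(k\kappa^2)$-conditioning.

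\textbf{Logspace computability.} For constant $k$, $\wedge^k\C^n$ has the orthonormal basis $e_{i_1}\wedge\dots\wedge e_{i_k}$, indexed by $k$-subsets, so its dimension $\binom nk$ is polynomial. A matrix entry $\langle e_I, D_k(H)e_J\rangle$ is nonzero only if $|I\cap J|\geq k-1$. In that case it equals a signed entry of $H$: either $H_{ab}$ with the sign of the permutation aligning $I\setminus\{a\}$ and $J\setminus\{b\}$, or, when $I=J$, the sum $\sum_{i\in I}H_{ii}$. Both $H=A^*A$ and these signs are computable in deterministic logspace, and the division by $k$ is exact over $\mathbb Q(i)$. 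The reduction therefore outputs $D_k(A^*A)/k$ in logspace.

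\textbf{Main obstacle.} The main work is verifying the spectral claims cleanly. I need to check that $D_k(H)$ leaves $\wedge^k$ invariant, is diagonalized by the wedge products of an eigenbasis of $H$, and that repeated indices are excluded, so that a kernel requires $k$ distinct zero eigenvalues. I also need the sign bookkeeping in the explicit matrix entries to be correct. The yes direction is immediate: wedging $k$ independent kernel vectors of $A$ gives a nonzero kernel vector. The no direction follows from the eigenvalue formula together with the ordering $\lambda_1\leq\dots\leq\lambda_k$ and nonnegativity of the eigenvalues.
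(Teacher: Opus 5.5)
Your proposal is correct and is essentially the paper's proof. After discarding $\Lambda^k A$, you output $\frac1k\sum_{r} I\otimes\cdots\otimes H\otimes\cdots\otimes I$ restricted to $\wedge^k\C^n$ with $H=A^*A$, and use that its spectrum consists of the sums $\lambda_{i_1}+\dots+\lambda_{i_k}$ over distinct indices; this gives singularity iff $\dim\ker A\geq k$ and $\lambda_{\min}\geq\lambda_k(H)/k\geq 1/(k\kappa^2)$, and you compute the $k$-subset-indexed entries exactly as the paper does, where the off-diagonal sign is $(-1)^{a+b}$ with $a,b$ the positions of the differing elements in $I$ and $J$.
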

\begin{proof}
    Let $A\in\C^{n\times n}$ be an input matrix to the problem $\pcNULL_k$ with $\norm{A}_{\op}\leq1$, and set
    \[
        H:=A^*A.
    \]
    Then $H$ is positive semidefinite, $\ker(H)=\ker(A)$, $\|H\|_{\op}\leq1$ and the eigenvalues of $H$ are the squared singular values of $A$.
    Write $0\leq\lambda_1\leq\cdots\leq\lambda_n$ for these eigenvalues, with corresponding orthonormal eigenvectors $\psi_1,\dots,\psi_n$.

    We use the $k$-th antisymmetric tensor power $\bigwedge^k\C^n$.
    This is a $\binom nk$-dimensional space with orthonormal basis
    \[
        e_{i_1}\wedge\cdots\wedge e_{i_k},
        \qquad
        1\leq i_1<\cdots<i_k\leq n.
    \]
    If we index the standard basis by the $k$-subsets of $[n]$, then we can identify $\bigwedge^k\C^n$ with $\C^{\binom nk}$.
    Alternatively, we can identify $\bigwedge^k\C^n$ with the antisymmetric subspace of $(\C^n)^{\otimes k}$ by letting
    \[
        v_1\wedge\cdots\wedge v_k
        := \frac1{\sqrt{k!}} \sum_{\sigma\in S_k} \mathrm{sgn}(\sigma)\, v_{\sigma(1)}\otimes\cdots\otimes v_{\sigma(k)}.
    \]

    Now consider the operator on $(\C^n)^{\otimes k}$
    \[
        H^{(k)}
        := \sum_{r=1}^k I^{\otimes(r-1)} \otimes H\otimes I^{\otimes(k-r)}.
    \]
    Since $H^{(k)}$ commutes with every permutation of the tensor factors, it maps antisymmetric tensors to antisymmetric tensors.
    Hence $\bigwedge^k\C^n$ as the antisymmetric subspace is invariant under $H^{(k)}$.

    Let $K\in\C^{\binom nk\times\binom nk}$ be the matrix representation of
    \[
        \frac1k H^{(k)}\big|_{\bigwedge^k\C^n}
    \]
    with respect to the basis $\{e_{i_1}\wedge\cdots\wedge e_{i_k}\}_{i_1<\cdots<i_k}$.
    Since $\|H\|_{\op}\leq1$, we have $\|H^{(k)}\|_{\op}\leq k$, and therefore $\|K\|_{\op}\leq1$.

    The vectors
    \[
        \psi_{i_1}\wedge\cdots\wedge\psi_{i_k},
        \qquad
        1\leq i_1<\cdots<i_k\leq n,
    \]
    form an orthonormal basis of $\bigwedge^k\C^n$ and satisfy
    \[
        H^{(k)}
        (\psi_{i_1}\wedge\cdots\wedge\psi_{i_k})
        =
        (\lambda_{i_1}+\cdots+\lambda_{i_k})
        (\psi_{i_1}\wedge\cdots\wedge\psi_{i_k}).
    \]
    Consequently, the spectrum of $K$ is
    \[
        \left\{ \frac{\lambda_{i_1}+\cdots+\lambda_{i_k}}{k} : 1\leq i_1<\cdots<i_k\leq n \right\}.
    \]

    Suppose first that $\dim\ker(A)\geq k$.
    Then $\lambda_1=\cdots=\lambda_k=0$, so $0$ is an eigenvalue of $K$ and hence $K$ is singular.

    Conversely, suppose that $\dim\ker(A)<k$ and $\sigma_k(A)\geq 1/\kappa$.
    Since the positive eigenvalues of $H=A^*A$ are the squares of the positive singular values of $A$, $\lambda_k\geq 1/\kappa^2$.
    Therefore
    \[
        \lambda_{\min}(K)
        = \frac{\lambda_1+\cdots+\lambda_k}{k}
        \geq \frac{\lambda_k}{k}
        \geq \frac1{k\kappa^2}.
    \]
    Thus $K$ is a valid no-instance of $(k\kappa^2)$-$\mathsf{conditioned}$-$\mathsf{SINGULAR}$.

    It remains to verify that $K$ can be output in logarithmic space.
    Since $k=O(1)$,
    \[
        \dim K=\binom nk=\poly(n).
    \]
    Index the rows and columns of $K$ by $k$-subsets
    \[
        I=\{i_1<\cdots<i_k\},
        \qquad
        J=\{j_1<\cdots<j_k\}
    \]
    of $[n]$.
    To determine the entries of $K$, note that for arbitrary $v_1,\dots,v_k\in\C^n$,
    \[
        H^{(k)}(v_1\wedge\cdots\wedge v_k)
        =
        \sum_{r=1}^k v_1\wedge\cdots\wedge Hv_r\wedge\cdots\wedge v_k.
    \]
    Hence
    \[
        H^{(k)}(e_{j_1}\wedge\cdots\wedge e_{j_k})
        =
        \sum_{r=1}^k\sum_{p=1}^n
        H(p,j_r)\,e_{j_1}\wedge\cdots\wedge e_p\wedge\cdots\wedge e_{j_k}.
    \]
    Thus an application of $H^{(k)}$ can replace at most one element of $J$, and therefore
    \[
        K(I,J)=0
    \]
    whenever $I$ and $J$ differ in more than one element.

    If $I=J$, only the terms with $p=i_r$ contribute, and hence
    \[
        K(I,I)
        =
        \frac1k\sum_{r=1}^k H(i_r,i_r).
    \]
    If $I$ and $J$ differ in exactly one element, say
    \[
        I\setminus J=\{i_a\},
        \qquad
        J\setminus I=\{j_b\},
    \]
    then the unique contributing term replaces $e_{j_b}$ by $e_{i_a}$.
    Reordering the resulting wedge into increasing order contributes the sign $(-1)^{a+b}$, so
    \[
        K(I,J)
        =
        \frac{(-1)^{a+b}}{k}H(i_a,j_b).
    \]

    Finally,
    \[
        H(i,j) = \sum_{\ell=1}^n \overline{A(\ell,i)}A(\ell,j).
    \]
    Since $k=O(1)$, the $k$-subsets $I,J$ can be stored using $O(\log n)$ space, and every entry of $H$, and hence every entry of $K$, can be computed using logarithmic workspace and standard logspace arithmetic. Therefore $A\mapsto K$ is a logspace many-one reduction.
\end{proof}

\subsection{Completeness of well-conditioned singularity}
\label{sec:completeness-of-conditioned-singularity}
In this subsection we prove the natural well-conditioned singularity completeness results for $\mathsf{Q_U MAL_1}, \mathsf{FewQ_U MAL_1}$ and $\mathsf{UQ_U MAL_1}$.
Let us recall the theorem:
\conditionedsingularitycompleteness*
\begin{proof}
    We establish hardness and containment for all three claims simultaneously.

    \textbf{The circuit Hamiltonian.}
    Let $x$ have length $n$, and let $Q_x=U_T\cdots U_1$ be the corresponding verifier, acting on $a+w=O(\log n)$ qubits, with $T=\operatorname{poly}(n)$.
    Set $V_0=I$, $V_t=U_t\cdots U_1$.
    Let $\Pi_{\mathrm{acc}}=\ket{1}\bra{1}_1 \otimes I$ project onto the first auxiliary qubit being in state $\ket{1}$, i.e. the case where the circuit accepts, and put $\Pi_{\mathrm{rej}}=I-\Pi_{\mathrm{acc}} = \ket{0}\bra{0}_1\otimes I$.
    The acceptance operator is
    \[
        E_x=(\bra{0^a} \otimes I)V_T^*\Pi_{\mathrm{acc}}V_T(\ket{0^a}\otimes I).
    \]
    Thus $\langle\psi|E_x|\psi\rangle$ is the acceptance probability of the circuit given the witness state $\ket{\psi}$.
    In particular, the perfectly accepting witness subspace is
    \[
        W_x=\ker(I-E_x)=\ker(\Pi_{\mathrm{rej}}V_T(\ket{0^a}\otimes I)).
    \]

    We use the clock construction underlying \cite[Lemma~35]{FeffermanRemscrim2021}, originating in \cite{KitaevShenVyalyi2002}.
    On $\mathbb{C}^{2^{a+w}}\otimes\mathbb{C}^{T+1}$, define
    \[
        \begin{aligned}
            H_{\mathrm{prop}}
            &=\frac12\sum_{t=1}^{T}
            \Bigl(
            I\otimes(\ket{t}\bra{t}+\ket{t-1}\bra{t-1})
            -U_t\otimes\ket{t}\bra{t-1}
            -U_t^*\otimes\ket{t-1}\bra{t}
            \Bigr),\\
            H_{\mathrm{in}}
            &=\left(\sum_{j=1}^{a}
            \ket{1}\bra{1}_j\otimes I_{\mathrm{rest}}\right)
            \otimes\ket{0}\bra{0},\\
            H_{\mathrm{out}}
            &=\Pi_{\mathrm{rej}}\otimes\ket{T}\bra{T},\\
            H_x&=H_{\mathrm{prop}}+H_{\mathrm{in}}+H_{\mathrm{out}}.
        \end{aligned}
    \]
    Here $\ket{1}\bra{1}_j$ acts on the $j$th auxiliary qubit.
    Each summand in $H_x$ is positive semi-definite, and one easily finds that,
    \[
        \|H_{\mathrm{prop}}\|_{\mathrm{op}}\leq2,
        \qquad
        \|H_{\mathrm{in}}\|_{\mathrm{op}}\leq a,
        \qquad
        \|H_{\mathrm{out}}\|_{\mathrm{op}}\leq 1,
        \qquad
        \text{and thus}
        \qquad
        \|H_x\|_{\mathrm{op}}\leq a+3.
    \]
    As a hard instance, we output
    \[
        A_x=\frac{H_x}{a+3}.
    \]
    This is a positive semidefinite $D\times D$ matrix, with $D=2^{a+w}(T+1)$ and operator norm $\|A_x\|_\op \leq 1$.
    Its entries are computable in deterministic logspace, since they involve only clock indices, auxiliary-bit tests, and entries of a single gate $U_t$.

    \textbf{Preservation of witness dimension:}
    Define the history-state isometry
    \[
        F_x\ket{\psi}
        =\frac{1}{\sqrt{T+1}} \sum_{t=0}^{T} V_t\ket{0^a}\ket{\psi}\otimes\ket{t}.
    \]
    For a vector $\sum_{t=0}^T\ket{y_t}\otimes\ket{t}$, zero propagation energy is equivalent to $\ket{y_t}=U_t\ket{y_{t-1}}$ for every $t$.
    Zero initialization energy requires $\ket{y_0} = \ket{0^a} \ket{\psi}$ for some $\ket{\psi} \in \mathbb C^{2^{w}}$, and zero output energy requires $\Pi_{\mathrm{rej}}\ket{y_T}=0$.
    Since all three terms are positive semi-definite, this proves
    \[
        \ker(A_x)=\ker(H_x)=F_xW_x,
        \qquad
        \dim\ker(A_x)=\dim W_x.
    \]
    Thus the construction preserves the dimension of the perfectly accepting witness subspace exactly.

    \textbf{The spectral gap analysis:}
    Assume that, for some $0<\gamma\leq1$,
    \[
        \Pr[Q_x\text{ accepts }\ket{\psi}]\leq1-\gamma
        \quad\text{for every normalized }\ket{\psi}\in W_x^\perp.
    \]
    We claim that
    \begin{equation}
        \label{eq:eigenvalue-lower-bound-ax}
        \lambda_{\min}^+(A_x)
        \geq \frac{\gamma}{24(a+3)(T+1)^2}.
    \end{equation}
    Since $A_x=H_x/(a+3)$, we have $\lambda_{\min}^+(A_x)=\lambda_{\min}^+(H_x)/(a+3)$.
    It therefore suffices to prove the corresponding lower bound for $H_x$, that is $\lambda_{\min}^+(H_x)\geq \frac{\gamma}{24(T+1)^2}$.

    To this end, fix an arbitrary vector
    \[
        \ket{y}=\sum_{t=0}^T\ket{y_t}\otimes\ket{t}\perp\ker(H_x),
        \qquad
        \ket{\psi_0}:=(\bra{0^a}\otimes I)\ket{y_0}.
    \]
    The vectors $\ket{y_t}$ and $\ket{\psi_0}$ need not be normalized.
    Writing $\Pi_j$ for the projector onto auxiliary qubit $j$ being $1$, the energy is
    \begin{equation}
        \label{eq:eta}
        \eta := \bra{y}H_x\ket{y}
        = \frac12\sum_{t=1}^T\norm{\ket{y_t}-U_t\ket{y_{t-1}}}^2
        + \sum_{j=1}^a\norm{\Pi_j\ket{y_0}}^2
        + \norm{\Pi_{\mathrm{rej}}\ket{y_T}}^2.
    \end{equation}

    \begin{claim}\label{claim:three-eta-inequalities}
        We have
        \begin{enumerate}[label=(\roman*),nosep]
            \item $\norm{V_t^*\ket{y_t}-\ket{y_0}}^2\leq2t\eta$ for every $0\leq t\leq T$,
            \item $\norm{\ket{y_0}-\ket{0^a}\ket{\psi_0}}^2\leq\eta$, and
            \item $\norm{\Pi_{\mathrm{rej}}\ket{y_T}}^2\leq\eta$.
        \end{enumerate}
    \end{claim}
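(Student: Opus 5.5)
The plan is to read all three inequalities directly off the energy decomposition \eqref{eq:eta}, using that each of its three groups of terms is nonnegative and hence individually bounded by $\eta$.

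\textbf{Part (iii).} This is immediate, since $\norm{\Pi_{\mathrm{rej}}\ket{y_T}}^2$ is one of the nonnegative summands of $\eta$.

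\textbf{Part (ii).} Write $\ket{y_0}=(\ket{0^a}\bra{0^a}\otimes I)\ket{y_0}+(I-\ket{0^a}\bra{0^a}\otimes I)\ket{y_0}$. The first term is exactly $\ket{0^a}\ket{\psi_0}$. The second term is the component of $\ket{y_0}$ on which at least one auxiliary qubit equals $1$. Its projector is dominated by $\sum_{j=1}^a\Pi_j$, because the $\Pi_j$ are commuting projectors diagonal in the computational basis, and on every basis state with some auxiliary bit equal to $1$ the sum is at least $1$. Hence
\[
\norm{\ket{y_0}-\ket{0^a}\ket{\psi_0}}^2\leq\sum_{j}\norm{\Pi_j\ket{y_0}}^2\leq\eta.
\]

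\textbf{Part (i).} Set $e_s:=\ket{y_s}-U_s\ket{y_{s-1}}$, so that $\frac12\sum_{s}\norm{e_s}^2\leq\eta$. Telescoping gives
\[
V_t^*\ket{y_t}-\ket{y_0}=\sum_{s=1}^t\bigl(V_s^*\ket{y_s}-V_{s-1}^*\ket{y_{s-1}}\bigr)=\sum_{s=1}^t V_s^*e_s,
\]
using $V_s=U_sV_{s-1}$. Since each $V_s$ is unitary, the triangle inequality followed by Cauchy--Schwarz yields
\[
\norm{V_t^*\ket{y_t}-\ket{y_0}}^2\leq\Bigl(\sum_{s=1}^t\norm{e_s}\Bigr)^2\leq t\sum_{s=1}^t\norm{e_s}^2\leq 2t\eta.
\]
The case $t=0$ is trivial.

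None of the steps is a real obstacle. The only point needing care is the operator domination $I-\ket{0^a}\bra{0^a}\preceq\sum_j\Pi_j$ in part (ii), which follows by checking it on computational basis states.
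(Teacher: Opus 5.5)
Your proposal is correct and follows the paper's proof essentially line for line. Parts (ii) and (iii) are read off the energy decomposition \eqref{eq:eta}, with (ii) using the domination $\sum_j\Pi_j\succeq I-\ket{0^a}\bra{0^a}\otimes I$, and part (i) is the same telescoping identity $V_t^*\ket{y_t}-\ket{y_0}=\sum_{s=1}^t V_s^*\bigl(\ket{y_s}-U_s\ket{y_{s-1}}\bigr)$ followed by Cauchy--Schwarz; the only addition is that you justify the operator domination on computational basis states, which the paper leaves implicit.
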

    \begin{proof}
        Parts $(ii)$ and $(iii)$ follow from \cref{eq:eta}, using
        $\sum_j\Pi_j\succeq I-\ket{0^a}\bra{0^a}\otimes I$ for part $(ii)$.
        For part $(i)$, the case $t=0$ is immediate. For $t\geq1$, telescoping and Cauchy-Schwarz give
        \begin{align*}
            \norm{V_t^*\ket{y_t}-\ket{y_0}}^2
            &=\norm{\sum_{r=1}^t V_r^*\bigl(\ket{y_r}-U_r\ket{y_{r-1}}\bigr)}^2
            \leq t\sum_{r=1}^t\norm{\ket{y_r}-U_r\ket{y_{r-1}}}^2
            \leq2t\eta.\qedhere
        \end{align*}
    \end{proof}

    We next bound the rejection amplitude of $\ket{\psi_0}$. Decompose it as
    \begin{align*}
        \Pi_{\mathrm{rej}}V_T\ket{0^a}\ket{\psi_0}
        &=\Pi_{\mathrm{rej}}V_T \bigl(\ket{0^a}\ket{\psi_0}-\ket{y_0}\bigr)
        + \Pi_{\mathrm{rej}}V_T \bigl(\ket{y_0}-V_T^*\ket{y_T}\bigr)
        + \Pi_{\mathrm{rej}}\ket{y_T}.
    \end{align*}
    Since $\Pi_{\mathrm{rej}}$ is a contraction and $V_T$ is unitary, Cauchy-Schwarz and \cref{claim:three-eta-inequalities} imply
    \begin{align*}
        \norm{\Pi_{\mathrm{rej}}V_T\ket{0^a}\ket{\psi_0}}^2
        &\leq3\Bigl(\norm{\ket{0^a}\ket{\psi_0}-\ket{y_0}}^2
        + \norm{\ket{y_0}-V_T^*\ket{y_T}}^2
        + \norm{\Pi_{\mathrm{rej}}\ket{y_T}}^2\Bigr)\\
        &\leq 3(\eta+2T\eta+\eta)
        = 6(T+1)\eta.
    \end{align*}
    The rejection map $\Pi_{\mathrm{rej}}V_T(\ket{0^a}\otimes I)$ vanishes on $W_x$.
    Applying the assumed rejection gap to $\Pi_{W_x^\perp}\ket{\psi_0}$, therefore gives
    \begin{equation}
        \label{eq:projection-psi0}
        \gamma\norm{\Pi_{W_x^\perp}\ket{\psi_0}}^2
        \leq \norm{\Pi_{\mathrm{rej}}V_T\ket{0^a}\ket{\psi_0}}^2
        \leq 6(T+1)\eta.
    \end{equation}

    Now define the corresponding zero-energy history
    \[
        \ket{h}:=\sum_{t=0}^T V_t\ket{0^a}\Pi_{W_x}\ket{\psi_0}\otimes\ket{t}
        = \sqrt{T+1}\,F_x\Pi_{W_x}\ket{\psi_0}.
    \]
    Since $\ker(H_x) = F_x W_x$, $\ket{h}\in\ker(H_x)$.
    Using clock orthogonality, unitarity, and Cauchy--Schwarz, we obtain
    \begin{align*}
        \norm{\ket{y}-\ket{h}}^2
        &= \sum_{t=0}^T\norm{V_t^*\ket{y_t} -\ket{0^a}\Pi_{W_x}\ket{\psi_0}}^2\\
        &\leq 3\sum_{t=0}^T\Bigl(\norm{V_t^*\ket{y_t}-\ket{y_0}}^2
        + \norm{\ket{y_0}-\ket{0^a}\ket{\psi_0}}^2
        + \norm{\Pi_{W_x^\perp}\ket{\psi_0}}^2\Bigr)\\
        &\leq 3\sum_{t=0}^T \left(2t+1+\frac{6(T+1)}{\gamma}\right)\eta
        \leq\frac{24(T+1)^2}{\gamma}\eta,
    \end{align*}
    where the last line uses \cref{claim:three-eta-inequalities,eq:projection-psi0} and $\gamma\leq1$.
    Finally, $\ket{y}\perp\ket{h}$, so
    \[
        \norm{\ket{y}}^2\leq\norm{\ket{y}-\ket{h}}^2
        \leq \frac{24(T+1)^2}{\gamma}\bra{y}H_x\ket{y}.
    \]
    Taking the minimum Rayleigh quotient over nonzero vectors orthogonal to
    $\ker(H_x)$ yields
    \[
        \lambda_{\min}^+(H_x)
        = \min_{0\neq\ket{y}\perp\ker(H_x)} \frac{\bra{y}H_x\ket{y}}{\norm{\ket{y}}^2}
        \geq \frac{\gamma}{24(T+1)^2}.
    \]
    Dividing by $a+3$ proves \cref{eq:eigenvalue-lower-bound-ax}.

    On no-instances of all three classes, every witness is accepted with probability at most $1/3$ and $W_x=\{0\}$, so the assumption holds with $\gamma=2/3$. The same is true on yes-instances of $\mathsf{FewQ_U MAL}_1$ and $\mathsf{UQ_U MAL}_1$, by their acceptance gap on $W_x^\perp$. In these cases,
    \[
        \lambda_{\min}^+(A_x)
        \geq \frac{1}{36(a+3)(T+1)^2}
        \geq \frac{1}{\poly(n)}.
    \]
    For ordinary $\mathsf{Q_U MAL}_1$, the assumption need not hold on yes-instances, only the existence of a zero eigenvalue is required.
    This establishes the three hardness claims.

    \textbf{Containment:}
    We use the one-sided kernel-verification procedure invoked in \cite[Lemma~35]{FeffermanRemscrim2021} (see also \cite[Lemma~19 and Proposition~32]{fefferman_lin}).
    For a Hermitian contraction $H$ and $\delta\geq1/\poly(n)$, there is a deterministic logspace-uniform unitary verifier, using $O(\log n)$ qubits and $T=\poly(n)$ gates, with the following properties:
    \begin{enumerate}[label=(\roman*)]
        \item every normalized vector in $\ker(H)$ is accepted with probability $1$, and
        \item every normalized vector in the span of the eigenvectors of $H$ with eigenvalues satisfying $|\lambda|\geq\delta$ is accepted with probability at most $1/3$.
    \end{enumerate}
    When $A$ is the input matrix, apply this procedure to $H=A^*A$ and $\delta=\kappa(n)^{-2}$.
    Since $\ker(H)=\ker(A)$ and the eigenvalues of $H$ are the squared singular values of $A$, this gives perfect completeness and soundness for standard $\pcSINGULAR$ (as is already described in \cite[Lemma~35]{FeffermanRemscrim2021}).
    For $\pcFewSINGULAR$ and $\pcUSINGULAR$, the positive singular value gap additionally ensures acceptance probability at most $1/3$ on $\ker(A)^\perp$ on yes-instances.
    Their kernel dimensions therefore give the required perfectly accepting subspaces, proving containment in $\mathsf{FewQ_U MAL}_1$ and $\mathsf{UQ_U MAL}_1$, respectively.
\end{proof}
\end{document}